\documentclass[12pt]{article} 

\usepackage{natbib,todonotes,amsmath,amssymb, array, boldline, graphicx, epstopdf, grffile, amsthm, ulem, url, xurl, hyperref, float, comment, setspace, lineno}

\usepackage[letterpaper, margin=1.0in]{geometry}
\usepackage{authblk} 

\setcitestyle{authoryear,open={(},close={)}}

\newcommand{\dint}{\text{d}}

\newcommand{\abs}[1]{\left| #1 \right|}

\newtheorem{theorem}{Theorem}

\newtheorem{lemma}[theorem]{Lemma}
\newtheorem{remark}{Remark}

\begin{document}

\title{ARMA approximation of a Non-separable Spatio-Temporal Model with Fractional Smoothnesses in Space and Time}

\date{June, 2026}

\author{S. Knutsen Furset, Geir-Arne Fuglstad, and Espen R.
Jakobsen}

\affil{Department of Mathematical Sciences, Norwegian University of Science and Technology, Norway}

\maketitle





\begin{abstract}The Matérn covariance model is ubiquitous
in spatial modelling, but there is no default choice
for spatio-temporal modelling. In this paper, we consider
the recently proposed ``diffusion-based'' extension of the spatial Matérn covariance model to a spatio-temporal non-separable covariance model that allows fractional smoothnesses in space and in time. The model is described in terms of a  space-time fractional stochastic partial differential equation, but currently proposed computational approaches have strong restrictions on the possible smoothnesses in time. We propose a discretization method based on rational approximations in time to handle arbitrary smoothnesses, which leads
to a vector autoregressive moving average process (VARMA). We prove that the covariance function of the approximation converges pointwise, determine explicit convergence rates as a function of spatial and temporal resolutions and the accuracy of the rational approximation, and conduct numerical verification to demonstrate small pointwise error for low orders of the VARMA process. Through a simulation study, we demonstrate that the parameters can be estimated back and that correctly specifying the temporal smoothness is especially important for forecasting. The approach is illustrated for three months of daily mean temperatures in mainland France.

\vspace{0.4cm}
\noindent\textbf{Keywords:} Stochastic partial differential equations, Spectral method, Rational approximation, ARMA process, Kalman filter, Daily mean temperature.
\end{abstract}

\maketitle



\section{Introduction}\label{sec1}
Gaussian random fields (GRFs) are a key tool for modelling spatio-temporal processes across a range of disciplines \citep{gelfand2010handbook,cressie2011statistics}.
While there exists a wide variety of possible covariance models, most spatio-temporal GRFs used in practice are separable, i.e. the covariance function factors into a spatial covariance function and a temporal covariance function \citep{porcu202130}. This is, in part, motivated by  computational convenience and the fact that popular tools such R-INLA include easy-to-use separable models \citep{bakka2018spatial}. However, separable covariance structures are considered physically unrealistic as they do not allow space and time to interact, e.g., through diffusion and advection.

It is common to separate between descriptive and dynamic models \citep{cressie2011statistics}. In the former, there has been an extensive effort to develop new families of non-separable covariance functions \citep{cressie1999classes,gneiting2002nonseparable, stein2005space,rodrigues2010class,fonseca2011general}. In the latter, covariance structure arises from ``physical'' models defined through, e.g., partial differential equations or integrodifference equations \citep{wikle2010general,wikle2015modern}. This has the advantage that one does not need to show the validity of the resulting family of covariance functions, and a particularly appealing approach is to use stochastic partial differential equations (SPDEs) to build interpretable models  \citep{jones1997models,roques2022spatial}. Simple components such as diffusion and advection can give rise to complex covariance structure through stochastic advection-diffusion equations with spatially varying diffusion and advection \citep{lindgren2011,sigrist2015stochastic,liu2022statistical,clarotto2024spde,berild2024non}. 

In this paper, we consider the non-separable ``diffusion-based'' covariance model recently proposed by \citet{LindgrenBakka}. The model builds on ideas in \citet{krainski2018statistical}, and, for a dimension $d\in\mathbb{N}$, a GRF $u$ is specified as a solution of the SPDE
\begin{equation}
    C^{-1/2}L^{\beta/2}(\partial_t+r^{-1}L^\alpha)^\gamma u(\boldsymbol{s},t) = \mathcal{W}(\boldsymbol{s}, t), \quad \boldsymbol{s}\in\mathbb{R}^d, t\in\mathbb{R},\label{eq:introSPDE}
\end{equation}
where $\partial_t$ is the temporal derivative, $L = (\kappa^2 - \Delta)$ is a spatial operator involving the Laplacian $\Delta$, the constants $\kappa, C, r>0$, the exponents $\alpha, \beta, \gamma>0$, and $\mathcal{W}$ is spatio-temporal Gaussian white noise. We discuss initial conditions, boundary conditions and further restrictions on the exponents in Section \ref{sec:CovModel}. \citet{LindgrenBakka} showed that this SPDE gives rise to an extension of the spatial Matérn model that can be parametrized through spatial and temporal smoothnesses, spatial and temporal ranges, marginal variance, and a non-separability parameter. The marginal spatial behaviour is a Matérn covariance function, and the marginal temporal covariance function is Matérn when $\alpha = 0$.

\citet{kirchner2022} gives meaning to the SPDE in a broader setting on a spatial domain $\mathcal{D}\subset\mathbb{R}^d$ and temporal domain $\mathcal{T}=[0, T]$ for a $T > 0$. \citet[Corollaries 4.2, 4.3, and 5.2]{kirchner2022} show that corresponding results hold in this setting such that the marginal spatial covariance structure corresponds to a Whittle-Matérn GRF with boundary conditions, the marginal temporal covariance structure is Matérn for $\alpha = 0$ (up to the initial condition), and the expressions for smoothnesses given in \citet{LindgrenBakka} still holds. Note that having an initial condition instead of considering SPDE \eqref{eq:introSPDE} defined for all $t\in\mathbb{R}$ means that a warm-up period is required to be sufficiently close to the time-stationary behaviour of the spatio-temporal process,
and, in practice, we want to use the behaviour after the warm-up period to model spatio-temporal phenomena.

However, the discretization of SPDE \eqref{eq:introSPDE} proposed by \citet{LindgrenBakka} only allows a limited selection of spatial and temporal smoothnesses since one must have $\gamma, 2\alpha, \beta \in \mathbb{N}_0$. Their approach can be extended to arbitrary spatial smoothnesses, which requires general $\alpha, \beta>0$, through rational approximations of fractional powers of $L$  \citep{Bolin2019,bolin2024covariance}, but extending to general temporal smoothnesses, by approximating the space-time fractional operator $(\partial_t+r^{-1}L^\alpha)^\gamma$, requires a new approach. \citet{Furset2023} recently proposed an approach to handle fractional $\gamma$, but their approach is more suitable for simulation than inference because it is computationally too expensive for large datasets.

In this paper, we develop an approach to handle inference for any choice of $\gamma,\alpha,\beta>0$, i.e. any smoothnesses in space and time. 
The approach is spectral in space and uses the eigenfunctions of $-\Delta$. This essentially allows us to break the problem down to simpler univariate temporal SPDEs where the solutions must be approximated. While the presentation in the paper focuses on a rectangular domain, the approach can be used on more general domains as long as exact or numerical approximations of the eigenfunctions are known. We show that a rational approximation involving the shift operator gives rise to autoregressive moving average (ARMA) processes, which, in turn, give rise to an approximation in the form of a vector ARMA for the spatio-temporal processes. We prove that the covariance function of the approximation converges pointwise to the true covariance function as the accuracy of the rational approximation increases, the number of spatial basis functions go to infinity, and the temporal step length goes to zero. We also determine explicit convergence rates as a function of the number of spatial basis functions and the temporal step length. The theoretical results are supported by numerical verification of the pointwise convergence of the covariance function.

The vector ARMA process involves an augmented state vector and the likelihood computations can be done through a Kalman filter. We consider two prediction tasks: 1) predicting the current state (filtering), and 2) predicting future behaviour (forecasting). The proposed algorithm has computational complexity dominated by the number of spatial basis functions squared times the number of observations. 

Motivated by an illustrative application to daily mean temperature in mainland France, we construct a simulation study to evaluate the ability to accurately estimate the parameters and to predict. In particular, we focus on the ability to accurately determine the degree of non-separability. The evaluation of parameter estimation focuses on bias and variability between realizations from the chosen scenarios, and the predictions of the underlying true process are evaluated using root mean square error (RMSE) and continuous ranked probability score (CRPS). In the application to daily mean temperature, parameter estimates, predictions and prediction standard deviations based on the full dataset behave well. However, in the cross-validation with hold-out areas, we see a sensitivity to spatial resolution that must be investigated in future work.

The non-separable covariance model and associated SPDE is introduced in Section \ref{sec:CovModel}. Then, in Section \ref{sec:disc}, we motivate the proposed discretization for fractional $\gamma$ and give the full spatio-temporal discretization. This is followed, in Section \ref{sec:ErrorNumVer},  by theorems giving point-wise error bounds and convergence rates on the covariance function of the approximation, and a numerical investigation of point-wise error as a function of temporal smoothness. In 
Section \ref{sec:ModelAndInf}, we present the hierarchical spatio-temporal model and inference procedure, and, in Section \ref{sec:SimStudy}, we evaluate parameter estimation and prediction through a simulation study. Then we illustrate the applicability of the model to daily mean precipitation in Section \ref{sec:application}. The paper ends with discussion in Section \ref{sec:Discussion}. Proofs are given in Sections \ref{suppl:proof-thm-1}-\ref{suppl:proof-thm-3} in the Supplementary Materials. Additional details on the sequential Kalman filter, and supporting tables for the cross-validations performed in the illustrative example can be found in Sections \ref{suppl:seq-Kalman-filter} and \ref{suppl:case-study-parameter-estimate-tables}-\ref{suppl:case-study-cv-prediction-score-tables}, respectively, in the Supplementary Materials. Supporting code is available on Zenodo, see the \hyperref[sec:data-availability]{Data availability statement}.

\section{Covariance model}\label{sec:CovModel}
Consider a bounded spatial domain $\mathcal{D}\subset\mathbb{R}^2$ and the temporal domain $\mathcal{T} = (0, T]$ for a fixed terminal time $T>0$. We restrict the SPDE in Equation \eqref{eq:introSPDE}, as in \citet{kirchner2022}, and define a GRF $u$ on $\mathcal{D}\times \mathcal{T}$ as the solution of
\begin{align}
    \left(\partial_t + r^{-1} (\kappa^2 - \Delta)^\alpha \right)^\gamma u(\boldsymbol{s},t) &= \partial_t W_Q(\boldsymbol{s},t),\label{eq:SPDE:frac}\\
     \frac{r^\gamma}{\sqrt{C} \sigma}(\kappa^2 - \Delta)^{\beta/2}W_Q(\boldsymbol{s},t) &= W_I(\boldsymbol{s},t),  \quad \boldsymbol{s}\in \mathcal{D}, \quad t\in \mathcal{T},\label{eq:SPDE:noise}
\end{align}
where $r, \kappa, \alpha, \gamma, \sigma, \beta > 0$ are parameters, $C>0$ is chosen such that $\sigma$ represents the marginal standard deviation, and we have a zero Neumann boundary condition 
in space. Note that since $\gamma$ may be non-integer, we need to consider an initial condition $u(\cdot, t) \equiv 0$ for $t\leq 0$ due to the non-Markovian nature of the SPDE; see \citet{kirchner2022} for details. The solution $u$ exists in a mean-square sense if $\beta + \alpha (2 \gamma - 1) - 1 > 0$ and $\gamma>\tfrac12$, and if $\mathcal{D}$ is a rectangle, then $u$ has pointwise bounded marginal variance under the same assumption. The generalized GRF $W_I$ is a cylindrical Wiener process, which can be heuristically understood in terms of (i) $W_I(\cdot, 0) \equiv 0$, (ii) independent increments, and (iii) that each increment $W_I(\cdot,t_2)-W_I(\cdot, t_1)$, $t_1\leq t_2$, behaves as Gaussian white noise scaled by $\sqrt{t_2-t_1}$.\footnote{$W_I(\cdot,t_2)-W_I(\cdot, t_1)\overset{\mathcal D}{=}\sum_{k = 1}^\infty \sqrt{t_2-t_1}\Delta w_{k}\, e_k(\boldsymbol{s})$ where $\{\Delta w_{k}\}_k$ are independent $N(0,1)$ random variables and $\{e_k\}_k$ is any orthonormal basis for $L^2(\mathcal D)$.}
Equation \eqref{eq:SPDE:noise} describes spatial smoothing of the cylindrical Wiener process and gives rise to a Q-Wiener process with covariance operator $Q = C\sigma^2 r^{-2\gamma}(\kappa^2-\Delta)^{-\beta}$. Note that both $W_I$ and $W_Q$ might be generalized GRFs without a meaningful point-wise definition.

Assume that the negative Laplacian $-\Delta$ on $\mathcal D$ 
with zero Neumann boundary conditions admits an orthonormal eigenbasis $\{f_k\}$ with associated eigenvalues $\{\xi_k\}$ (see e.g. \cite{davies1995}). Then $\{f_k\}$ is an eigenbasis also for $L_1 = r^{-1}(\kappa^2-\Delta)^\alpha$ and $Q$ with corresponding eigenvalues $\{\mu_k\}$ and $\{\lambda_k\}$ given in terms of $\xi_k$ as
\begin{align}\label{m-l}
 \mu_k = r^{-1} (\kappa^2 + \xi_k)^{\alpha}\qquad\text{and}\qquad \lambda_k = C \sigma^2 r^{- 2\gamma} (\kappa^2 + \xi_k)^{-\beta}.   
\end{align}
The solutions $u$ can then be expressed as an expansion in $\{f_k\}$,
\begin{equation} \label{eq:spectral-representation}
u(\boldsymbol{s},t) = \sum_{k = 1}^\infty c_k(t)\, f_k(\boldsymbol{s}), \quad \boldsymbol{s}\in\mathcal{D}, \quad t\in\mathcal{T},
\end{equation}
 where $\{c_k\}$ are independent Gaussian temporal processes on $\mathcal{T}$. 
%
\citet{kirchner2022} directly specified the solution $u$ through a convolution involving a semi-group. However, following \citet{Furset2023}, one can see that this gives rise to the following coefficients 
in Equation \eqref{eq:spectral-representation}, 
\begin{equation}
    c_k(t) = \frac{\sqrt{\lambda_k}}{\Gamma(\gamma)} \int_0^t \mathrm{e}^{-\mu_k (t-v)} (t-v)^{\gamma - 1}\text{d} w_k(v), \quad t \geq 0 \, .\label{eq:cov:time}
\end{equation}
where $\Gamma$ is the gamma function, and $\{w_k\}$ are independent Brownian motions. This means that the temporal processes $c_k$ can be seen as solutions of stochastic differential equations (SDEs)
\begin{align}
\begin{cases}
    \left(\partial_t + \mu_{k}\right)^\gamma c_{k}(t) = \sqrt{\lambda_{k}} \partial_t w_{k}(t), & t\in \mathcal{T}, \\[0.1cm]
    \hspace{1.46cm} c_k(t) = 0, & t \leq 0.
    \end{cases}
    \label{eq:SPDE:SDE}
\end{align}
These time-fractional SDEs are the main obstacle in the discretization, and the finite element method discretization presented in \citet{LindgrenBakka} only handles the case where $\gamma\in\mathbb{N}$ and the fractional space-time operator is avoided.

The aim of this paper is to consider the limiting behaviour of $u$ when $T\rightarrow \infty$ and $t \gg 0$. More precisely, we aim to construct a time-stationary spatio-temporal process with interpretable parameters. A consequence of \citet[Corollary 4.2]{kirchner2022}, is that each $c_k$ converges to a stationary process. By the Itô isometry and (\ref{eq:cov:time}), the covariance function of this stationary process is given by
\begin{equation} \label{eq:time-process-covariance-function}
C_{c_k}(h) := \lim_{t \rightarrow \infty} \mathrm{E}[c_k(t + h) c_k(t)] = \frac{\lambda_k \mathrm{e}^{-\mu_k |h|}}{(2 \mu_k)^{2 \gamma - 1} \Gamma(\gamma)^2} \int_0^\infty (u + 2 \mu_k |h|)^{\gamma - 1} u^{\gamma - 1} \mathrm{e}^{-u} \, \mathrm{d}u, \, \quad h\in \mathbb{R}.
\end{equation}
This means that the covariance function of $u$ after reaching the time-stationary behaviour can be written as
\begin{equation} \label{eq:process-covariance-function}
C_{u}(\boldsymbol{s}_1, \boldsymbol{s}_2, h) := \lim_{t \rightarrow \infty} \mathrm{E}[u(\boldsymbol{s}_1, t + h)\, u(\boldsymbol{s}_2, t)] = \sum_{k=1}^\infty  C_{c_k}(h)\, f_k(\boldsymbol{s}_1)\,f_k(\boldsymbol{s}_2), \quad  \boldsymbol{s}_1, \boldsymbol{s}_2\in\mathcal{D}, \ h\in \mathbb{R},
\end{equation}
and we construct a discrete-time process that approximates this covariance function and not
a general solution to the original SPDE.

In the case $\mathcal{D} = \mathbb{R}^d$ and $\mathcal{T} = \mathbb{R}$, \citet{LindgrenBakka} showed that the parameters could be reparametrised in terms of six parameters with intuitive interpretations: spatial smoothness $\nu_\mathrm{s}$, temporal smoothness $\nu_\mathrm{t}$, spatial range $r_\mathrm{s}$, temporal range $r_\mathrm{t}$, non-separability $\beta_\mathrm{s}$, and marginal standard deviation $\sigma$. Let $\beta^* = \frac{\nu_\mathrm{s}}{\nu_\mathrm{s} + \frac{d}{2}}$, then the reparametrisation is an invertible representation of the original parameters
given by
\begin{equation} \label{eq:reparametrisation}
 \gamma = \nu_\mathrm{t} \max \left(1, \frac{\beta_\mathrm{s}}{\beta^*} \right) + \frac{1}{2}, \quad \alpha = \frac{\nu_\mathrm{s}}{2\nu_\mathrm{t}} \min \left(1, \frac{\beta_\mathrm{s}}{\beta^*} \right), \quad
\beta = \frac{1 - \beta_\mathrm{s}}{\beta^*} \nu_\mathrm{s}, \quad 
\kappa = \frac{\sqrt{8 \nu_\mathrm{s}}}{r_\mathrm{s}}, \quad r = \frac{r_\mathrm{t} \kappa^{2 \alpha}}{\sqrt{8(\gamma - \tfrac{1}{2})}} \, ,
\end{equation}
%
or conversely
\begin{align*}
\nu_\mathrm{s} = \beta + (2\gamma - 1)\alpha - \tfrac{d}{2}&, \quad
\nu_\mathrm{t} = \gamma - \tfrac{1}{2} + \tfrac{1}{\alpha}\min \left(\beta - \tfrac{d}{2}, 0 \right), \quad
\beta_\mathrm{s} = \frac{(2\gamma - 1)\alpha}{\beta + (2\gamma - 1)\alpha},\\
&r_\mathrm{s} = \kappa^{-1} \sqrt{8 \nu_\mathrm{s}}, \quad
r_\mathrm{t} = r \kappa^{-2 \alpha} \sqrt{8 \left(\gamma - \tfrac{1}{2}\right)} \, .
\end{align*}
The smoothness parameters $\nu_\mathrm{t}$ and $\nu_\mathrm{s}$ quantify smoothness of the process, in the sense that $u$ has $\lceil \nu_\mathrm{t} \rceil-1$ 
temporal and 
$\lceil \nu_\mathrm{s} \rceil-1$ spatial derivatives in the mean square sense, with the fractional part quantifying the remaining Hölder-continuity after differentiation. 
The range parameters $r_\mathrm{s}$ and $r_\mathrm{t}$ quantify the distance where the correlation has dropped to approximately $0.13$ for respectively space and time. The parameter $\beta_\mathrm{s}$ quantify the non-separability of the model, and can be interpreted as the fraction of spatial regularity stemming from the fractional time-space Equation (\ref{eq:SPDE:frac}), as opposed to space only regularisation given by Equation \eqref{eq:SPDE:noise}.
For example, in the case where $\beta_\mathrm{s} = 0$, all the spatial regularity of $u$ comes from the noise-term in Equation (\ref{eq:SPDE:frac}), and the model is separable, i.e. the covariance function of $u$ can be factored into a spatial and a temporal covariance function. $\beta_\mathrm{s} > 0$ gives a non-separable model, with $\beta_\mathrm{s} = 1.0$ giving a ``fully non-separable'' model, where Equation \eqref{eq:SPDE:frac} is driven by 
space-time white noise, and all the regularity of $u$ stems from the diffusion term. See \citet[Section 3]{LindgrenBakka} for a more in-depth discussion of the parameters. 

\citet{LindgrenBakka} also demonstrated, again in the case $\mathcal{D} = \mathbb{R}^d$ and $\mathcal{T} = \mathbb{R}$, that the spatial covariance function $C_u(\boldsymbol{s}_1, \boldsymbol{s}_2, 0)$, $\boldsymbol{s}_1,\boldsymbol{s}_2\in\mathcal{D}$, is Matérn and that for $\beta_\mathrm{s} = 0$, the temporal covariance function $C_u(\boldsymbol{s}, \boldsymbol{s}, h)$, $h\in\mathbb{R}$, is Matérn for any $\boldsymbol{s}\in\mathcal{D}$. This is their motivation for terming this covariance model as a non-separable spatio-temporal extension of the Matérn covariance function. \citet{kirchner2022} give corresponding results for bounded domains $\mathcal{D}\subset\mathbb{R}^d$ where the spatial covariance function instead arises from a Whittle-Matérn SPDE with boundary conditions.

\section{Discretization method}\label{sec:disc}

\subsection{Spatial discretization}\label{sec:disc:space}

We apply a spectral approximation in space by truncating the eigenbasis representation of the solution $u$ in Equation 
\eqref{eq:spectral-representation}. For $M\in\mathbb N$, we define the (spectral) approximate solution $u_M$ by
\begin{equation} \label{eq:spectral-approximation}
u_M(\boldsymbol{s},t) =\sum_{k = 1}^M c_k(t)\, f_k(\boldsymbol{s}), \quad \boldsymbol{s}\in\mathcal{D}, \quad t\in\mathcal{T}.
\end{equation}
%
Bounds on the error incurred in this approximation is discussed in Section \ref{sec:theory:spatial}. 
\medskip

\noindent\textbf{On the domains $\mathcal D$.} \ To use this method we need practical access to the eigenbasis $\{f_k\}$ (and eigenvalues, see below) of $-\Delta$ on $\mathcal D$ with zero Neumann boundary conditions.
In the numerical examples below we will always use 
rectangular domains where eigenfunction are explicit cosine functions (see next paragraph), but the approach would work equally well on, say $\mathcal{D}=\mathbb{S}^2$, where the eigenfunctions of $-\Delta$ are known to be the spherical harmonics.  In fact our method works in any domain where an eigenbasis for the Neumann Laplacian is known or can be efficiently computed, see e.g. \cite{KuSi84,SuZh17}.\medskip

\noindent\textbf{Rectangular domains $\mathcal D$.} \ On $\mathcal D=(0, A_1)\times (0, A_2)$ for $A_1, A_2>0$, the eigenvalues and corresponding eigenfunctions of $-\Delta$ under zero Neumann boundary conditions are given by
\begin{align}
\xi_{k} &= (i_k^2/A_1^2+j_k^2/A_2^2)\pi^2,\nonumber\\[0.2cm]
\label{eq:rectangle-eigenfunctions}
f_{k}(\boldsymbol{s}) &= \frac{2^{1 - \frac12\delta_{i_k,0} - \frac12\delta_{j_k,0}}}
{\sqrt{A_1 A_2}}\cos(\pi i_k s_1/A_1) \cos(\pi j_k s_2/A_2), \quad \boldsymbol{s}=(s_1,s_2)^\mathrm{T}\in\mathcal{D},
\end{align}
where $\delta_{m,n}$ is the Kronecker-delta, 
and 
 $i_k,j_k\in\mathbb N_0$  
are the associated frequencies in $s_1$-direction and $s_2$-direction, respectively. 
There are several ways to order the frequencies and corresponding eigenvalues. Throughout this paper, we assume an arbitrary ordering 
such that
$\xi_1 \leq \xi_2 \leq \ldots$.
In implementations in later sections, we have used the ordering from the \textsf{R} function \texttt{order}.
%

\subsection{Temporal discretization for one frequency}
\label{sec:disc:time}
\subsubsection{Motivation and heuristic derivation}

Our goal is to approximate the stationary distribution 
of the Gaussian process $c_k(t)$, $t \geq 0$, defined by Equation \eqref{eq:SPDE:SDE}, with stationary covariance function (\ref{eq:time-process-covariance-function}). We introduce a temporal grid $t_n = n \Delta t$, $n \in \mathbb Z$, for step size $\Delta t>0$, and introduce the approximation 
 $c_k^n\approx c_k(t_n)$, $n \in \mathbb N$. 
 
To motivate the discrete scheme we consider first $\gamma = 1$. Then Equation \eqref{eq:SPDE:SDE} defines an Ohrnstein-Uhlenbeck process, which by Equation \eqref{eq:cov:time} and the Itô isometry can be represented in distribution by the recursive formula
$$
    c_k(t_{n}) = \mathrm{e}^{-\mu_k\Delta t}c_k(t_{n-1})+\frac{\sqrt{\lambda_k}}{\Gamma(\gamma)} \int_{t_{n-1}}^{t_{n}} \mathrm{e}^{-\mu_k (t_n-s)} \text{d} w_k(s) \
    { \overset{\mathrm{d}}{=} } \ \mathrm{e}^{-\mu_k\Delta t}c_k(t_{n-1}) + \frac{\sqrt{\lambda_k}}{\sqrt{2\mu_k}}\sqrt{1-\mathrm{e}^{-2\mu_k\Delta t}}\, z_k^n , \quad n\in\mathbb{N},$$
where $z_k^1, z_k^2,\ldots \overset{\text{iid}}{\sim} \mathcal{N}(0,1)$. The stationary distribution of this process is achieved by initialising it in its stationary state, $c_k(0)\sim \mathcal{N}(0, C_{c_k}(0))$.
Since  $\frac{\sqrt{\lambda_k}}{\sqrt{2\mu_k}}\sqrt{1-\mathrm{e}^{-2\mu_k\Delta t}} \approx \sqrt{\lambda_k \Delta t}$ for small $\Delta t$,  a natural approximation $c_k^n$ can be obtained by solving
\begin{align}\label{approx_gamma=1}
    (1-\mathrm{e}^{-\mu_k\Delta t}B)\,c_k^n = 
    { \sqrt{\lambda_k \Delta t} }\, z_k^n, \quad n\in\mathbb{N},
\end{align}
where $B$ is the backshift operator defined by
$Bc_k^n = c_k^{n-1}$, and $c_k^0 \sim\mathcal{N}(0,\lambda_k\Delta t(1-\mathrm{e}^{-2\mu_k\Delta_t})^{-1})$ is the stationary state. This defines an autoregressive process of order 1 (AR(1)). Note that $\frac1{\Delta t}(1-\mathrm{e}^{-\mu_k\Delta t}B)$ is a discrete approximation of $(\partial_t + \mu_k)$, so Equation \eqref{approx_gamma=1} is an approximation of Equation \eqref{eq:SPDE:SDE} when $\gamma=1$. We extend this approximation  inductively to the case of integers $\gamma \geq 2$ by 
\[
    \left[\frac{1}{\Delta t}(1-\mathrm{e}^{-\mu_k\Delta t}B) \right]^\gamma c_k^{n} = \sqrt{\frac{\lambda_k}{\Delta t}}z_k^n, \quad n\in\mathbb{N},
\]
or equivalently
\begin{equation}
    \label{eq:integer-gamma-approximation}
    (1-\mathrm{e}^{-\mu_k\Delta t}B)^\gamma c_k^{n} = \epsilon_k^n, \quad n\in\mathbb{N},
\end{equation} 
where $(c_k^0, \cdots, c_k^{-\gamma+1})^\mathrm{T}$ is initialised in the process's stationary state, i.e., the appropriate Gaussian multivariate distribution, and $\epsilon_k^1,\epsilon_k^2,\ldots \overset{\mathrm{iid}}{\sim}\mathcal{N}(0, \sigma_k^2)$ with variance $\sigma_k^2 = \lambda_k (\Delta t)^{2 \gamma - 1}$. 
In practice, the time-stationary marginal variance $C_{c_k}(0)=\frac{\lambda_k \Gamma(2 \gamma - 1)}{(2 \mu_k)^{2 \gamma - 1} \Gamma(\gamma)^2}$ is known analytically, and we adjust $\sigma_k$ to achieve the desired marginal variance by a post hoc procedure discussed in the next section. 
When $\gamma$ is an integer, Equation \eqref{eq:integer-gamma-approximation} describes an AR($\gamma$) process. For non-integer $\gamma$, one could then imagine an approximation of the form
\[
    (1-\mathrm{e}^{-\mu_k\Delta t}B)^{\gamma-\lfloor\gamma\rfloor}(1-\mathrm{e}^{-\mu_k\Delta t}B)^{\lfloor\gamma\rfloor} c_k^{n} = \epsilon_k^n, \quad n\in\mathbb{N},
\]
where the challenging part is to give meaning to and
to approximate the fractional power $(1-\mathrm{e}^{-\mu_k\Delta t}B)^{\gamma-\lfloor\gamma\rfloor}$. We consider rational approximations, which in spirit is similar to \cite{Bolin2019}, but while they approximate a spatial operator with real discrete spectrum, 
we approximate a temporal operator with a complex continuous spectrum; different mathematical tools are therefore needed to analyse the rational approximation in our case. Our rational approximations are given by
\begin{equation}
p(\mathrm{e}^{-\mu_k\Delta t}B)(1-\mathrm{e}^{-\mu_k\Delta t}B)^{\lfloor\gamma\rfloor} c_k^{n-1} = q(\mathrm{e}^{-\mu_k\Delta t}B)\epsilon_k^n,\label{eq:time-discrete}
\end{equation}
where $p,q:\mathbb{C}\rightarrow\mathbb{C}$ are polynomials  of order $m$ with real coefficients that are extended to discrete-time operators by considering them as polynomials in $\mathrm{e}^{-\mu_k\Delta t}B$, and $(c_k^0, \cdots, c_k^{-m-\lfloor \gamma\rfloor+1)})^\mathrm{T}$ is initialised in the processes stationary state. This gives an autoregressive moving average ARMA($m+\lfloor\gamma\rfloor$, $m$) process. As discussed above, the discrete-time process in Equation \eqref{eq:time-discrete} is adjusted through a post-hoc adjustment to 
$\sigma_k$ to achieve the known true time-stationary marginal variance, $C_{c_k}(0)=\frac{\lambda_k \Gamma(2 \gamma - 1)}{(2 \mu_k)^{2 \gamma - 1} \Gamma(\gamma)^2}$.

\subsubsection{Constructing the approximation in three steps}

\noindent \textbf{Step 1:} \
{\em Determine the polynomials $p$ and $q$  in \eqref{eq:time-discrete} to have small approximation error.
} By
Theorem \ref{thm:method-convergence} in Section \ref{sec:theory:time}, this can be done by finding $p$ and $q$ with real coefficients such that the rational function $r(z) = p(z)/q(z)$ approximates $(1-z)^{\gamma-\lfloor\gamma\rfloor}$ well on the complex unit disc. The existence of such a rational function is a consequence of Mergelyan's theorem, see Remark \ref{rem:existence-of-rational-function}(c) for details. Intuitively, we need to consider the complex unit disc since it coincides with the $\ell^2$-spectrum  of the backward shift operator $B$. In practice, we to 
select polynomials $p$ and $q$ of order m with no common factors such that the difference between $p(x)/q(x)$ and $(1-x)^{\gamma-\lfloor \gamma \rfloor}$ is minimized on a discrete grid of $[0, 1]$ with step size $0.001$. This approach has no theoretical guarantees on optimality, but works well in practice.\medskip

\noindent \textbf{Step 2:} {\em Express approximation \eqref{eq:time-discrete} as a Markov process.} For given polynomials $p$ and $q$ of order $m$, Equation \eqref{eq:time-discrete} gives rise to the ARMA($m+\lfloor\gamma\rfloor$, $m$) process of the form
\[
    c_k^n= \sum_{i = 1}^{m+\lfloor\gamma\rfloor}\phi_{k,i}c_k^{n-i}+ \epsilon_k^n+\sum_{i=1}^m \theta_{k,i} \epsilon_{k}^{n-i},
\]
where $\phi_{k,1}, \ldots, \phi_{k,m+\lfloor\gamma\rfloor}$ and $\theta_{k,1}, \ldots, \theta_{k,m}$ can be computed by expanding the polynomials and collecting the terms. Let $\zeta_i = \min\left\{\max\{m, \lfloor\gamma\rfloor\}, i, m+\lfloor\gamma\rfloor-i\right\}$, then the explicit formulas are
\begin{align*}
    & \qquad \phi_{k,i} = -e^{-\mu_k i \Delta t}\sum_{j = 0}^{\zeta_i}(-1)^{i-j} \binom{\lfloor \gamma\rfloor}{i-j} \frac{p_j}{p_0},\quad i = 1, \ldots, m+\lfloor\gamma\rfloor,\\
    \qquad\mathrm{and} & \qquad\theta_{k,i} = \mathrm{e}^{-\mu_k i}\frac{q_i}{q_0},\quad i = 1, \ldots, m,
\end{align*}
where $p_0, \ldots, p_{m+\lfloor\gamma\rfloor}$ and $q_0, \ldots, q_m$ are the coefficients of the polynomials $p(z) =\sum_{i=0}^{m+\lfloor\gamma\rfloor}p_{i} z^{i}$ and $q(z)=\sum_{i=0}^mq_{i} z^{i}$, respectively. Here we have divided Equation \eqref{eq:time-discrete} by $p_0$ and redefined $\epsilon_k^n$ as $\frac{q_0}{p_0}\epsilon_k^n$. This division is unproblematic since we can guarantee that $q_0,p_0 \neq 0$ if the approximation is sufficiently accurate. More precisely, the function $(1-z)^{\gamma - \lfloor \gamma \rfloor}$ has no pole and no zero at $z = 0$, so any sufficiently good approximation satisfies $p_0 = p(0) \neq 0$ and $q_0 = q(0) \neq 0$, as we are assuming the polynomials have no common factors. 

The ARMA process can be described as a Markov process by introducing the augmented state vector
$\boldsymbol{x}_k^n=(c_k^n, c_k^{n-1}, \ldots, c_k^{n-m-\lfloor\gamma\rfloor}, \epsilon_k^{n}, \ldots, \epsilon_k^{n-m})^\mathrm{T}$, which follows a linear dynamical model
\begin{align}\label{MP}
    \boldsymbol{x}_k^n = \mathbf{F}_k\boldsymbol{x}_k^{n-1}+\boldsymbol{v}_k^n, \quad n = 1, \ldots, N,
\end{align}
where $N$ is a finite time horizon, $\boldsymbol{x}_k^0 = \boldsymbol{0}$, $\mathbf{F}_k$ is a $(2m+\lfloor\gamma\rfloor)\times (2m+\lfloor\gamma\rfloor)$ matrix given by
\[
    \mathbf{F}_k = \begin{bmatrix}
        (\phi_{k,1}, \ldots, \phi_{k,m+\lfloor\gamma\rfloor-1}) & \phi_{k, m+\lfloor\gamma\rfloor} & (\theta_{k,1},\ldots, \theta_{k, m-1}) & \theta_{k,m} \\
        \mathbf{I}_{m+\lfloor\gamma\rfloor-1} & 0 & \boldsymbol{0} & 0 \\
        \boldsymbol{0} & 0 & \boldsymbol{0} & 0 \\
        \mathbf{0} & \mathbf{0} & \mathbf{I}_{m-1} & \mathbf{0}
    \end{bmatrix}, \qquad \text{and}\qquad \boldsymbol{v}_k^{n}=\begin{bmatrix}
     1\\ \boldsymbol{0} \\ 1 \\  \boldsymbol{0} 
    \end{bmatrix}  \epsilon_k^n .
\]
It follows that $\boldsymbol{v}_k^n\sim\mathcal{N}(0, \boldsymbol{\Sigma}_k)$,  where  $(\boldsymbol{\Sigma}_k)_{i,j}=\sigma_k^2$ for $i,j\in\{1,m+\lfloor\gamma\rfloor+1\}$ and zero otherwise.
\medskip

\noindent \textbf{Step 3:} {\em Initialise the Markov process \eqref{MP} in the stationary state.} 
To do this,
we compute the variance $\mathbf{S}^n_k$ of $\mathbf x_k^n$  for a large number of iterations $n=N_{\mathrm{init}}$ corresponding to a time $T'=N_{\mathrm{init}}\Delta t$ 
which is much larger than the temporal range $r_\mathrm{t}$ of the process. 
To reduce the long term effect of discretisation errors, we then
adjust the the innovation variance $\sigma_k^2$ such that the resulting process $\{\mathbf{x}_{k,1}^{n}\}=\{c_k^{n}\}$ achieves the true time-stationary marginal variance $C_{c_k}(0)$ at $n=N_{\mathrm{init}}$. Finally, we recompute the variance $\mathbf{S}_k^{N_{\mathrm{init}}}$ and use it to simulate $\boldsymbol{x}_k^0$, the time-stationary starting value for \eqref{MP}. The precise procedure is as follows:
\begin{enumerate}
    \item Set $\sigma_k = 1$ and $\mathbf{S}_k^0 = \mathbf{I}$.\medskip 
    \item Let $N_{\mathrm{init}} \gg r_\mathrm{t}/\Delta t$  (cf. Equation \eqref{eq:reparametrisation}). For $n = 1, \ldots, N_\mathrm{init}$, compute sequentially \ 
    \(
        \mathbf{S}_k^n = \mathbf{F}_k\mathbf{S}_k^{n-1}\mathbf{F}_k^\mathrm{T}+\boldsymbol{\Sigma}_k.
    \)\medskip
    \item  Set \
    \(
        \sigma_k^2 = \dfrac{C_{c_k}(0)}{{ (\mathbf{S}_{k}^{N_\mathrm{\mathrm{init}}}})_{1,1}}\qquad \text{and}\qquad \boldsymbol{x}_k^0\sim \mathcal N(0,\sigma_k^2
\mathbf{S}_k^{N_\mathrm{\mathrm{init}}}). 
    \)  
\end{enumerate}
Starting from $\sigma_k = 1$, the procedure produce the scaling factor to apply to the process such that the approximated marginal variance of $c_k^{N_\mathrm{init}}$ equals $C_{c_k}(0)$.
The marginal covariance matrix of the corresponding rescaled process $\{\boldsymbol{x}_k^{n}\}$ then becomes $
\frac{C_{c_k}(0)}{(\mathbf{S}_{k}^{N_\mathrm{\mathrm{init}}})_{1,1}}
\mathbf{S}_{k}^{N_{\mathrm{init}}}$ at $n=N_{\mathrm{init}}$.
%

\begin{remark}
An alternative to Step 3 is to use the Durbin-Levinson algorithm to compute the limiting covariance matrix $\mathbf{S}_{k}^{\infty}$; see \cite[Chapter 2.5.3]{Brockwell2016} for a description of this algorithm. However, for our application this proved to be numerical unstable when $\Delta t$ is small or $\mu_k$ is large; we discuss this more in Section \ref{sec:rational-numerical-validation}. \smallskip

\end{remark}

\subsection{Full discretization and Markov representation}\label{sec:DiscFull}
Select $M$ spectral basis functions, a temporal step size $\Delta t=T/N$, a temporal grid $t_n = n \Delta t$, $n = 0, 1, \ldots, N$, and let $u_M^n(\cdot) \approx u_M(\cdot, t_n)$. Then the full approximation of the stationary state of \eqref{eq:SPDE:frac} -- \eqref{eq:SPDE:noise} can then be written as
\begin{equation} \label{eq:full-discretisation}
    u^n_M(\boldsymbol{s}) = \sum_{k = 1}^M c_k^n \,f_k(\boldsymbol{s}), \quad \boldsymbol{s}\in\mathcal{D}, \quad n = 0,\ldots, N,
\end{equation}
where the independent discrete-time process $\{c_k^n\}$ are determined as described
in Section \ref{sec:disc:time}.

This gives rise to an AR(1) process with
the augmented state vector $\boldsymbol{x}^n = \left((\boldsymbol{x}_1^n)^\mathrm{T}, \ldots, (\boldsymbol{x}_M^n)^\mathrm{T}\right)^\mathrm{T}$. The augmented state vector follows 
the linear dynamical model initialised in the stationary state,
\begin{equation} \label{eq:state-space-model}
    \boldsymbol{x}^n = \mathbf{F}\boldsymbol{x}^{n-1}+\boldsymbol{v}^n, \quad n = 1, \ldots, N, \qquad \boldsymbol{x}^0
    \sim \mathcal N(0, \tilde{\mathbf{S}}^{N_\mathrm{\mathrm{init}}}
    ), 
\end{equation}
where 
$\mathbf{F}$ and $\tilde{\mathbf{S}}^{N_\mathrm{\mathrm{init}}}$ are $M(2m+\lfloor\gamma\rfloor)\times M(2m+\lfloor\gamma\rfloor)$ block diagonal matrices with respectively blocks $\mathbf{F}_1,\ldots\mathbf{F}_M$ and $\sigma_1^2\mathbf{S}_1^{N_\mathrm{\mathrm{init}}},\dots,\sigma_M^2\mathbf{S}^{N_\mathrm{\mathrm{init}}}_M$, and $\boldsymbol{v}^n=\left((\boldsymbol{v}_1^n)^\mathrm{T},\ldots,(\boldsymbol{v}_M^n)^\mathrm{T}\right)^\mathrm{T}\sim\mathcal{N}(\boldsymbol{0}, \Sigma)$ with $\Sigma$ block diagonal with blocks $\Sigma_1,\ldots, \Sigma_M$. The innovations $\boldsymbol{v}^1, \ldots, \boldsymbol{v}^N$ are independent. 

\section{Error bounds and numerical verification}\label{sec:ErrorNumVer}
\subsection{Spatial error}\label{sec:theory:spatial}

The spectral discretisation (\ref{eq:spectral-representation}) is known to have 
root mean-square (RMS) convergence order $M^{-\frac{\nu_{\mathrm{s}}}{d}}$, where $\nu_\mathrm{s} = \beta + (2\gamma - 1) \alpha - \frac{d}{2}$ is the mean-square spatial smoothness of $u(\cdot, t)$ (\cite{Furset2023}). For practical inference and prediction, we are less interested in RMS (i.e. "strong") convergence and more interested in convergence in distribution. We now give such a result; indeed in the next result we show that covariance functions converge with a rate which is (almost) twice as good as the RMS rate. 

\begin{theorem} \label{thm:weak-spectral-convergence}
Assume $\gamma>\tfrac12$, $\beta + \alpha (2 \gamma - 1) - 1 > 0$, and $\mathcal{D}$ is a bounded domain in $\mathbb{R}^d$ with zero Neumann or Dirichlet boundary condition or a $d$-dimensional compact Riemannian manifold. Then for spatial locations $\mathbf{s}_1, \mathbf{s}_2 \in \mathcal{D}$ and time lag $h \in \mathbb{R}$
\begin{align*}
    \left| C_{u}(\boldsymbol{s}_1, \boldsymbol{s}_2, h) - C^M_{u}(\boldsymbol{s}_1, \boldsymbol{s}_2, h)\right| \lesssim M^{-\frac{2 \nu_\mathrm{s}}{d} + \frac{d - 1}{d}}, \quad \boldsymbol{s}_1,\boldsymbol{s}_2\in \mathcal D, \quad h \in \mathbb{R} \, ,
\end{align*}
where $C_{u}(\boldsymbol{s}_1, \boldsymbol{s}_2, h)$ is the covariance function defined in \eqref{eq:process-covariance-function}, and $C^M_{u}(\boldsymbol{s}_1, \boldsymbol{s}_2, h)$ is the covariance function of the spatial discretisation \eqref{eq:spectral-approximation}. If we restrict to a rectangular domain $\mathcal{D}=\times_{i = 1}^d [0, A_i] \subset \mathbb{R}^d$, where $A_i > 0$ for all $i$, then
\begin{align*}
    \left| C_{u}(\boldsymbol{s}_1, \boldsymbol{s}_2, h) - C^M_{u}(\boldsymbol{s}_1, \boldsymbol{s}_2, h)\right| \lesssim M^{-\frac{2 \nu_\mathrm{s}}{d}}, \quad \boldsymbol{s}_1,\boldsymbol{s}_2\in \mathcal D, \quad h \in \mathbb{R} \, .
\end{align*}
\end{theorem}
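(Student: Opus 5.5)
The plan is to write the error as a tail sum and bound each factor separately. By \eqref{eq:process-covariance-function} and \eqref{eq:spectral-approximation}, together with the independence of the $c_k$,
\[
C_u(\boldsymbol{s}_1,\boldsymbol{s}_2,h) - C_u^M(\boldsymbol{s}_1,\boldsymbol{s}_2,h) = \sum_{k>M} C_{c_k}(h)\, f_k(\boldsymbol{s}_1) f_k(\boldsymbol{s}_2).
\]
The triangle inequality then gives $\bigl| C_u - C_u^M \bigr| \le \sum_{k>M} |C_{c_k}(h)|\,|f_k(\boldsymbol{s}_1)|\,|f_k(\boldsymbol{s}_2)|$.

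\textbf{Step 1: bound on $C_{c_k}(h)$ uniform in $h$.} $C_{c_k}$ is a covariance function, so $|C_{c_k}(h)| \le C_{c_k}(0)$. By the closed form stated in Section \ref{sec:disc:time},
\[
C_{c_k}(0) = \frac{\lambda_k \Gamma(2\gamma-1)}{(2\mu_k)^{2\gamma-1}\Gamma(\gamma)^2},
\]
which is finite since $\gamma>\tfrac12$. Inserting \eqref{m-l} gives $C_{c_k}(0) \lesssim (\kappa^2+\xi_k)^{-\beta-\alpha(2\gamma-1)} = (\kappa^2+\xi_k)^{-(\nu_\mathrm{s}+d/2)}$. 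Weyl's law, valid for Neumann or Dirichlet Laplacians on bounded domains and for compact Riemannian manifolds, gives $\xi_k \eqsim k^{2/d}$. Hence
\[
|C_{c_k}(h)| \lesssim k^{-\frac{2\nu_\mathrm{s}}{d} - 1},
\]
with a constant independent of $h$.

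\textbf{Step 2: sup-norm bounds on the eigenfunctions.} This is where the two cases differ, and it is the main technical input.

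\emph{General domain or manifold.} Use the classical Hörmander / local Weyl law estimate $\|f_k\|_\infty \lesssim \xi_k^{(d-1)/4}$. Combined with Weyl's law this gives $|f_k(\boldsymbol{s}_1) f_k(\boldsymbol{s}_2)| \lesssim \xi_k^{(d-1)/2} \eqsim k^{(d-1)/d}$. For Neumann or Dirichlet conditions on a domain I would cite the boundary versions (Grieser, Smith–Sogge). This is the step I expect to be the main obstacle, since it needs some regularity of $\partial\mathcal{D}$. I would state the needed regularity or simply cite the result rather than prove it.

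\emph{Rectangle.} By \eqref{eq:rectangle-eigenfunctions} and its $d$-dimensional analogue, $\|f_k\|_\infty \le 2^{d/2}(\prod_i A_i)^{-1/2}$ uniformly in $k$. So the product of eigenfunctions is bounded.

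\textbf{Step 3: summing the tail.} In the general case the terms are $\lesssim k^{-\frac{2\nu_\mathrm{s}}{d} - 1 + \frac{d-1}{d}}$. Comparing with an integral,
\[
\sum_{k>M} k^{-\frac{2\nu_\mathrm{s}}{d} - 1 + \frac{d-1}{d}} \lesssim M^{-\frac{2\nu_\mathrm{s}}{d} + \frac{d-1}{d}}.
\]
This requires $2\nu_\mathrm{s} > d-1$, i.e.\ $\beta+\alpha(2\gamma-1) - 1 > \tfrac{d}{2} - \tfrac12$; for $d=2$ this is implied by the stated assumption $\beta+\alpha(2\gamma-1)-1>0$, and in general I would note it as the convergence condition.

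In the rectangle the terms are $\lesssim k^{-\frac{2\nu_\mathrm{s}}{d}-1}$, whose tail sum is $\lesssim M^{-\frac{2\nu_\mathrm{s}}{d}}$ whenever $\nu_\mathrm{s}>0$. That condition follows from the assumption.

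All constants are independent of $\boldsymbol{s}_1$, $\boldsymbol{s}_2$ and $h$, which gives both claimed bounds.
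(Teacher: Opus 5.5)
Your route is the paper's: tail sum over $k>M$, $|C_{c_k}(h)|\le C_{c_k}(0)\propto(\kappa^2+\xi_k)^{-(\nu_\mathrm{s}+d/2)}$, Weyl's law, the sup-norm bound $\|f_k\|_\infty\lesssim\xi_k^{(d-1)/4}$, comparison with an integral, and $\|f_k\|_\infty\lesssim 1$ on rectangles. Your Cauchy--Schwarz justification of $|C_{c_k}(h)|\le C_{c_k}(0)$ and your two-sided Weyl bound are more careful than the paper. The paper quotes only $\xi_k\lesssim k^{2/d}$, although the lower bound is what controls $\lambda_k\mu_k^{1-2\gamma}$.

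\textbf{The gap is in Step 3.} The general-domain majorant $\sum_{k>M}k^{-(2\nu_\mathrm{s}+1)/d}$ is finite iff $2\nu_\mathrm{s}>d-1$, i.e.\ $\beta+\alpha(2\gamma-1)>d-\tfrac{1}{2}$. Your rewriting $\beta+\alpha(2\gamma-1)-1>\tfrac{d}{2}-\tfrac{1}{2}$ agrees with this only at $d=2$, where it reads $\nu_\mathrm{s}>\tfrac{1}{2}$. But at $d=2$ the stated hypothesis $\beta+\alpha(2\gamma-1)-1>0$ is exactly $\nu_\mathrm{s}>0$. So it does \emph{not} imply what you need. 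For $0<\nu_\mathrm{s}\le\tfrac{1}{2}$ your majorant diverges and the argument gives no bound at all. Similarly, the rectangle requirement $\nu_\mathrm{s}>0$ follows from the stated hypothesis only when $d\le 2$.

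The paper's proof has the same hidden requirement. It evaluates $\int_M^\infty x^{\frac{d-1}{d}-\frac{2}{d}(\beta+\alpha(2\gamma-1))}\,\mathrm{d}x$ without checking that the exponent is below $-1$, consistent with its remark that uniform convergence is lost on the sphere for $\nu_\mathrm{s}<0.5$. With this method, the first bound should therefore carry the extra assumption $\nu_\mathrm{s}>\tfrac{d-1}{2}$. That is in any case the only regime where its right-hand side decays.

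If you want to drop that assumption, do not bound each $|f_k|$ separately:
\begin{itemize}
\item[(i)] Let $g(\xi):=c\,(\kappa^2+\xi)^{-(\nu_\mathrm{s}+d/2)}$ with $g(\xi_k)=C_{c_k}(0)$, and set $T(\boldsymbol{s}):=\sum_{k>M}g(\xi_k)f_k(\boldsymbol{s})^2$. Weighted Cauchy--Schwarz gives $|C_u-C_u^M|\le T(\boldsymbol{s}_1)^{1/2}T(\boldsymbol{s}_2)^{1/2}$.
\item[(ii)] Write $g(\xi_k)=\int_{\xi_k}^\infty|g'(\lambda)|\,\mathrm{d}\lambda$ and use the uniform spectral-function bound $\sum_{\xi_k\le\lambda}f_k(\boldsymbol{s})^2\lesssim(1+\lambda)^{d/2}$. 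This is H\"{o}rmander's local Weyl law on manifolds. On domains it follows from the on-diagonal heat-kernel bound $p_t(\boldsymbol{s},\boldsymbol{s})\lesssim t^{-d/2}$, $0<t\le 1$, valid for Dirichlet on any bounded domain and for Neumann on, e.g., Lipschitz domains.
\item[(iii)] This yields $T(\boldsymbol{s})\lesssim\int_{\xi_{M+1}}^\infty(\kappa^2+\lambda)^{-\nu_\mathrm{s}-1}\,\mathrm{d}\lambda\lesssim M^{-2\nu_\mathrm{s}/d}$ for every $\nu_\mathrm{s}>0$.
\end{itemize}
That is the rectangle rate on general domains, with no $\tfrac{d-1}{d}$ loss.
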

The proof can be found in Section \ref{suppl:proof-thm-1} in the Supplementary Materials. Note that the convergence is uniform in space $(s_1,s_2)$ and $h$. On rectangles we obtain twice the strong (RMS) rate, and in 2D, the error then becomes $O(M^{1 -\beta - \alpha(2 \gamma - 1)})$. Double rates for convergence in distribution compared with strong convergence is observed also for other discretisations of SPDEs and SDEs (e.g. \cite{KLL12,KP92}). For general domains our rate is worse off by a term $\frac{d - 1}{d}$ and uniform convergence of covariance functions is lost for small enough $\nu_\mathrm{s}$. This actually occurs on the 2D-sphere for $\nu_\mathrm{s} < 0.5$. Convergence is lost at the poles because there exist sequences of the spherical harmonics whose mass concentrates there (\cite{Sogge2016}).

\subsection{Temporal error}\label{sec:theory:time}

In practice we observe that the dominating error of the temporal discretisation described in Section \ref{sec:disc:time} is the one induced by the rational approximation in the stationary setting, and not the remaining error from the warm-up period affected by the initial condition. Here we give a result about this dominating error, i.e. an error bound for convergence in distribution of the ARMA-process $c_k^n$ defined by \eqref{eq:time-discrete} to the exact stationary process $c_k(t)$, $t\in\mathbb{R}$, defined by \eqref{eq:time-process-covariance-function}. The result is formulated in terms of stationary correlation functions $\rho$, i.e.
\begin{equation} \label{process-correlation-function}
\rho_{c_k}(h) = \frac{\mathrm{e}^{-\mu_k |h|} }{\Gamma(2 \gamma - 1)} \int_0^\infty (u + 2 \mu_k |h|)^{\gamma - 1} u^{\gamma - 1} \mathrm{e}^{-u} \, \mathrm{d} u, \quad h\in\mathbb{R} \, ,\end{equation}
as long as the polynomials $p$ and $q$ in \eqref{eq:time-discrete} are chosen correctly. 

\begin{theorem} \label{thm:method-convergence} 
Assume $\gamma>\tfrac12$, $\beta + \alpha (2 \gamma - 1) - 1 > 0$, $\eta := \gamma - \lfloor \gamma \rfloor$, $\Delta t>0$, $\epsilon > 0$, and $p$ and $q$ of order $m$ polynomials with real coefficients satisfying
\begin{equation} \label{ass:rational-approximation}
\sup_{z \in \mathbb{C}, |z| \leq 1} \left|(1 - z)^{\eta} - p(z) / q(z) \right| < \epsilon \, ,
\end{equation}
and $\tilde{c}_k = \{\tilde{c}_k^n\}_n$ is the ARMA$(m + \lfloor \gamma \rfloor, m)$-process satisfying (cf. \eqref{eq:time-discrete})
\begin{equation*}
p(\mathrm{e}^{-\mu_k h } B) (1 - \mathrm{e}^{-\mu_k h } B)^{\lfloor \gamma \rfloor} \tilde{c}_k^n = q(\mathrm{e}^{-\mu_k h } B) z_k^n, \qquad n\in\mathbb Z, \qquad z_k^n\overset{\mathrm{iid}}{\sim} \mathcal{N}(0, 1) \, .
\end{equation*}
If in addition $r := (1 - \mathrm{e}^{-\mu_k \Delta t})^{-\eta} \epsilon < 1$, then the correlation functions of $c$ and $\tilde c$ satisfy 
\begin{equation*}
\left|\rho_{c_k}(n \Delta t) - \rho_{\tilde{c}_k}(n) \right| \lesssim 
\mu_k^{2 \gamma - 1} \left(\frac{\epsilon \Delta t^{-\eta}}{1 - r} + \Delta t \abs{\log(\Delta t)} + \Delta t^{\min(2\gamma - 1, 1)} \right),\qquad n\in \mathbb N_0 \, ,
\end{equation*}
where the constant is independent of $k$ and $n$, but depends on $\gamma$. 

\end{theorem}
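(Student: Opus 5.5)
We will work entirely in the spectral (frequency) domain, comparing the two correlation functions through their spectral densities. Write $a = \mathrm{e}^{-\mu_k \Delta t}$. The exact stationary process $c_k$ has spectral density proportional to $|\mu_k + i\omega|^{-2\gamma}$ on $\mathbb{R}$. Sampling on the grid $t_n = n\Delta t$ gives a discrete-time process whose spectral density on $(-\pi,\pi]$ is the aliased sum
\[
f_{c}(\theta) = \frac{1}{\Delta t}\sum_{j\in\mathbb{Z}} \bigl|\mu_k + i(\theta+2\pi j)/\Delta t\bigr|^{-2\gamma}.
\]
The ARMA process $\tilde c_k$ has spectral density
\[
f_{\tilde c}(\theta) = \bigl|q(a\mathrm{e}^{-i\theta})/p(a\mathrm{e}^{-i\theta})\bigr|^{2}\,\bigl|1-a\mathrm{e}^{-i\theta}\bigr|^{-2\lfloor\gamma\rfloor}.
\]
Correlations are then $\rho(n) = \int f\,\mathrm{e}^{in\theta}\,\dint\theta \big/ \int f\,\dint\theta$. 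For two spectral densities $f,g$ with integrals $F,G$ we have
\[
|\rho_f(n)-\rho_g(n)| \le 2\,\|f-g\|_{L^1}/F,
\]
so it suffices to bound the relative $L^1$ distance uniformly in $n$. We introduce the intermediate density
\[
g(\theta) = \bigl|1-a\mathrm{e}^{-i\theta}\bigr|^{-2\gamma},
\]
which is the "exact fractional ARMA" target, and split the error into (I) $f_{\tilde c}$ versus $g$ (the rational approximation error) and (II) $g$ versus $f_c$ suitably scaled (the time discretisation and aliasing error). For the normalisation, $\int g \asymp (1-a)^{1-2\gamma} \asymp (\mu_k\Delta t)^{1-2\gamma}$, using $\gamma>\tfrac12$.

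\textbf{Step (I).} On $|z|=a<1$ we have $|(1-z)^{-\eta}| \le (1-a)^{-\eta}$. Assumption \eqref{ass:rational-approximation} gives $p/q = (1-z)^\eta(1+\delta)$ with $|\delta| \le \epsilon(1-a)^{-\eta} = r < 1$. Hence
\[
|q/p|^2 = |1-z|^{-2\eta}\,|1+\delta|^{-2} = |1-z|^{-2\eta}\bigl(1+O(r/(1-r))\bigr).
\]
This yields $\|f_{\tilde c}-g\|_{L^1}/\int g \lesssim \epsilon(1-a)^{-\eta}/(1-r)$. We could sharpen this by using $|\delta(\theta)| \le \epsilon|1-a\mathrm{e}^{-i\theta}|^{-\eta}$ pointwise, so that the relative error is weighted by $g$. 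Since $(1-a)^{-\eta} \lesssim (\mu_k\Delta t)^{-\eta}$ for small $\mu_k\Delta t$, and $\mu_k$ is bounded below by $r^{-1}\kappa^{2\alpha}$, we absorb factors of $\mu_k$ into the prefactor $\mu_k^{2\gamma-1}$ (using a cruder bound when $\mu_k\Delta t$ is large). This produces the term $\mu_k^{2\gamma-1}\epsilon\Delta t^{-\eta}/(1-r)$.

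\textbf{Step (II)} is the main obstacle. We need to compare $g$ with the aliased continuous density $f_c$ after normalisation. Write
\[
1-a\mathrm{e}^{-i\theta} = (\mu_k\Delta t + i\theta)\bigl(1+O(\mu_k\Delta t+|\theta|)\bigr)
\]
for small arguments. Then $g(\theta) \approx \Delta t^{-2\gamma}\,|\mu_k + i\theta/\Delta t|^{-2\gamma}$, which is the $j=0$ term of $f_c$ up to the factor $\Delta t^{2\gamma-1}$. This splits the error into two pieces:

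1. The local error is the relative error $O(\mu_k\Delta t+|\theta|)$ weighted by $g$. Integrating $|\theta|\,|\mu_k\Delta t+i\theta|^{-2\gamma}$ over $(-\pi,\pi]$ and dividing by $\int g \asymp(\mu_k\Delta t)^{1-2\gamma}$ gives $(\mu_k\Delta t)^{\min(2\gamma-1,1)}$. In the borderline case $2\gamma-1=1$ the integral diverges logarithmically, which produces $\Delta t|\log\Delta t|$.

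2. The aliased terms $j\ne0$ are each bounded by $\Delta t^{2\gamma-1}|2\pi j|^{-2\gamma}$, hence are summable since $\gamma>\tfrac12$. Relative to $\int f_c \asymp \mu_k^{1-2\gamma}$ they contribute $(\mu_k\Delta t)^{2\gamma-1}$.

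Collecting the powers of $\mu_k$ into $\mu_k^{2\gamma-1}$ (valid because $\mu_k$ is bounded below, and the bounds are trivial when $\mu_k\Delta t \gtrsim 1$ since correlations are bounded by $1$) gives the remaining terms $\Delta t|\log\Delta t| + \Delta t^{\min(2\gamma-1,1)}$. The delicate part is the uniform control of the $O(\cdot)$ expansion of $|1-a\mathrm{e}^{-i\theta}|^{-2\gamma}$ against $|\mu_k\Delta t+i\theta|^{-2\gamma}$ on all of $(-\pi,\pi]$, not just near $0$. For this we will use that both quantities are comparable there (since $|1-\mathrm{e}^{-x-i\theta}| \asymp |x+i\theta|$ for bounded $x$ and $|\theta|\le\pi$), together with a mean value estimate on the logarithm.
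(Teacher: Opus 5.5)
Your proposal is correct. It uses the same two-step decomposition as the paper, and for the discretisation step it takes a genuinely different route.

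\textbf{Step (I).} Your intermediate density $g(\theta)=|1-a\mathrm{e}^{-i\theta}|^{-2\gamma}$ is exactly the spectral density of the paper's intermediate object, the impulse response $c_n=(-1)^n\binom{-\gamma}{n}a^n$ of $(1-aB)^{-\gamma}$. The paper proves the inversion bound $\sup_{|z|\le 1-\delta}|(1-z)^{-\eta}-q(z)/p(z)|<\delta^{-2\eta}\epsilon/(1-\delta^{-\eta}\epsilon)$, which is your factorisation $p/q=(1-z)^{\eta}(1+\delta)$ in another guise. It then turns this into an $\ell^2$ bound on $c-\tilde c$ via Parseval, and applies Cauchy--Schwarz to the autocovariances and the normalising constant. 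Your relative $L^1$ bound on the spectral densities, combined with $|\rho_f(n)-\rho_g(n)|\le 2\|f-g\|_{L^1}/F$, is the same estimate done in one step.

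\textbf{Step (II).} Here the routes really differ. The paper stays in the time domain:
\begin{itemize}
\item it writes $\rho_{c_k}(n\Delta t)$ as an integral of $u^{\gamma-1}(u+n\Delta t)^{\gamma-1}\mathrm{e}^{-2\mu_k u}$ and compares it with a right Riemann sum;
\item it splits off the singularity at $0$ when $\gamma<2$, and truncates at $N\simeq|\log\Delta t|/(\mu_k\Delta t)$, which is where $\Delta t|\log\Delta t|$ enters for every $\gamma$;
\item it replaces $j^{\gamma-1}/\Gamma(\gamma)$ by the binomial coefficients using an Euler-product estimate with $O(j^{-1})$ relative error.
\end{itemize}
You replace all of this with three ingredients: the aliasing formula, summability of the $j\neq 0$ terms, and the fact that $(1-\mathrm{e}^{-w})/w$ is smooth and zero-free on $\{0\le \mathrm{Re}\,w\le 1,\ |\mathrm{Im}\,w|\le\pi\}$. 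Your route avoids the binomial asymptotics and gives uniformity in $n$ for free. It also yields the slightly sharper rate $(\mu_k\Delta t)^{\min(2\gamma-1,1)}$, with a logarithm only at $\gamma=1$. The paper's time-domain route gives an extra factor $\mathrm{e}^{-\mu_k n\Delta t}$ in the discretisation error, but this is lost in the final statement because the rational-approximation term does not decay in $n$. It also produces the explicit MA$(\infty)$ coefficients.

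\textbf{A small correction.} $|1+\delta|^{-2}-1$ is only $O\bigl(r/(1-r)^2\bigr)$, not $O\bigl(r/(1-r)\bigr)$; take $\delta=-r$. This is harmless.
\begin{itemize}
\item For $r<\tfrac12$ the two bounds are comparable.
\item For $r\ge\tfrac12$ and $\mu_k\Delta t\le 1$, you have $\epsilon\Delta t^{-\eta}\gtrsim r\mu_k^{\eta}$. Since $\mu_k$ is bounded below, the right-hand side of the theorem is then $\gtrsim 1$, and the claim is trivial, exactly as in your large-$\mu_k\Delta t$ regime.
\end{itemize}
The paper's quadratic term $\sum_j(c_j-\tilde c_j)^2$ carries the same $(1-r)^{-2}$ factor, so this is not a defect specific to your route.
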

The proof can be found in Section \ref{suppl:proof-thm-2} of the Supplementary Materials.

\begin{remark} (a) \ The function $(1 - z)^{\eta}$ has no poles or zeroes in 
$|z| \leq \mathrm{e}^{-h \mu_k} < 1$. 
By Assumption (\ref{ass:rational-approximation}) the approximation $p(z) / q(z)$ will therefore also have no poles $|z| \leq \mathrm{e}^{-\mu_k h}$. Thus 
$q(\mathrm{e}^{- \mu_k h}z)$ have no zeroes on the unit disc, and the ARMA-approximation is therefore always invertible. Similarly, assuming $\epsilon$ is sufficiently small, $p(z
) / q(z)$ will 
have no zeroes in $|z| \leq \mathrm{e}^{-h \mu_k}$. In this case $p(\mathrm{e}^{-h \mu_k}z)(1 - \mathrm{e}^{-h \mu_k}z)^{\lfloor \gamma \rfloor}$ have no zeroes on the unit disc, and the ARMA approximation is causal.

\medskip
\noindent (b) \ The factor $\Delta t^{-\eta}$ indicates an increase in the error when $\Delta t \rightarrow 0$ which is compensated by corresponding decreases $\epsilon$. This relationship between the step-size and the rational approximation is natural and expected; when $\Delta t \rightarrow 0$ the continuous-time "window" $[t, t - m\Delta t]$ seen by the ARMA-process shrinks, so in order to maintain accuracy when reducing $\Delta t$ we have to increase the rational order $m$ to compensate. When $\eta = 0$ (i.e. $\gamma \in \mathbb{N}$) the continuous time process $c_k$ is (generalised) Markov and this trade-off thus becomes more benign when $\eta$ is small and disappears completely in the limiting case $\eta = 0$.

\medskip
\noindent (c) \ \label{rem:existence-of-rational-function} For any $\epsilon > 0$, the existence of a rational function $\Tilde{r} := p/q$ satisfying (\ref{ass:rational-approximation})  is a consequence of Mergelyan's theorem (\cite{Mergelyan1952}). By the triangle inequality and symmetry of $(1 - z)^\eta$, the symmetrisation $r(z) := \frac12(\Tilde{r}(z) + \overline{\Tilde{r}(\overline{z})})$, 
also satisfies (\ref{ass:rational-approximation}) and has real coefficients.

\end{remark}
\subsection{Total error}

In Theorem \ref{thm:full-convergence} below, we combine Theorem \ref{thm:weak-spectral-convergence} and Theorem \ref{thm:method-convergence} to get a bound on the convergence in distribution for the full time and space discretisation (\ref{eq:full-discretisation}). We formulate the result in terms of convergence of covariance functions, and for simplicity, only on rectangular domains $\mathcal{D} \subset \mathbb{R}^d$. Results for general domains and manifolds follow from a small modification of the proof. 

\begin{theorem} \label{thm:full-convergence} Assume $\gamma>\tfrac12$, $\beta + \alpha (2 \gamma - 1) - 1 > 0$, $\mathcal{D} = \times_{i = 1}^d [0, A_i]$ with zero Neumann or Dirichlet boundary conditions,  $\eta := \gamma - \lfloor \gamma \rfloor$, $h>0$, $\epsilon > 0$, and $p$ and $q$ of order $m$ polynomials with real coefficients satisfying \eqref{ass:rational-approximation}.
For each $k = 1,2,3,...$, let $\tilde{c}_k = \{\tilde{c}_k^n\}_n$ be an ARMA$(m + \lfloor \gamma \rfloor, m)$-process satisfying \eqref{eq:time-discrete}. Then the covariance function $C_{u}(\boldsymbol{s}_1, \boldsymbol{s}_2, t)$ defined in \eqref{eq:process-covariance-function} and the covariance function $\Tilde{C}^M_{u}(\boldsymbol{s}_1, \boldsymbol{s}_2, n)$ of \eqref{eq:full-discretisation} at spatial locations $\mathbf{s}_1, \mathbf{s}_2 \in \mathcal{D}$ and time lag $
n \Delta t$, $n \in \mathbb{N}_0$, satisfy
\begin{equation*}
    \left|C_{u}(\boldsymbol{s}_1, \boldsymbol{s}_2, n \Delta t) - \Tilde{C}^M_{u}(\boldsymbol{s}_1, \boldsymbol{s}_2, n) \right| \lesssim M^{-\frac{2 \nu}{d}} + C(M) \left( \frac{\epsilon \Delta t^{-\eta}}{1 - r} + \Delta t \abs{\log(\Delta t)} + \Delta t^{\min(2\gamma - 1, 1)} \right)\, ,
\end{equation*}
where 
\begin{align*}
    C(M) \lesssim \begin{cases}
        M^{1 - \frac{2 \beta}{d}} \, & , \quad \beta < \frac{d}{2} \, , \\
        \log(M)  \, & , \quad \beta = \frac{d}{2} \, , \\
        1  \, & , \quad  \beta > \frac{d}{2} \, ,
    \end{cases} \, .
\end{align*}
\end{theorem}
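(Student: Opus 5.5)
We split the error with the triangle inequality into a spatial truncation part and a temporal part:
\[
\bigl|C_u - \tilde C_u^M\bigr| \le \bigl|C_u(\boldsymbol{s}_1,\boldsymbol{s}_2,n\Delta t) - C_u^M(\boldsymbol{s}_1,\boldsymbol{s}_2,n\Delta t)\bigr| + \bigl|C_u^M(\boldsymbol{s}_1,\boldsymbol{s}_2,n\Delta t) - \tilde C_u^M(\boldsymbol{s}_1,\boldsymbol{s}_2,n)\bigr|.
\]
The first term is bounded by $M^{-2\nu_\mathrm{s}/d}$ directly from the rectangular case of Theorem \ref{thm:weak-spectral-convergence}. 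For the second term we use that both processes are sums of independent coefficient processes in the same eigenbasis. Hence
\[
C_u^M - \tilde C_u^M = \sum_{k=1}^M \bigl(C_{c_k}(n\Delta t) - C_{\tilde c_k}(n)\bigr) f_k(\boldsymbol{s}_1) f_k(\boldsymbol{s}_2).
\]
By the post hoc variance adjustment in Step 3, the marginal variance of $\tilde c_k$ equals $C_{c_k}(0)$, treating the stationary limit as exact. Therefore $C_{c_k}(n\Delta t) - C_{\tilde c_k}(n) = C_{c_k}(0)\bigl(\rho_{c_k}(n\Delta t) - \rho_{\tilde c_k}(n)\bigr)$.

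On rectangles the eigenfunctions are products of cosines, so $\sup_k \|f_k\|_\infty \lesssim 1$ uniformly. Combining this with Theorem \ref{thm:method-convergence}, the second term is bounded by
\[
\sum_{k=1}^M C_{c_k}(0)\,\mu_k^{2\gamma-1} \cdot E(\epsilon,\Delta t),
\]
where $E$ is the bracketed factor in the statement. Two points need care here. First, $r=(1-\mathrm{e}^{-\mu_k\Delta t})^{-\eta}\epsilon$ depends on $k$ and is largest for $k=1$. We take $r$ to be that maximum, so that $1/(1-r)$ is uniform in $k$. Second, the constant in Theorem \ref{thm:method-convergence} must be independent of $k$, which that theorem asserts.

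The key observation is the cancellation
\[
C_{c_k}(0)\,\mu_k^{2\gamma-1} = \frac{\lambda_k\,\Gamma(2\gamma-1)}{2^{2\gamma-1}\Gamma(\gamma)^2} \propto \lambda_k \propto (\kappa^2+\xi_k)^{-\beta}.
\]
So the factor $\mu_k^{2\gamma-1}$, which would blow up, is absorbed exactly, and all dependence on $\alpha$ drops out. It remains to bound $C(M):=\sum_{k=1}^M(\kappa^2+\xi_k)^{-\beta}$. By Weyl's law on the rectangle (or direct lattice point counting), $\xi_k \asymp k^{2/d}$. Hence $C(M) \asymp \sum_{k\le M} k^{-2\beta/d}$. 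This sum is $\lesssim M^{1-2\beta/d}$ if $\beta<d/2$, $\lesssim \log M$ if $\beta=d/2$, and bounded if $\beta>d/2$, which gives the three cases.

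The main obstacle is justifying that Theorem \ref{thm:method-convergence} applies uniformly in $k$, including the handling of $r$ via the worst case $k=1$. A second issue is that the variance normalisation is exact only in the stationary limit. I would first define $\tilde c_k$ as the exact stationary ARMA process normalised to variance $C_{c_k}(0)$, as in the theorem's setup, so that the Step 3 procedure enters only through this convention.
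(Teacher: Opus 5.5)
Your proposal is correct and follows the paper's proof essentially step for step. It uses the same triangle-inequality split, Theorem~\ref{thm:weak-spectral-convergence} for the truncation error, and Theorem~\ref{thm:method-convergence} together with $\|f_k\|_\infty\lesssim 1$ and the exact cancellation of $\mu_k^{2\gamma-1}$ against $C_{c_k}(0)$, which leaves $C(M)=\sum_{k\le M}\lambda_k$ to be bounded via Weyl's law. Two of your points are slightly more careful than the paper: handling the $k$-dependence of $r$ by taking the worst case $k=1$ (which implicitly requires $r<1$ there), and using the two-sided asymptotics $\xi_k\asymp k^{2/d}$, since it is the lower bound $\xi_k\gtrsim k^{2/d}$ that gives $\lambda_k\lesssim k^{-2\beta/d}$; the paper leaves $r$ unspecified and cites only the upper Weyl bound.
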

The proof can be found in Section \ref{suppl:proof-thm-3} in the Supplementary Materials. 

\begin{remark} 
When $\beta \geq \frac{d}{2}$, the constant $C(M)$ blows up as $M\to \infty$ and must be compensated by decreasing $\Delta t$ sufficiently fast. Minimizing the error estimate for fixed $\Delta t$ yields the best possible value of $M=M(\Delta t)$ and a minimal error bound 
with a reduced rate in $\Delta t$ 
compared to when $\beta <\frac d2$.
\end{remark}

\subsection{Numerical validation of the rational approximation} \label{sec:rational-numerical-validation}

\begin{figure}
    \includegraphics[width=0.45\textwidth]{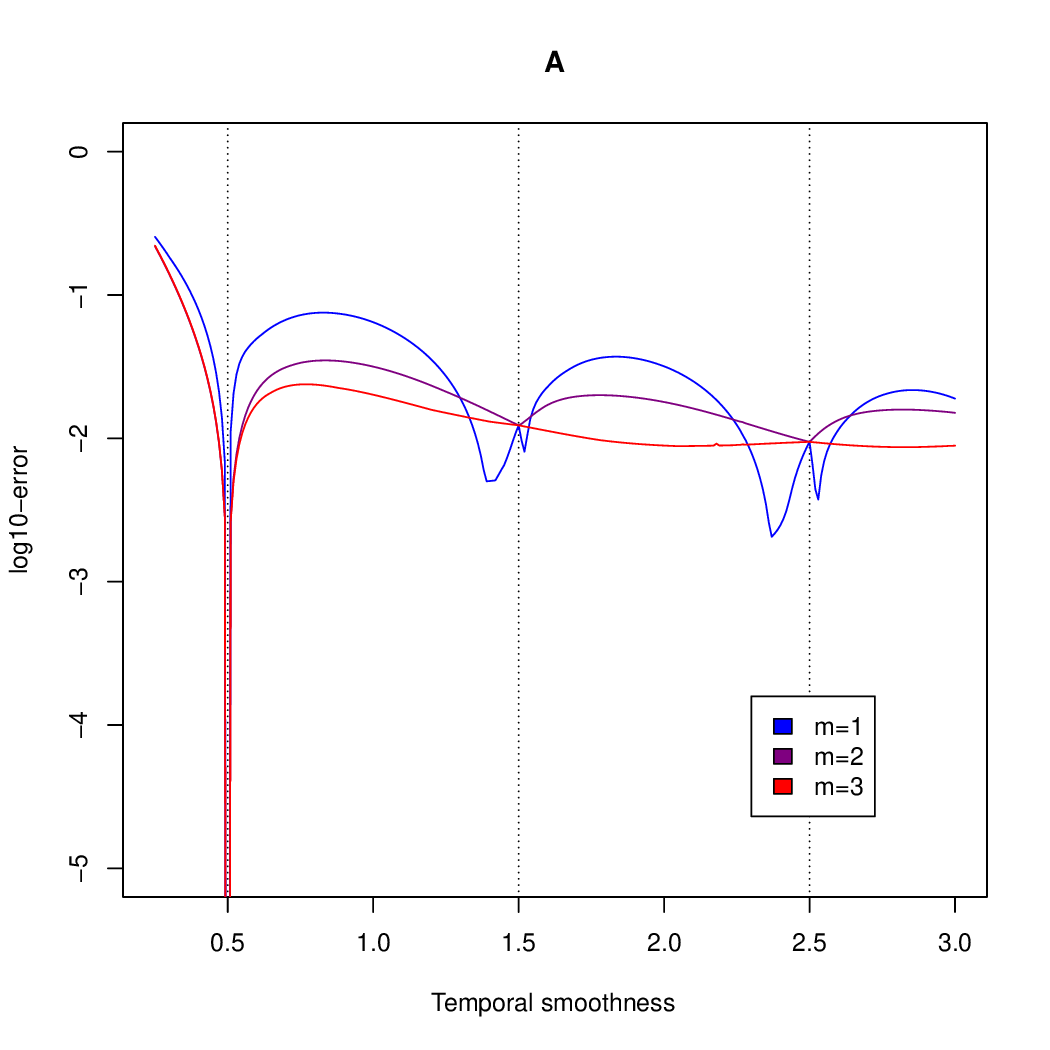}
    \includegraphics[width=0.45\textwidth]{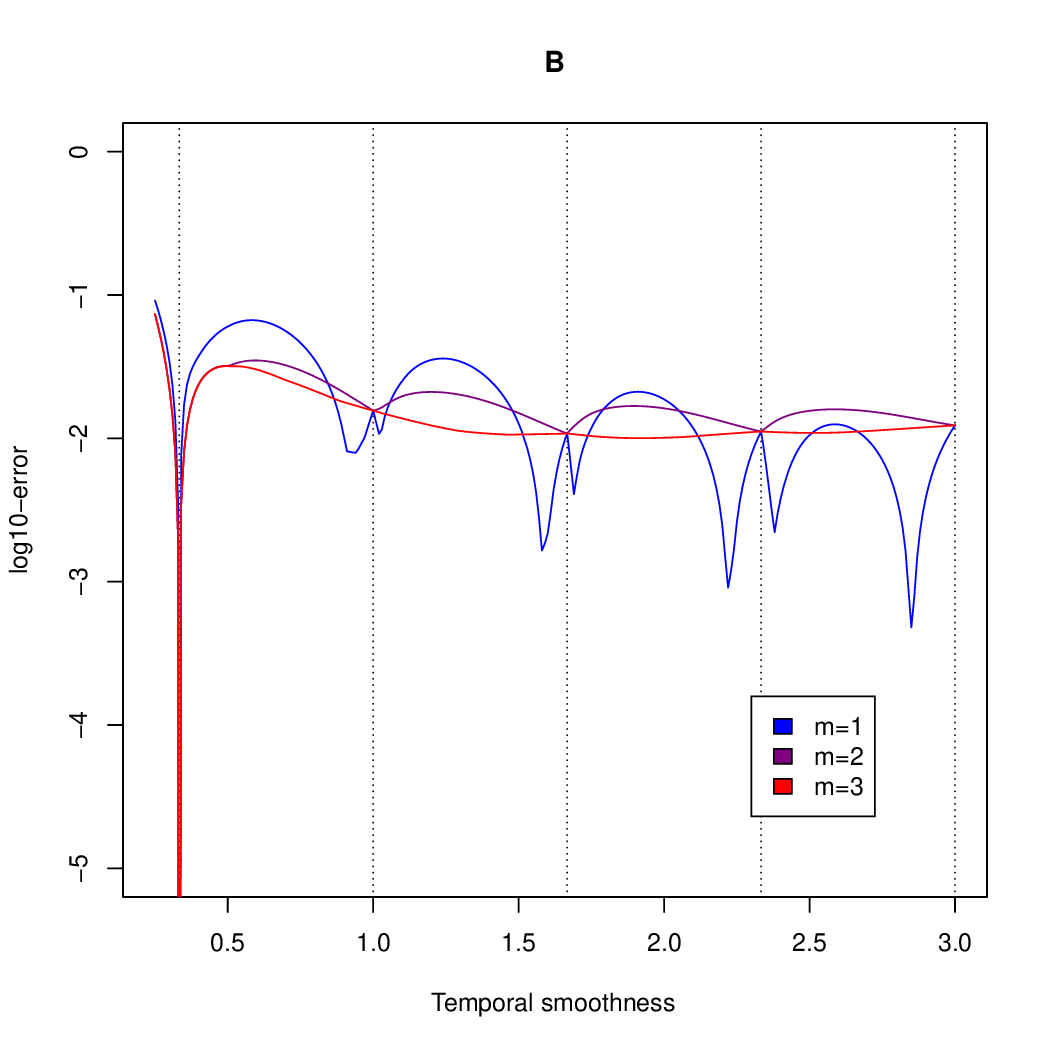}
    \includegraphics[width=0.45\textwidth]{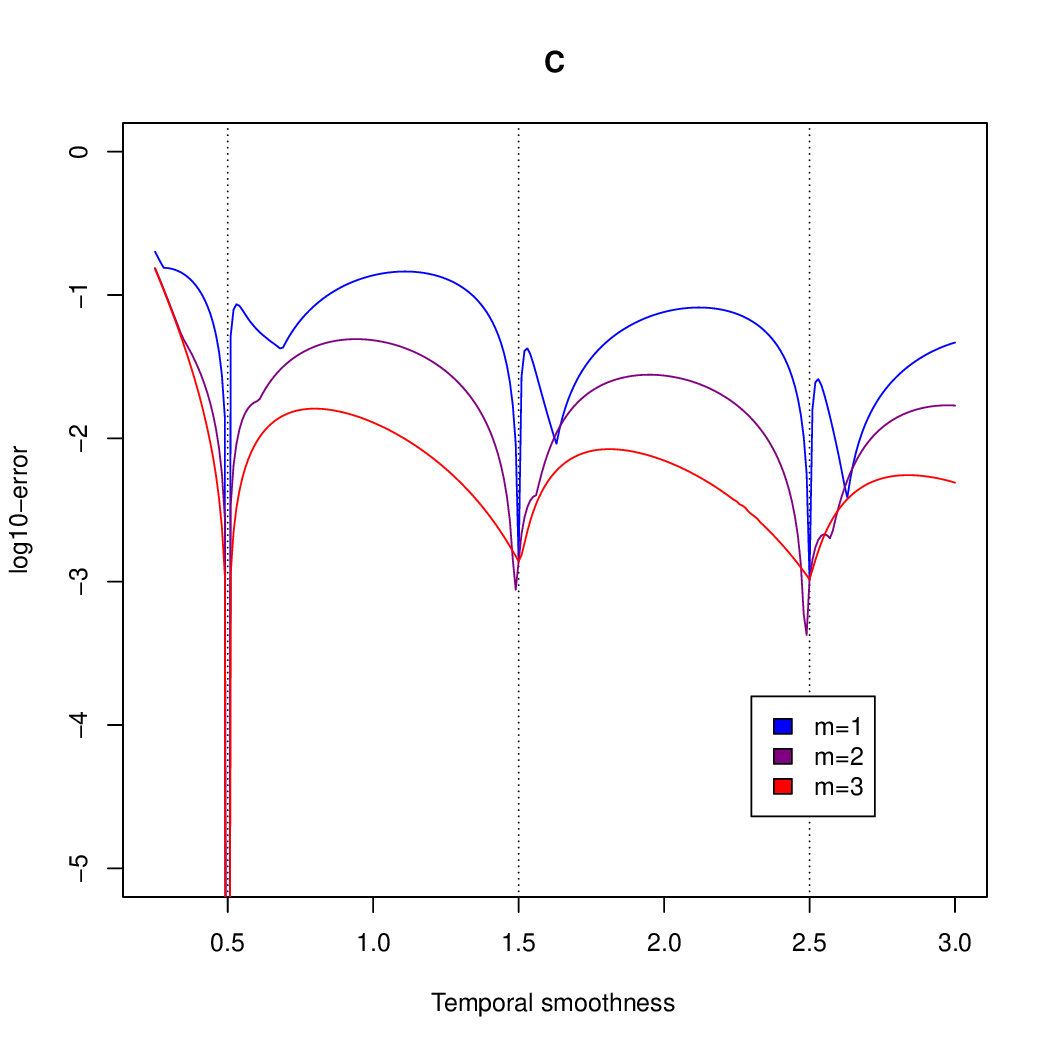}
    \includegraphics[width=0.45\textwidth]{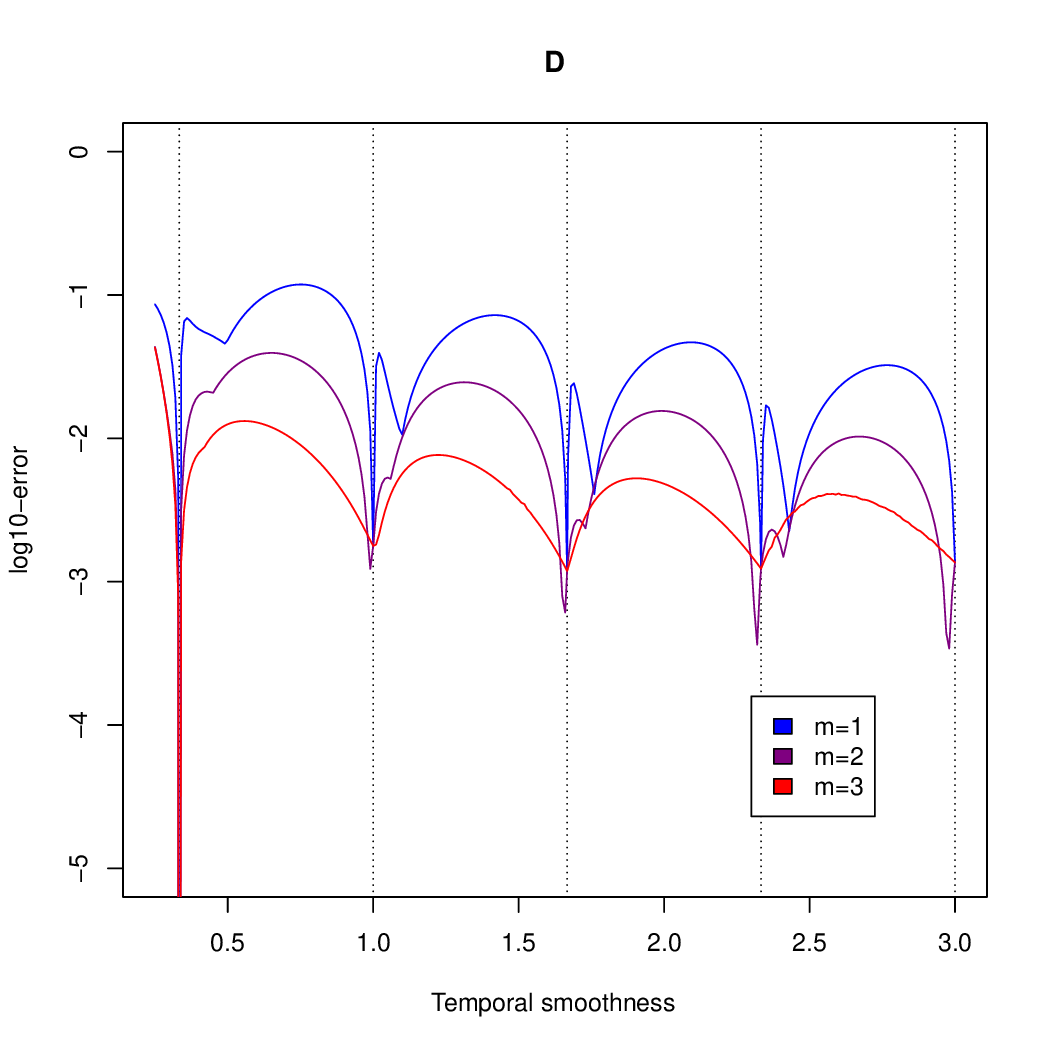}
    \centering
    \caption{Sup-norm $\log_{10}$-error in covariance function of the rational approximation using $M = 16^2$ spatial basis function. Different colours correspond to different orders of approximation. The x-axis represents the temporal smoothness $\nu_t$. The dashed vertical bars represent integer values of $\gamma$. (A) is $r_\mathrm{t} = 1.0$, $\beta_\mathrm{s} = 0.25$. (B) is $r_\mathrm{t} = 1.0$, $\beta_\mathrm{s} = 0.50$. (C) is $r_\mathrm{t} = 3.0$, $\beta_\mathrm{s} = 0.25$. (D) is $r_\mathrm{t} = 3.0$, $\beta_\mathrm{s} = 0.50$. All cases have $\nu_\mathrm{s} = 0.5$, $r_\mathrm{s} = 0.25$, and $\sigma = 1.0$.}
    \label{fig:rat_error}
\end{figure}

We numerically validate the accuracy of the covariance function of the (vector) ARMA process compared to the target covariance function \eqref{eq:process-covariance-function}. We consider three orders of the rational approximation $m = 1, 2, 3$, where the polynomials used in the rational approximation were chosen using the procedure in Section \ref{sec:disc:time} through the \textsf{R}-package \texttt{pracma} \cite{Borchers2011}. We fix $\nu_\mathrm{s} = 0.5$, $r_\mathrm{s} = 0.25$, and $\sigma = 1$, and create four cases by combinations of $r_\mathrm{t}\in\{1.0, 3.0\}$ and $\beta_\mathrm{s}\in\{0.25, 0.5\}$. For each case, we consider temporal smoothnesses $\nu_\mathrm{t} \in\{ 0.25,0.26, \ldots, 3.00\}$.

For simplicity, we consider a one-dimensional spatial domain $\mathcal{D} = (0,1)$ and a temporal domain $\mathcal{T}=[0,\lceil \frac{r_t}{0.05} \rceil]$. We fix the spatial resolution $M = 16^2$ and define the true covariance function, $C_u^{M}$, by truncating the sum in Equation \eqref{eq:process-covariance-function}. For each combination of parameters, we compute the supremum norm of the error between $C_u^{M}$ and the covariance function $\Tilde{C}_u^{M}$ of the full discretization with $M$ spatial basis functions and time step $\Delta t = 0.05$, i.e.
$$
    \text{Sup-Error}=\sup_{s_1,s_2\in\mathcal{D}, h\in\tilde{\mathcal{T}}} \left|C_u^{M}(s_1,s_2, h)-\tilde{C}_u^{M}(s_1,s_2,h)\right|,
$$
where  $\tilde{\mathcal{T}} = \{0, \Delta t, 2\Delta t, \ldots, \lceil \frac{r_t}{0.05} \rceil]\}$ since $\tilde{C}_{u}^{M}$ is only defined for discrete time lags. To compute (an approximation of) Sup-Error, we replace $\mathcal{D}$ with an equidistant grid $\tilde{\mathcal{D}} = \{0, 0.05, \ldots, 1\}$ with 21 grid locations.

For each spatial basis function, the temporal covariance function for the ARMA-approximation of the temporal processes associated to that basis function was computed using \verb|tacvfARMA| from the \textsf{R}-package \texttt{ltsa}, which uses Durbin-Levinson recursion; see \cite{McLeod2007} for details on this package. Approximations of higher order than $3$ proved to be unstable, i.e. the approximating ARMA-process would often be computationally non-stationary and the matrices used in the Durbin-Levinson recursion would therefore be ill-conditioned. Note, however, that this is only a problem when computing the correlation function of the ARMA-process using Durbin-Levinson recursion, it can still be used in Kalman filtering as discussed in the next section. 

The Sup-Error is plotted in Figure \ref{fig:rat_error}. In the test cases, the error is almost always below $0.1$, and often much smaller. Sup-Error generally declines with increasing rational order $m$. However, approximations using first-order rational approximations sometimes outperform higher order functions. As expected, the approximation 
generally performs better when $\gamma$ is near to integer values (see Figure \ref{fig:rat_error}), and exceptionally well close to $\gamma = 1$, where the correlation function is exact for the discrete time steps. 

\section{Hierarchical model and inference}\label{sec:ModelAndInf}
\subsection{Full model}\label{sec:hierFull}
We consider $n_{\mathrm{obs}} $ observations $y_1, \ldots, y_{n_{\mathrm{obs}}}$ at spatial locations $\boldsymbol{s}_1, \ldots \boldsymbol{s}_{n_\mathrm{obs}}\in \mathcal{D}$ and time points
$t_1, \ldots t_{n_{\mathrm{obs}}}\in\{0, \Delta t, \ldots, N\Delta t\}\subset[0,T]$. Typically, $\Delta t$ would be daily or hourly observation.  We assume a Gaussian observation model with conditionally independent observations
\[
    y_i|\mu(\boldsymbol{s}_i, t_i), \sigma_{\mathrm{obs}}^2 \sim \mathcal{N}(\mu(\boldsymbol{s}_i, t_i), \sigma_{\mathrm{obs}}^2), \quad i = 1, \ldots, n_{\mathrm{obs}},
\]
where 
$\mu$ is the underlying spatio-temporal signal.

We model $\mu$ as a GRF
\[
    \mu(\boldsymbol{s}, t) = \boldsymbol{g}(\boldsymbol{s},t)^\mathrm{T}\boldsymbol{\beta}+u(\boldsymbol{s}, t), \quad \boldsymbol{s}\in\mathcal{D}, \quad t\geq 0,
\]
where $\boldsymbol{g}$ is a $p_{\mathrm{cov}}$-dimensional vector of spatio-temporal covariates, $\boldsymbol{\beta}$ is a $p_{\mathrm{cov}}$-dimensional vector of coefficients, and
$u$ is a GRF with the covariance function given in Equation
\eqref{eq:process-covariance-function}. The parameters of the latent model and the observation model are $\boldsymbol{\theta} = (\beta_1, \ldots, \beta_{p_{\mathrm{cov}}}, r_\mathrm{s}, r_\mathrm{t}, \nu_\mathrm{s},\nu_\mathrm{t}, \beta_\mathrm{s}, \sigma, \sigma_{\mathrm{obs}})^\mathrm{T}$.

We have organised the observations in vectors $\boldsymbol{y}^n$ of observations available in time step $n$ for $n = 1, \ldots, N$. These vectors may be of differing lengths because a different number of measurement stations are available in each time step. Using the approximation in Section \ref{sec:disc},  gives rise to a linear dynamical model,
\begin{align}
    \boldsymbol{y}^n|\boldsymbol{x}^n, \boldsymbol{\theta} &\sim \mathcal{N}(\mathbf{G}_n\boldsymbol{\beta}+\mathbf{H}_n\boldsymbol{x}^n, \sigma_{\mathrm{obs}}^2\mathbf{I}) \notag \\
    \boldsymbol{x}^n &= \mathbf{F}\boldsymbol{x}^{n-1} + \boldsymbol{v}^n, \quad n = 1, \ldots, N, \label{eq:LinDynModel}
\end{align}
where $\mathbf{H}_n$ is the matrix mapping coefficients $\boldsymbol{x}^n$ to the locations observed in time step $n$, $\mathbf{G}_n$ is similarily the design matrix of covariates for the spatial locations observed at time step $n$, and $\boldsymbol{v}^1,\ldots,\boldsymbol{v}^N|\boldsymbol{\theta} \overset{\text{iid}}{\sim}\mathcal{N}(\boldsymbol{0}, \Sigma)$. The dependence on $\boldsymbol{\theta}$ is supressed in the notation $\Sigma$; see Section \ref{sec:disc} for a description of how $\Sigma$ varies as a function of $\boldsymbol{\theta}$. In practice, we set $\boldsymbol{x}^0 = 0$, but propagate the model $N_{\mathrm{init}}$ steps before starting to conditioning on observations to start approximately in the time-stationary distribution. We discuss the details in the implementation described in the next section.

\subsection{Parameter inference and prediction}
\label{sec:HierParInf}
Consider $\boldsymbol{\theta}$ fixed.
Assume initial covariance matrix $\mathbf{S}_0 = 0\cdot \mathbf{I}$. We propagate the dynamic model without observation for $N_{\mathrm{init}}$ steps to give
\[
    \mathbf{S}_n = \mathbf{F}\mathbf{S}_{n-1}\mathbf{F}^\mathrm{T} + \Sigma, \quad n = 1, \ldots, N_{\mathrm{init}}.
\]
This gives an approximate covariance matrix for the time-stationary
distribution given by $\mathbf{S}_{N_{\mathrm{init}}}$. The expectation is $\boldsymbol{0}$.

For $n = 1, \ldots, N$,  let $\tilde{\boldsymbol{m}}_n = \mathrm{E}[\boldsymbol{x}^n|\{\boldsymbol{y}^i\}_{i\leq n-1}]$, $\hat{\boldsymbol{m}}_n = \mathrm{E}[\boldsymbol{x}^n|\{\boldsymbol{y}^i\}_{i\leq n}]$, $\tilde{\mathbf{S}}_n = \mathrm{Cov}[\boldsymbol{x}^n|\{\boldsymbol{y}^i\}_{i\leq n-1}]$, and $\hat{\mathbf{S}}_n = \mathrm{Cov}[\boldsymbol{x}^n|\{\boldsymbol{y}^i\}_{i\leq n}]$. We compute the log-likelihood of $\boldsymbol{\theta}$ for the model in Equation \eqref{eq:LinDynModel} using a sequential Kalman filter, where we assume inital values set to
$\hat{\boldsymbol{m}}_0 = \boldsymbol{0}$ and $\hat{\mathbf{S}}_0 = \mathbf{S}_{N_{\mathrm{init}}}$.
We compute the distribution of $\boldsymbol{x}^n|\{\boldsymbol{y}^i\}_{i\leq n-1}$ by computing, respectively, the mean and
the covariance matrix
\[
    \tilde{\boldsymbol{m}}_n = \mathbf{F}\hat{\boldsymbol{m}}_{n-1}, \quad \text{and}\quad \tilde{\mathbf{S}}_{n} = \mathbf{F}\hat{\mathbf{S}}_{n-1}\mathbf{F}^\mathrm{T} + \Sigma.
\]
Then we compute the Kalmain gain and one-step ahead point-forecasts
\begin{equation*}
    \mathbf{K}_n = \Tilde{\mathbf{S}}_n \mathbf{H}_n^\mathrm{T}(\mathbf{H}_n\tilde{\mathbf{S}}_n\mathbf{H}_n^\mathrm{T}+\sigma_{\mathrm{obs}}^2\mathbf{I})^{-1}, \quad \quad\text{and}\quad
    \tilde{\mathbf{y}}^n = \mathbf{G}_n\boldsymbol{\beta} + \mathbf{H}_n \tilde{\mathbf{m}}_n, \,
\end{equation*}
and find the distribution of $\boldsymbol{x}^n|\{\boldsymbol{y}^i\}_{i\leq n}$ by computing, respectively, the mean and the covariance matrix
\begin{equation*}
    \hat{\boldsymbol{m}}_n = \Tilde{\boldsymbol{m}}_n + \mathbf{K}_n (\mathbf{y}^n - \tilde{\mathbf{y}}^n), \quad \text{and}\quad
    \Hat{\mathbf{S}}_n = \Tilde{\mathbf{S}}_n - \mathbf{K}_n \mathbf{H}_n \Tilde{\mathbf{S}}_n \, .
\end{equation*}
This is done iteratively for $n = 1, \ldots, N$. The log-likelihood is then computed by
\begin{equation*}
    \ell(\boldsymbol{\theta}; \boldsymbol{y}^1, \ldots, \boldsymbol{y}^N) = \sum_{n = 1}^{N} \log(\phi(\mathbf{y}^n; \tilde{\mathbf{y}}^n, \mathbf{A}_n)) \, ,
\end{equation*}
where $\mathbf{A}_n=\mathbf{H}_n\tilde{\mathbf{S}}_n\mathbf{H}_n^\mathrm{T}+\sigma_{\mathrm{obs}}^2\mathbf{I}$, and $\phi(\cdot; \boldsymbol{\mu}, \Sigma)$ is the multivariate Gaussian probability density function of $\mathcal{N}(\boldsymbol{\mu}, \Sigma)$. Let $n_{\mathrm{obs}}^n$ be the number of observations at time step $n = 1, \ldots, N$, and let $b$ be the dimension of the latent state vector. When discussing the complexity, we assume that the order of the rational approximation is fixed and that $b$ is proportional to the number of spatial basis functions $M$. Then the above computations have a complexity $\mathcal{O}((n_{\mathrm{obs}}^n)^3+M(n_{\mathrm{obs}}^n)^2+ M^2n_{\mathrm{obs}}^n)$ for time step $n$. We assume that $b < n_{\mathrm{obs}}^n$. So using the Woodbury matrix identity would result in a better complexity $\mathcal{O}(M^3+M^2(n_{\mathrm{obs}}^n)+ M(n_{\mathrm{obs}}^n)^2)$. However, the quadratic complexity in $n_{\mathrm{obs}}^n$ would remain since the observation matrix $\mathbf{H}_n$ is dense and the Woodbury matrix identity would involve the term $\mathbf{H}_n^\mathrm{T}\mathbf{H}_n$. Therefore, a better strategy is to update sequentially based on the measurements $y_1^n, y_2^n, \ldots, y_{n_{\mathrm{obs}}^n}^2$ in the vector $\boldsymbol{y}^n$ instead of doing a joint update. This gives a complexity $\mathcal{O}(M^2(n_{\mathrm{obs}}^n))$. Details on the sequential Kalman filter are found in Section \ref{suppl:seq-Kalman-filter} in the Supplementary Materials. 

In the optimization of the log-likelihood, we used the reparametrised parameters $\log(\frac{\nu_t - 0.25}{2.5}) - \log(1 - \frac{\nu_t - 0.25}{3})$, $\log(\nu_s - 0.25)$, $\frac{1}{3}\log( -\frac{2\beta_s}{\beta_s - 1})$, $\log(r_t - 0.005)$, $\log(r_s - 0.005)$, $\log(\sigma - 0.005)$, and $\log(\sigma_{\mathrm{obs}})$. Note that this parametrisation restricts the domain of the parameters to $0.25 < \nu_t < 3.25$, $\nu_\mathrm{s}>0.25$, and $r_\mathrm{s}, r_\mathrm{s}, \sigma_{\mathrm{obs}}>0.005$. These restrictions were made to keep the parameter inference away from extreme values where we encounter numerical instabilities. The log-likelihood was optimised using the function \texttt{optimParallel} from the \textsf{R}-package \texttt{optimParallel}, running on $9$ cores. This is a parallelised implementation of the L-BFGS-B method, see \cite{Gerber2018} for details. Unless otherwise stated, we use $M_{inf}^2 = 8^2 = 64$ basis functions for inference.

After fixing the parameters $\boldsymbol{\theta} = \hat{\boldsymbol{\theta}}_{\mathrm{MLE}}$, the Kalman filter produces predictive distributions $\boldsymbol{x}^n|\boldsymbol{y}^1, \ldots, \boldsymbol{y}^n$ for filtered estimates  through mean vector $\hat{\mathbf{m}}_n$ and covariance matrix $\hat{\mathbf{S}}_n$, and predictive distributions $\boldsymbol{x}^{n+1}|\boldsymbol{y}^1, \ldots, \boldsymbol{y}^n$ for one-step ahead forecasts  through mean vector $\Tilde{\mathbf{m}}_{n+1}$ and covariance matrix $\Tilde{\mathbf{S}}_{n+1}$. These predictive distributions can be used to compare the predictions under different models. If it is necessary to predict new observations, one would transform using basis functions and add measurement noise variance $\sigma_{\mathrm{obs}}^2$.

\section{Simulation study}\label{sec:SimStudy}

\subsection{Motivation and scenarios}
\label{sec:SimStudy:Mot}
We aim to assess the ability to estimate parameters and to make predictions using the proposed model. The main interest lies in the behaviour for different non-separability parameters and different signal-to-noise ratios. We choose the spatial domain $\mathcal{D} = [0,1]^2$ and a temporal domain of length $T = 45$, we fix the true parameters $\nu_t = 1.0$, $\nu_s = 1.0$, $r_t = 10.0$, $r_s = 1.0$, and $\sigma = 3.5$. We then create four scenarios for the remaining two parameters: LL is $\beta_\mathrm{s} = 0.25$ and $\sigma_{\mathrm{obs}} = 0.35$, LH is $\beta_\mathrm{s} = 0.25$ and $\sigma_{\mathrm{obs}} = 0.75$, HL is $\beta_\mathrm{s} = 0.75$ and $\sigma_{\mathrm{obs}} = 0.35$, and HH is $\beta_\mathrm{s} = 0.75$ and $\sigma_{\mathrm{obs}} = 0.75$, i.e. LL and LH are low non-separability with $95.6\%$ and $99.0\%$ signal-to-noise ratio (SNR), respectively, and HL and HH are high non-separability with $95.6\%$ and $99.0\%$ SNR, respectively. Here SNR is measured by $\sigma^2/(\sigma^2+\sigma_{\mathrm{obs}}^2)$.

When simulating the truth, we consider $N = 45$ time steps, denoted by $t=1,2\ldots,45$, and use $M_{\mathrm{sim}}=32^2=4096$ spatial basis functions. For each basis function we construct the $45 \times 45$ covariance matrix $\Sigma_k$ of $(c_k(1), ..., c_k(N))^\mathrm{T}$ by approximating (\ref{process-correlation-function}) by numerical quadrature using the R-function \verb|integrate|. 
We then draw a realisation of the latent spatio-temporal process, and draw $250$ observation locations from a uniform distribution within the square $[0.2, 0.8]^2$ by computing (\ref{eq:spectral-approximation}) and adding i.i.d. Gaussian measurement error, i.e. we use the model described in Section \ref{sec:hierFull}, but do not include any covariates. We create $R = 30$ replicates, and denote the true values of the simulated latent states by $\boldsymbol{x}^{n,r}$ and the observations $\boldsymbol{y}^{n,r}$ for time steps $n = 1,\ldots, 45$ and replicates $r = 1, \ldots, 30$.

\subsection{Candidate models and evaluation}
The model in Section \ref{sec:hierFull} without covariates has parameters $\boldsymbol{\theta}=(r_s, r_t, \nu_s, \nu_t, \beta_s, \sigma, \sigma_\mathrm{obs})^\mathrm{T}$. For estimation, we used the spatial resolution $M_{\mathrm{inf}} = 8^2 = 64$ and the same temporal resolution as the truth. We estimate these seven parameters using maximum likelihood inference as described in Section \ref{sec:HierParInf}. The model in which all seven parameters are estimated is denoted the 'Full' model.

As a baseline for comparison, we also introduce a reduced, separable model, which we denote the 'Simple' model. In the reduced model, we fix $\nu_\mathrm{t}=0.5$, $\nu_\mathrm{s}=1.0$ and $\beta_\mathrm{s}=0$, and estimate the remaining four parameters. The choice of $\nu_\mathrm{t}=0.5$ and $\beta_\mathrm{s}=0$ is motivated by the the fact that this yields a simple AR(1) process in time driven by coloured noise, whereas $\nu_\mathrm{s}=1.0$ was chosen to match the true value of $\nu_\mathrm{s}$. The model is discretized in the same way as 'Full', but does not include any rational approximation as it corresponds to using $\gamma = 1$ and $\alpha = 0$ in Equation \eqref{eq:SPDE:noise}. Note, this means that we cannot achieve temporal smoothness $\nu_\mathrm{t} = 1$ and that $\beta_\mathrm{s}$ is not estimated.

We use the maximum likelihood estimates $\hat{\boldsymbol{\theta}}_{\mathrm{MLE}}^r$, for realization $r=1,\ldots,R$, as the predicted parameter values, and compare parameter predictions through violin plots that illustrate the bias and spread in parameter predictions across realizations.

For each replicate $r$ we simulate a new base truth $\boldsymbol{x}^{n,r}_{\text{pred}}$ and a new dataset $\boldsymbol{y}^{n,r}_{\text{pred}}$, using the same method as described in Section \ref{sec:SimStudy:Mot}. We apply the Kalman filter to this new dataset, with the inferred parameters $\hat{\boldsymbol{\theta}}_{\mathrm{MLE}}^r$, i.e. the test dataset is independent of the training dataset. Our predictor for $u_{M_{\mathrm{sim}}}^{n,r}(\boldsymbol{s})$ is the posterior mean of $u_{M_{\mathrm{inf}}}^{n,r}(\boldsymbol{s})$ at each location $\boldsymbol{s}\in\mathcal{D}$ for time point $t$ conditional on estimated parameters. For filtering and one-step ahead forecasts, we compute
$$\text{RMSE} = \sqrt{\frac{1}{R}\sum_{r = 1}^{R} \int_\mathcal{D}\left(u_{M_{\mathrm{sim}}}^{n,r}(\boldsymbol{s})-\widehat{u_{M_{\mathrm{inf}}}^{n,r}(\boldsymbol{s})}\right)^2\mathrm{d}\boldsymbol{s}}=\sqrt{ \frac{1}{R} \sum_{r = 1}^{R} \sum_{k = 1}^{M_{\mathrm{sim}}} \left(c_k^{n,r} - \widehat{c_k^{n,r}} \right)^2 },$$
where $\widehat{c_k^{n,r}}$ is the posterior mean and we interpret $\widehat{c_k^{n,r}} = 0$ for $k = M_{\mathrm{inf}},M_{\mathrm{inf}}+1, \ldots, M_{\mathrm{sim}}$. For CRPS, we use an equidistant grid $\{\boldsymbol{s}_{i,j}\}_{i,j \in \{1,2,...,21\}}$ of $21 \times 21$ nodes on the unit square, going from 0 to 1.
$$\text{CRPS} := \frac{1}{21^2 R} \sum_{r = 1}^{R} \sum_{i = 1}^{21} \sum_{j = 1}^{21} \text{crps}\left(u_{M_{\mathrm{sim}}}^{n,r}(\boldsymbol{s}_{i,j}),\mathcal{N}\left(\widehat{u_{M_{\mathrm{inf}}}^{n,r}(\boldsymbol{s}_{i,j})}, \mathrm{Var}(\widehat{u_{M_{\mathrm{inf}}}^{n,r}(\boldsymbol{s}_{i,j})} )\right)\right),$$
where the first argument of $\mathrm{crps}$ is the true value, and the second argument is the predictive Gaussian distribution.

\subsection{Results}

\begin{figure}
    \centering
    \includegraphics[width=0.3\linewidth]{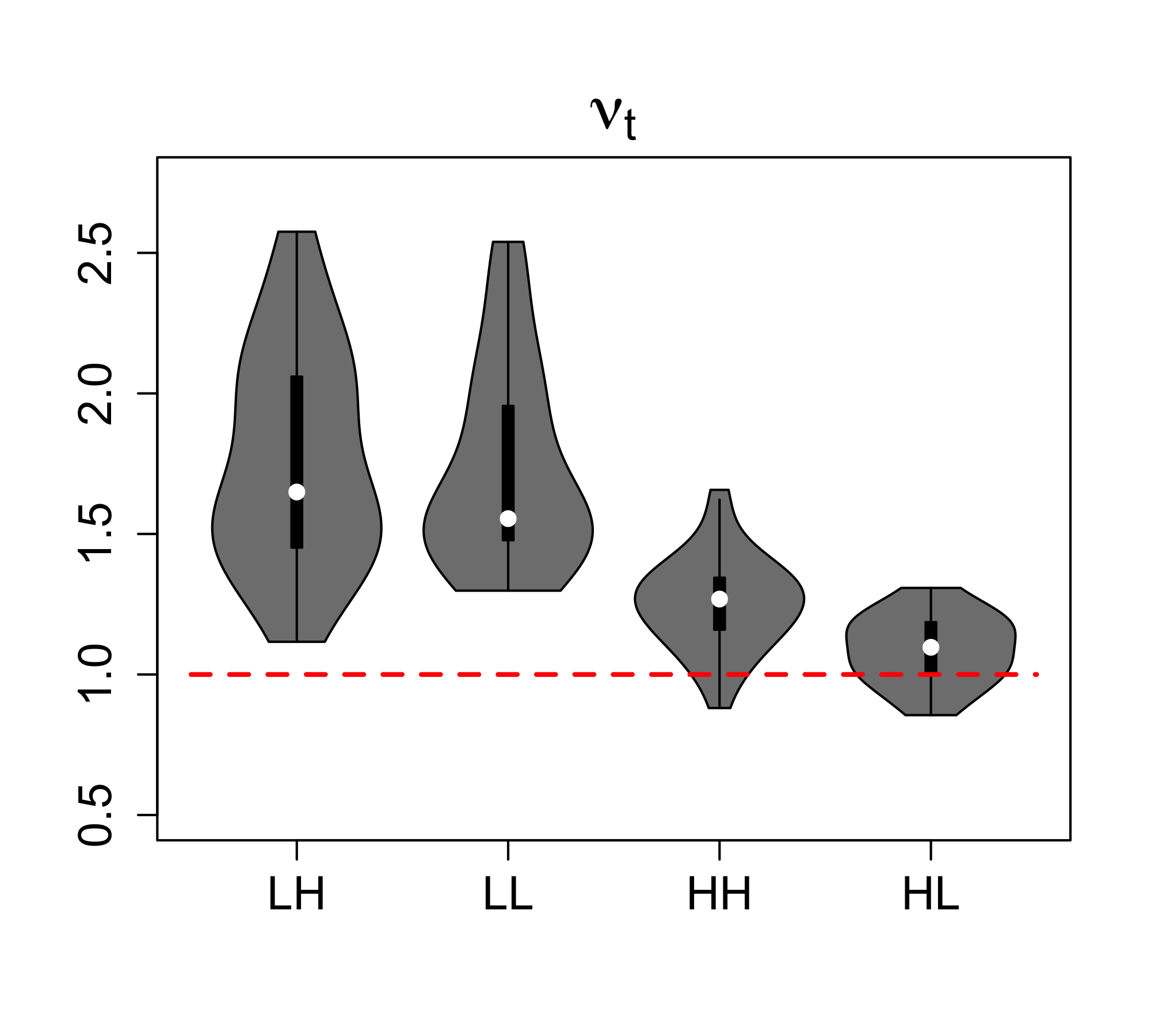}
    \includegraphics[width=0.3\linewidth]{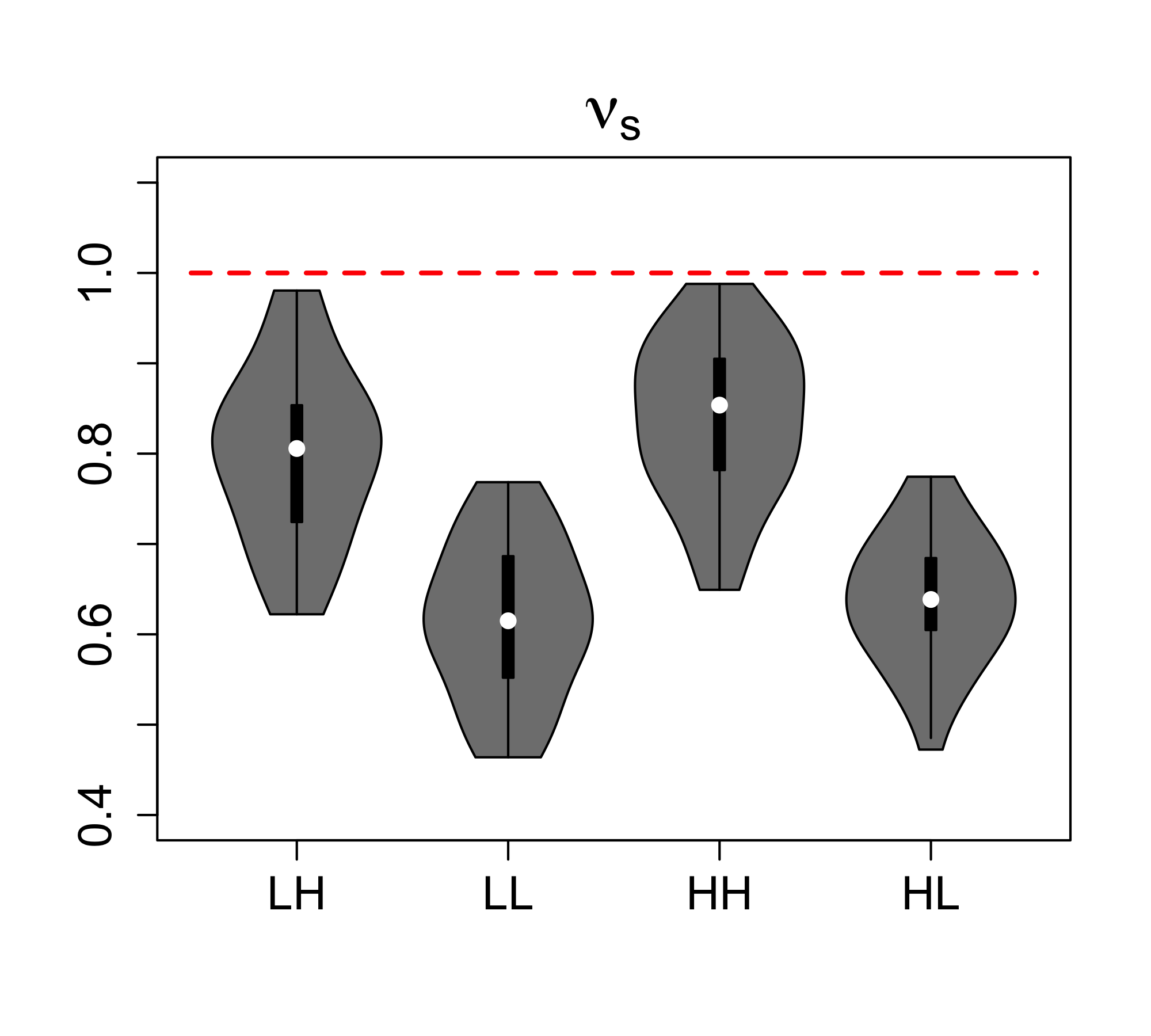}
    \includegraphics[width=0.3\linewidth]{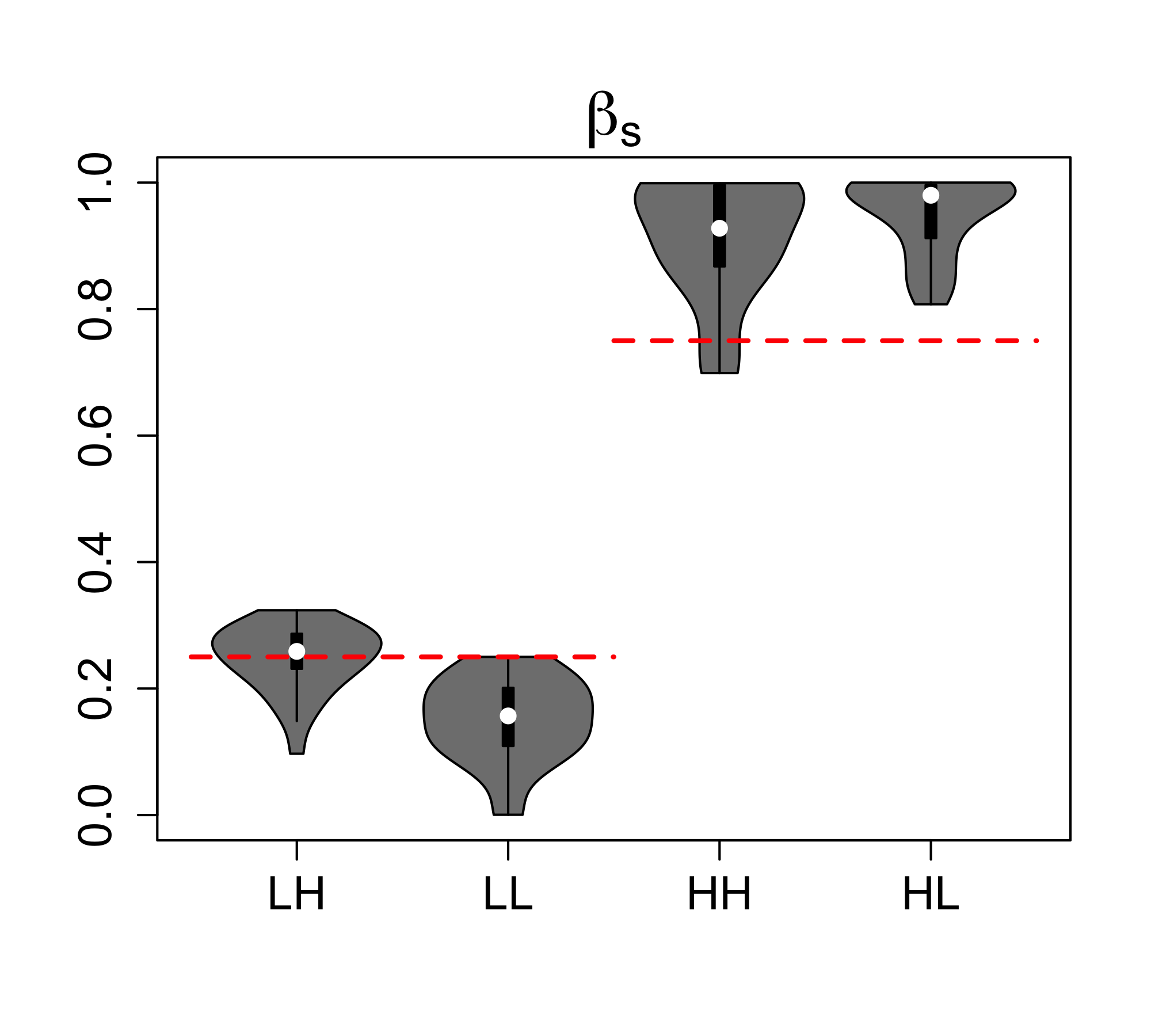}
    \includegraphics[width=0.45\linewidth]{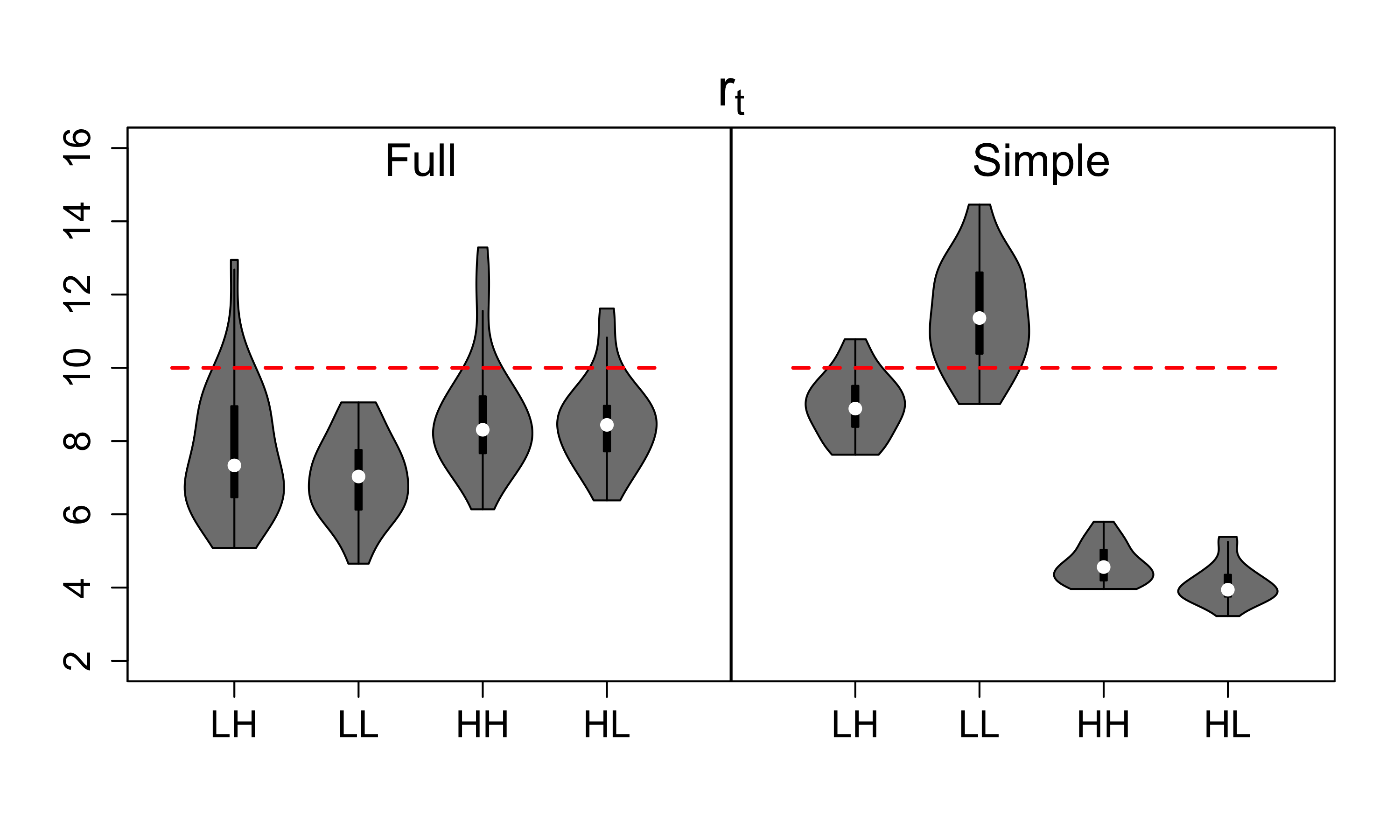}
    \includegraphics[width=0.45\linewidth]{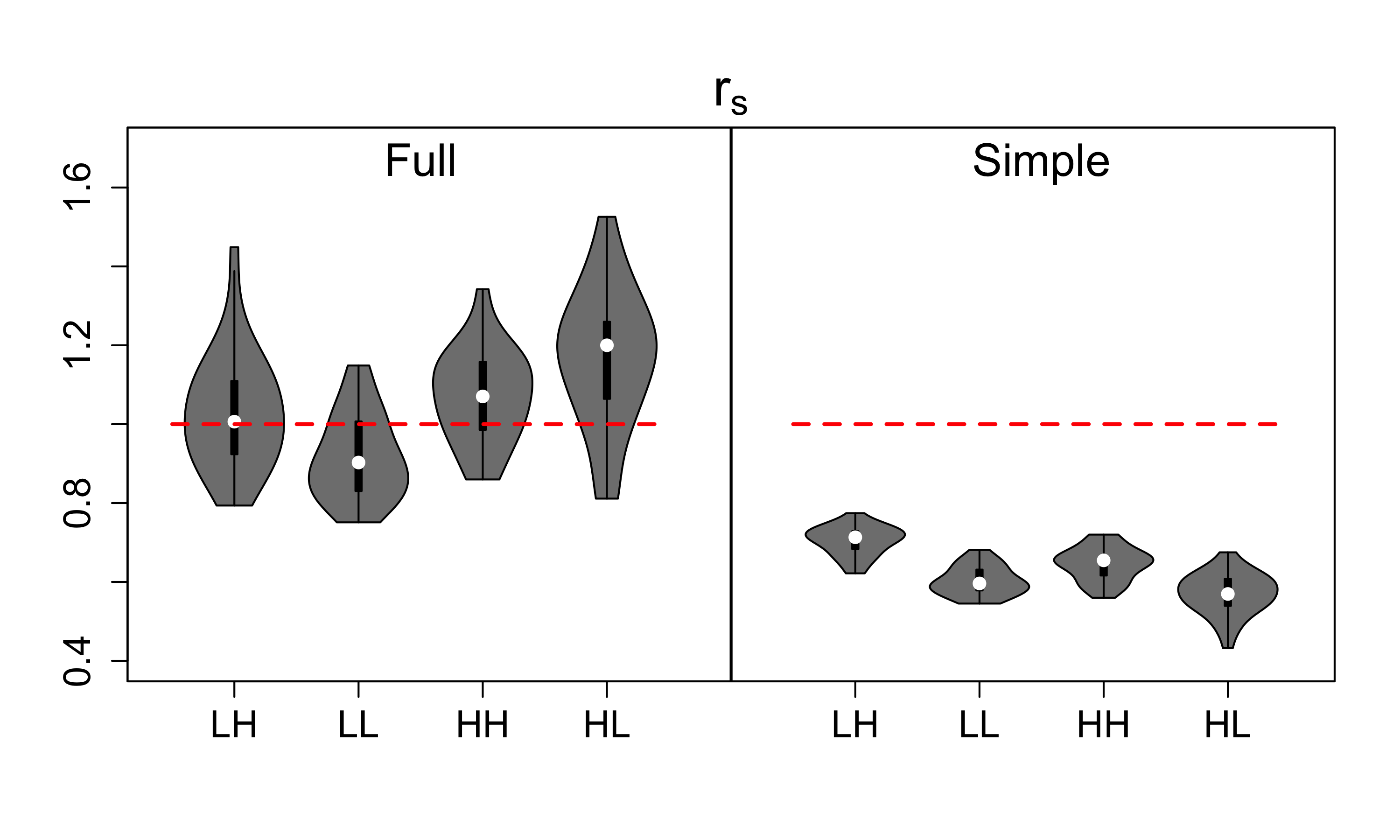}
    \includegraphics[width=0.45\linewidth]{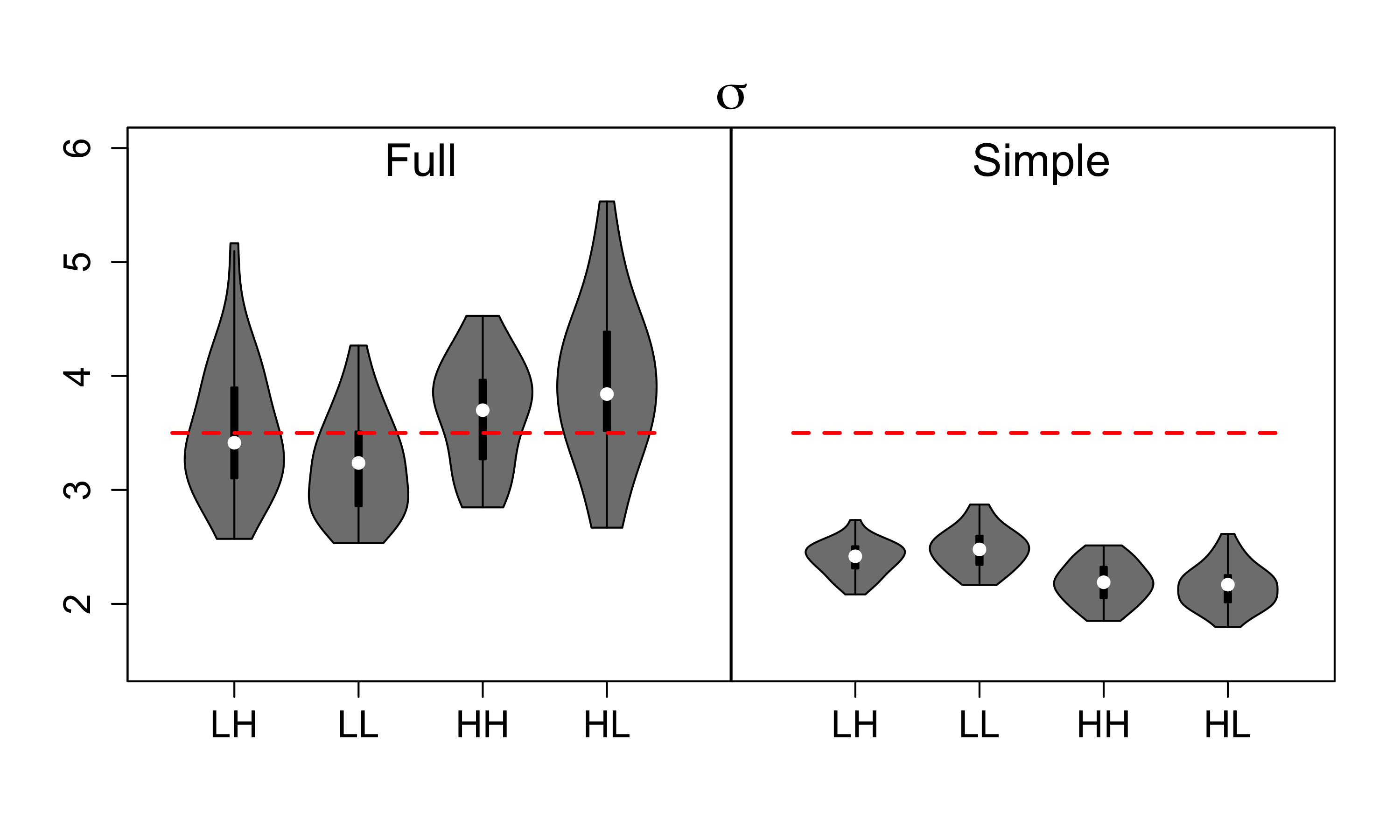}
    \includegraphics[width=0.45\linewidth]{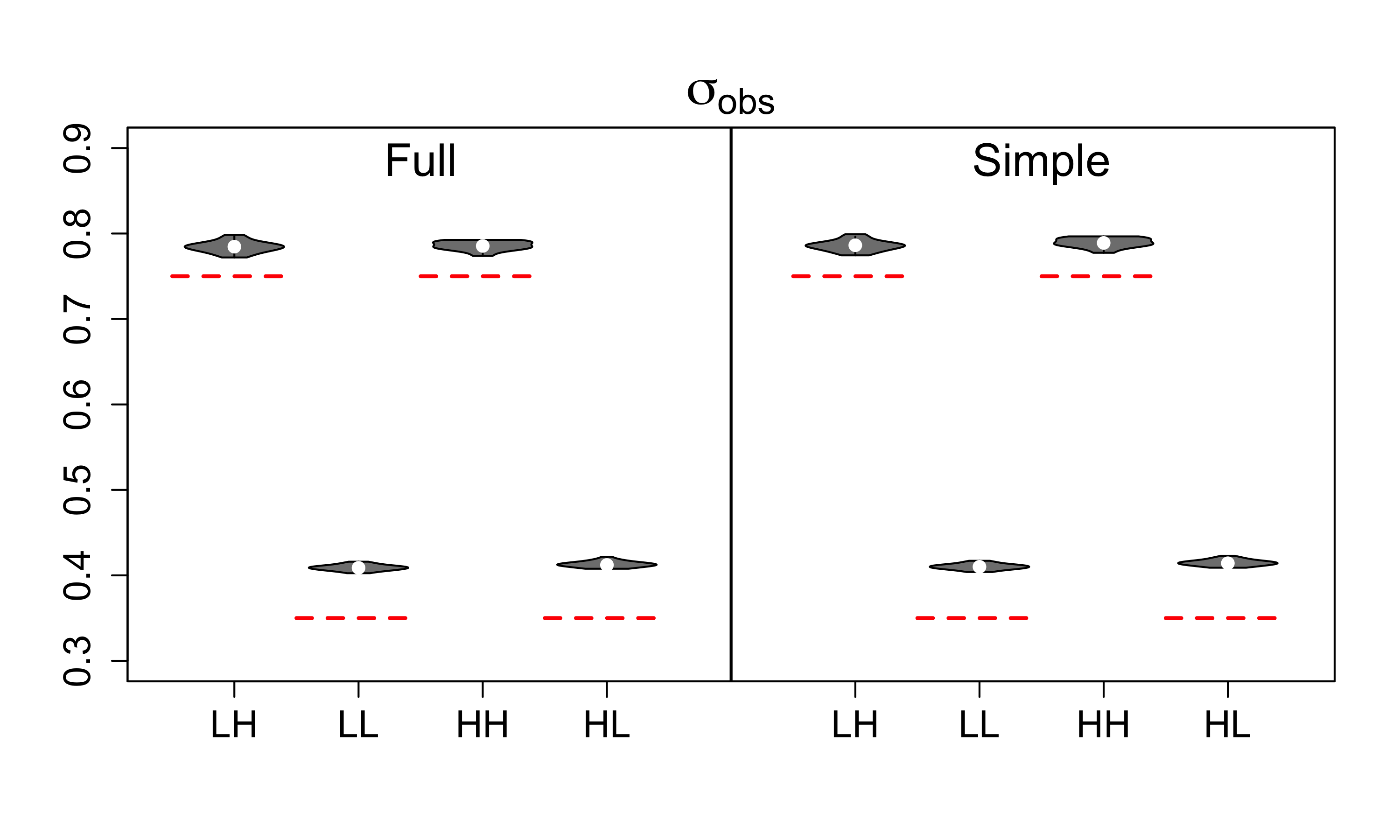}

    \caption{Violin-plots of the estimated parameters over the $30$ replicates, for both the 'Full' model and the 'Simple' model. Top row only includes 'Full' model since these parameters are not estimated in the 'Simple' model. The true value of the parameter is marked in red.\label{fig:parameter-estimates}}
\end{figure}

\begin{figure}
    \centering
    \begin{tabular}{>{\centering\arraybackslash}m{0.04\textwidth} >{\centering\arraybackslash}m{0.45\textwidth} >{\centering\arraybackslash}m{0.45\textwidth}}
         & Filtered predictions & Forecasted predictions \\ 
         \rotatebox[origin=c]{90}{RMSE} & \includegraphics[width=\linewidth]{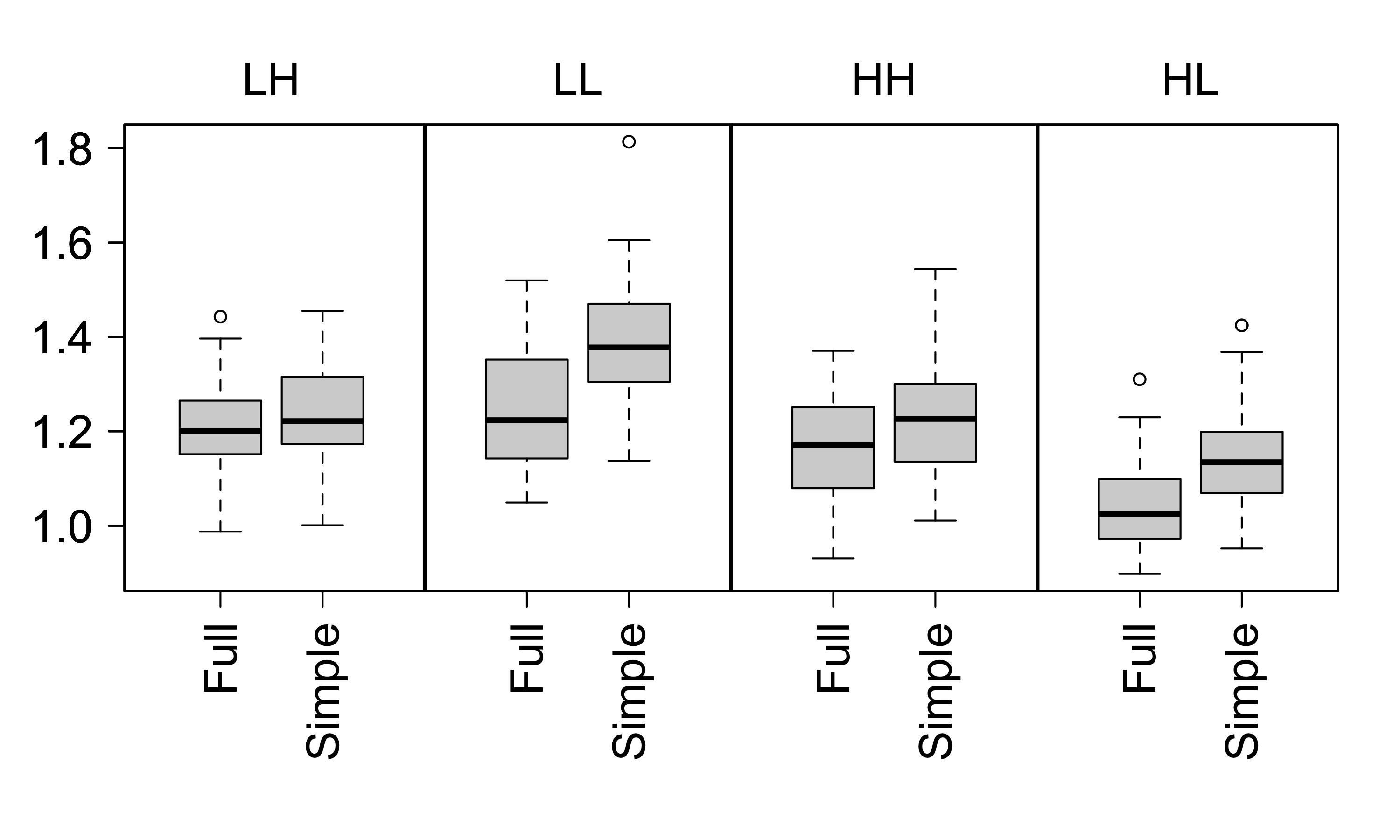} & \includegraphics[width=\linewidth]{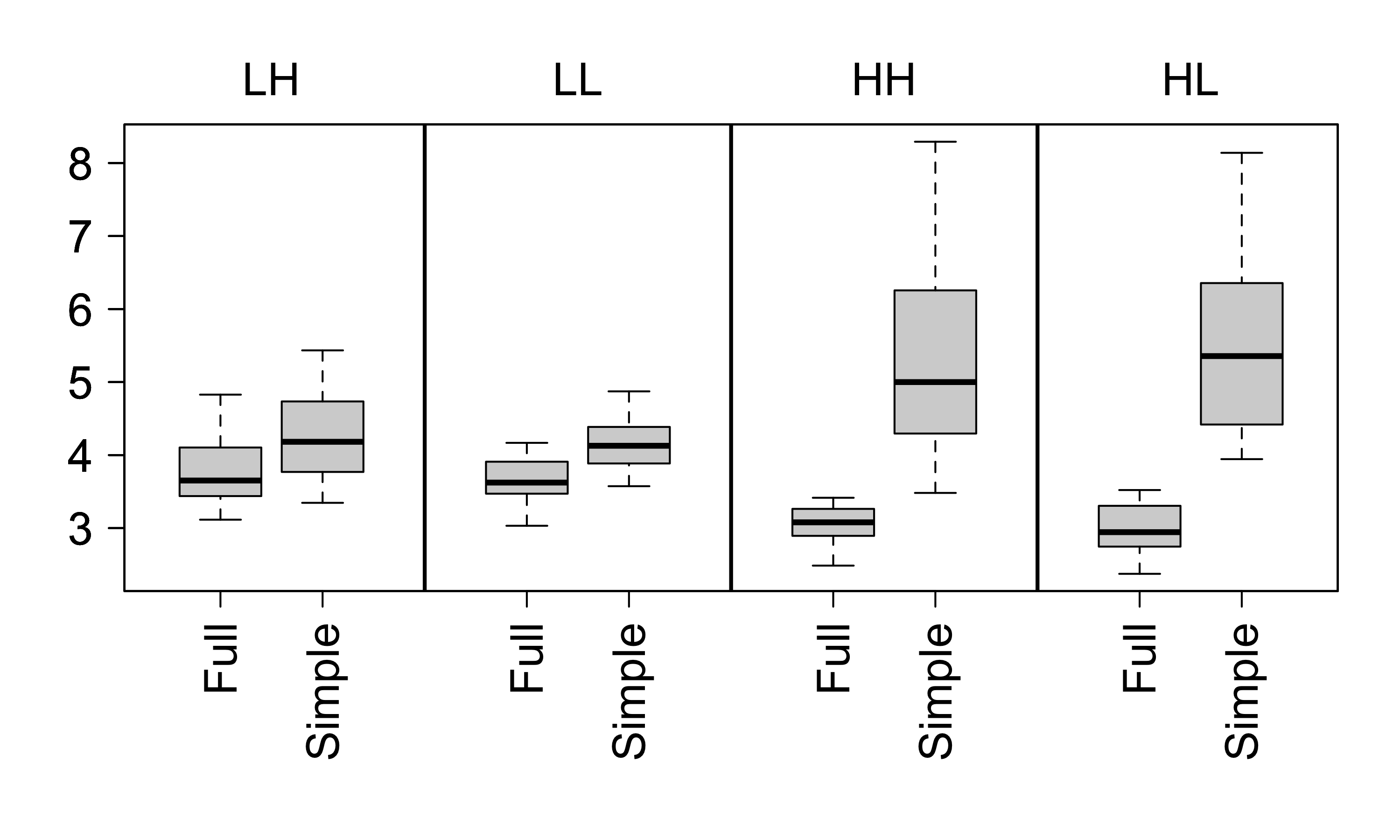} \\
         \rotatebox[origin=c]{90}{CRPS} & \includegraphics[width=\linewidth]{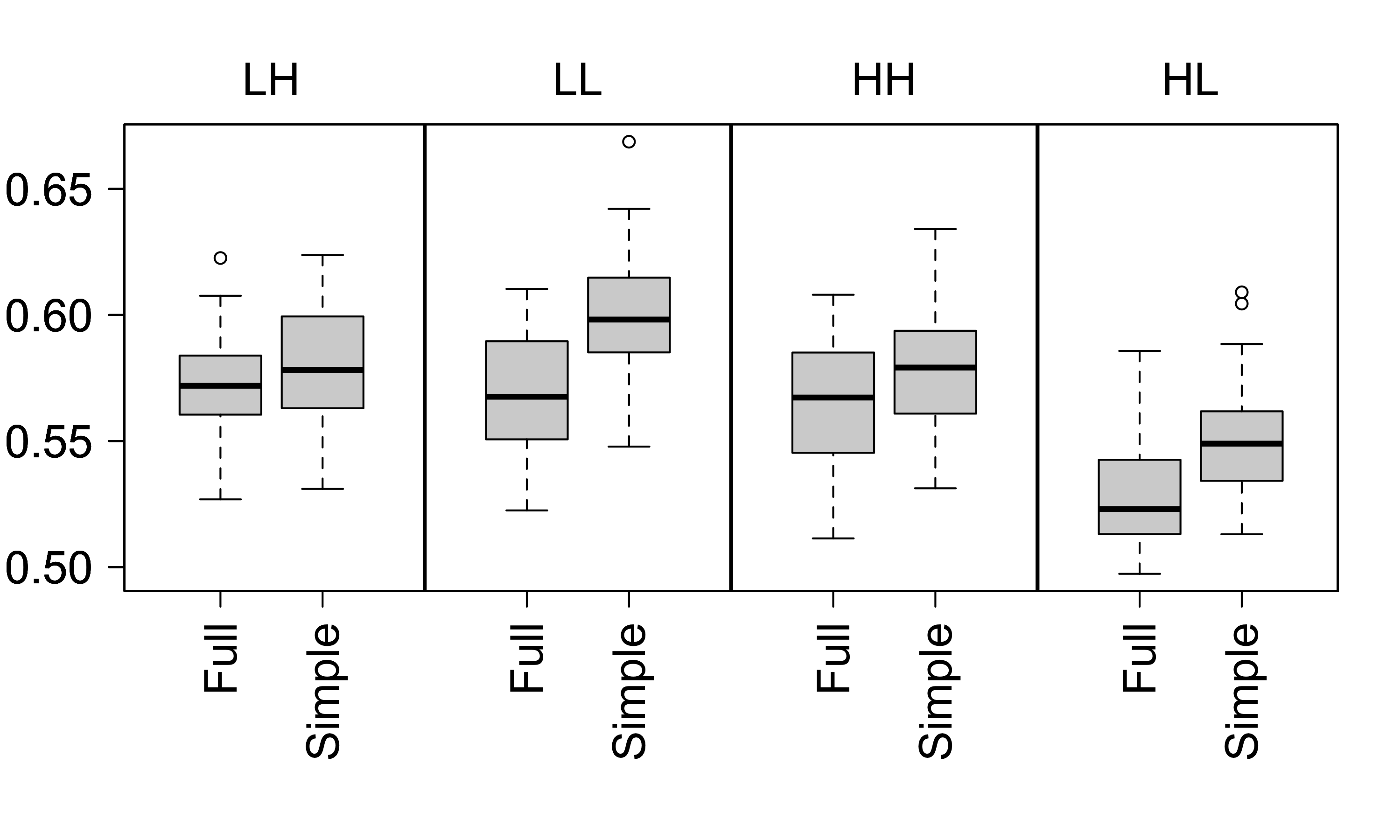} & \includegraphics[width=\linewidth]{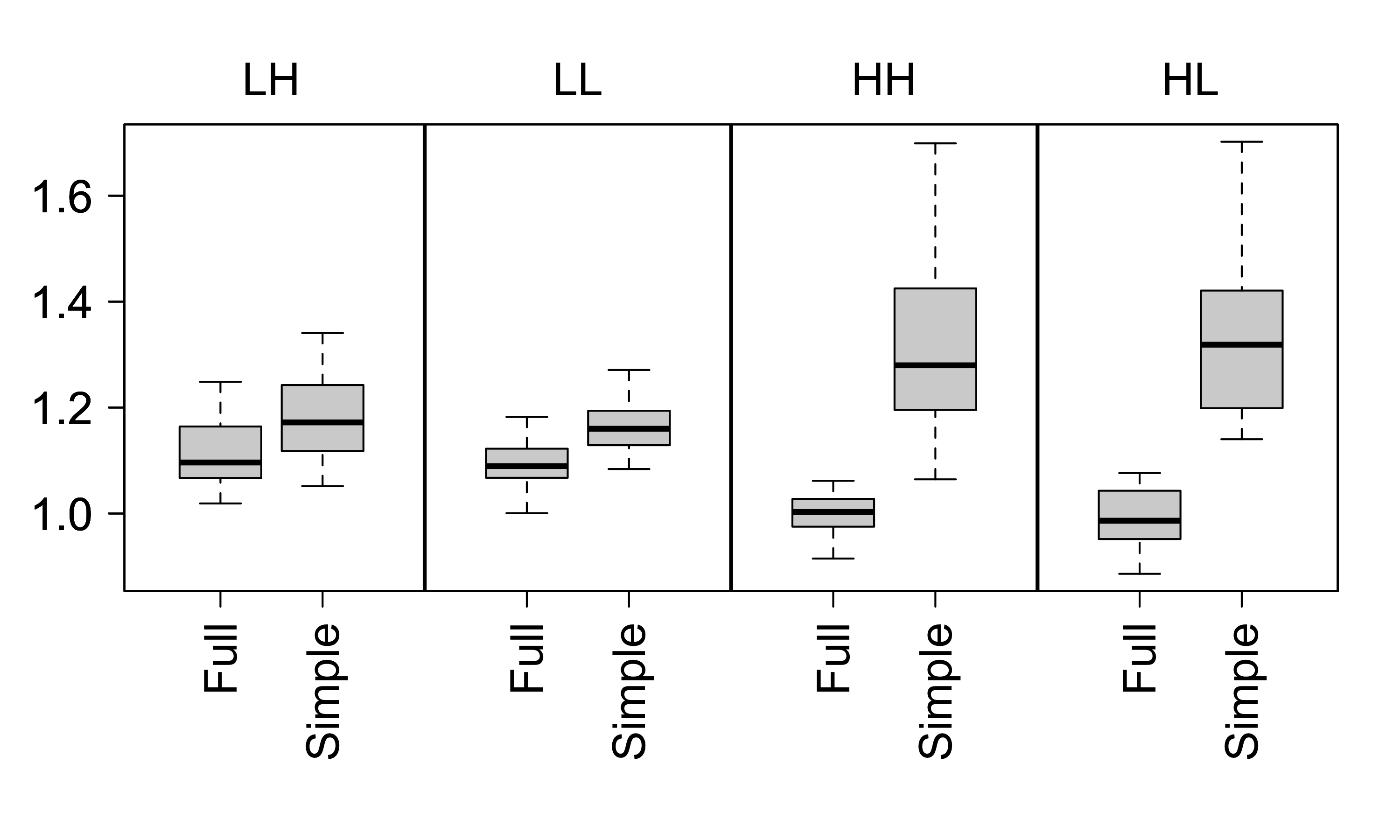} \\
    \end{tabular}
    \caption{Box-plots of RMSE and CRPS of both filtered and forecasted predictions on new datasets with $30$ replicates, for both the 'Full' model and the 'Simple' model.\label{fig:sim-predictions}}
\end{figure}

Violin plots of the inferred parameters over the 30 realizations are shown in Figure \ref{fig:parameter-estimates}. A key observation for the 'Full' model is that we separate well between the scenarios of low and high non-separability as estimated by $\beta_\mathrm{s}$. Further, there is a high spread in the estimates for the smoothness parameters $\nu_\mathrm{s}$ and $\nu_\mathrm{t}$, and we observe overestimation of temporal smoothness and underestimation of spatial smoothness. The estimate of $\nu_\mathrm{s}$ is better for higher measurement error variance, while the estimates for $\nu_\mathrm{t}$ are better for higher non-separability. The spatial range $r_\mathrm{s}$ is estimated well, and the temporal range $r_\mathrm{t}$ is slightly underestimated. Estimation of these two range parameters behave similarily under the different scenarios. The standard deviation of the latent spatio-temporal process $\sigma$ is estimated well across the scenarios, but there is some overestimation of the measurement error standard devation $\sigma_{\mathrm{obs}}$. The latter can be explained by the fact that the truth was simulated under higher spatial resolution than the model used for inference. This extra spatial variation is likely captured by $\sigma_{\mathrm{obs}}$.

Under the 'Simple' model, we perform similar in terms of estimating $\sigma_{\mathrm{obs}}$, but $\sigma$ is underestimated. The spatial range $r_{\mathrm{s}}$ has high bias in all scenarios, and $r_\mathrm{t}$ is underestimated when the non-separability is high. Note that discrepancies are expected as the 'Simple' model is using an incorrect temporal smoothness, which follows from the AR(1) structure, and does not allow for non-separability.

Variation in RMSE and CRPS for filtering and one-step ahead forecasting on the separate test data sets using the estimated parameters from the training datasets are shown in Figure \ref{fig:sim-predictions}. As expected we are better able to discriminate between the models in the scenarios with higher signal-to-noise ratios. Interestingly, low or high non-separability does not seem to make much of a difference for the predictive ability for filtering, but the 'Full' model is considerably better than the 'Simple' model for one-step ahead forecasting when non-separability is high. In all cases, the 'Full' model is equally good or better measured both in RMSE and in CRPS.

\section{Application to daily temperature in France}\label{sec:application}
\subsection{Data analysis}
We consider daily mean temperatures (00:00--23:59) in France, from January 1, 2023, to March 31, 2023. Mainland France has an area of approximately $644000 \, \, \text{km}^2$ with varying elevation as shown in Figure \ref{fig:france-data}. The daily mean temperatures dataset\footnote{Non-blended ECA dataset, daily mean temperature TG} is openly available from the website of the European Climate Assessment \& Dataset project, see the \hyperref[sec:data-availability]{Data availability statement}. The website provides temperature measurements from 10023 stations all over Europe, in the time period January 1, 1756, to December 31, 2025. We have filtered out only those stations that are located in mainland France, leaving 936 stations, with $84097$ total measurements in the selected time window. All maps and models use Lambert-93 coordinates.

Typical spatial variation in daily mean temperatures is shown for January 1, February 1 and March 1 in Figure \ref{fig:france-data}, and Figure \ref{fig:france-temporal-predictions} shows the temporal evolution for two selected measurement stations. The covariates are also displayed in Figure \ref{fig:france-data}. Elevation is measured in kilometres above sea level, and ranges across the stations from $0$ to $3.85$. Proximity to ocean ranges from $0$ to $1$, and is defined to be $1$ at the coast, and decays exponentially to $0$ as a function of the distance to the nearest piece of coastline, $d$ by the transformation $p(d) = 0.05^{d/200}$, such that the covariate is $0.05$ when $200$ km inland. Latitude is measured in degrees, ranging from $42.40$ to $50.96$.

\begin{figure}
    \centering
    \includegraphics[width=0.9\linewidth]{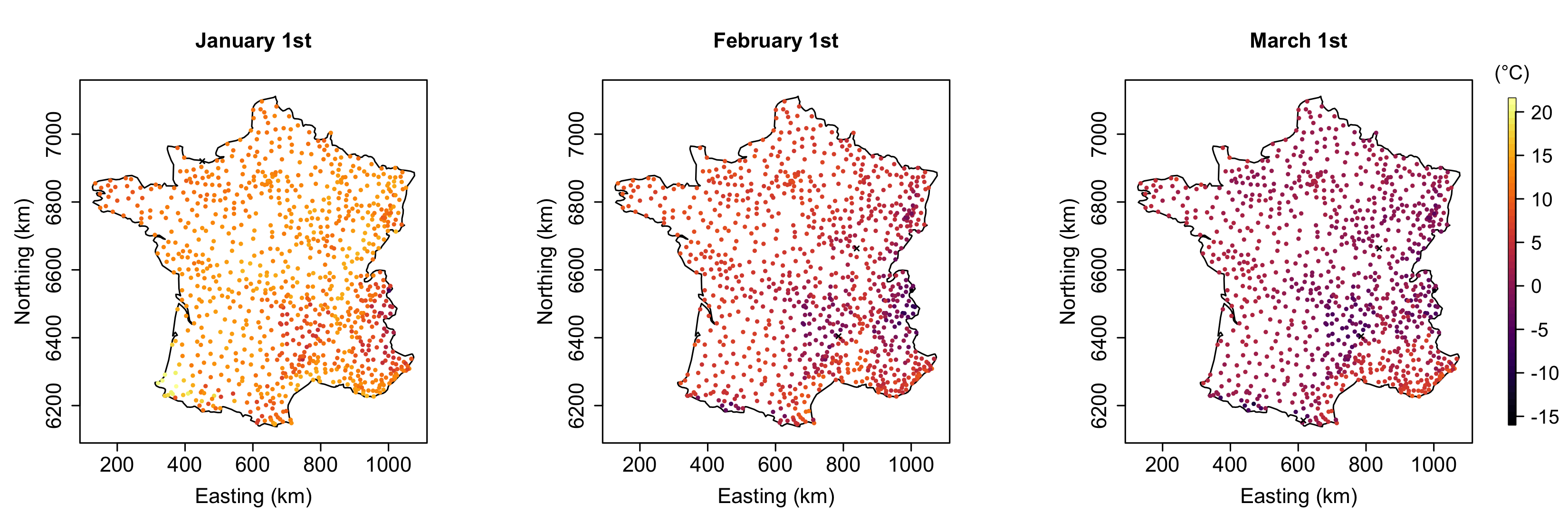}
    \includegraphics[width=0.9\linewidth]{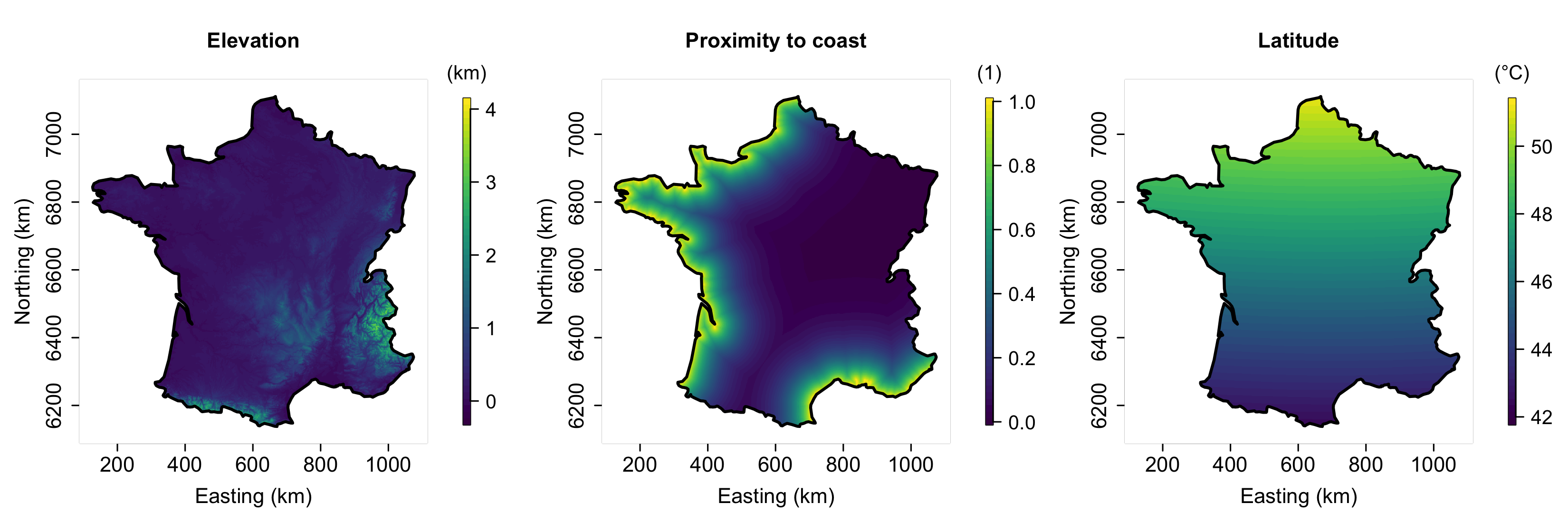}
    \caption{Top: Daily mean temperature measurements in France for January 1, February 1 and March 1, 2023. Bottom: The three covariates we include in our model. Elevation (left), proximity to ocean (middle), and latitude (right).}
    \label{fig:france-data}
\end{figure}

\begin{figure}
    \centering
    \includegraphics[width=\linewidth]{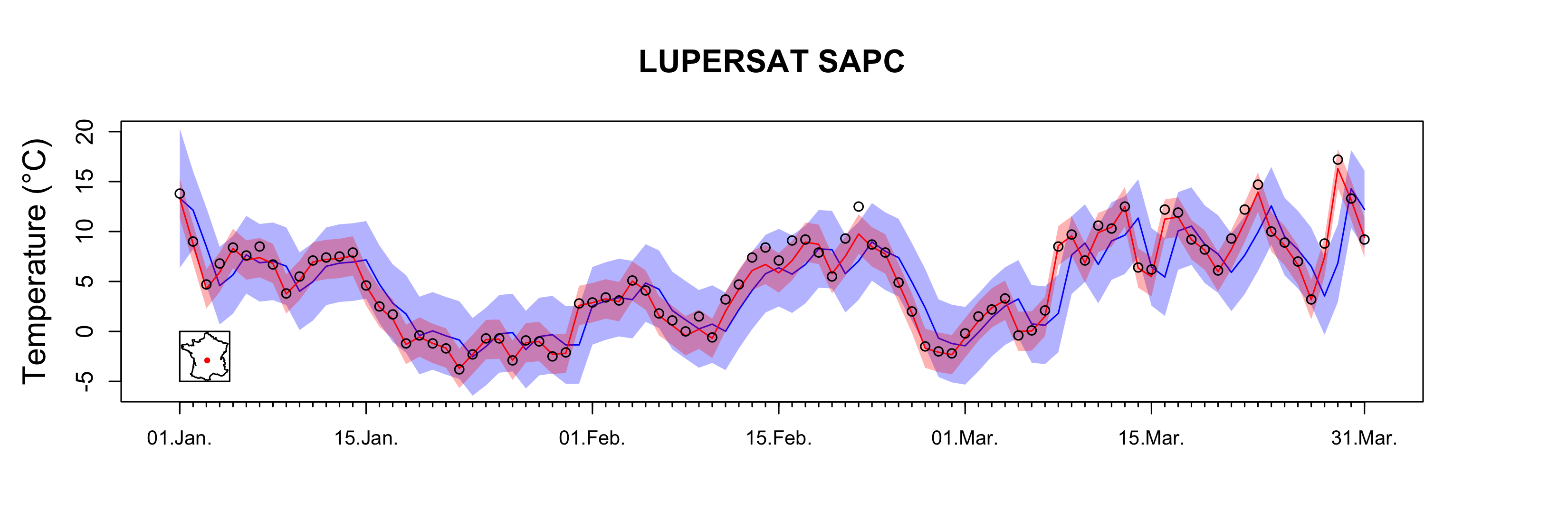}
    \includegraphics[width=\linewidth]{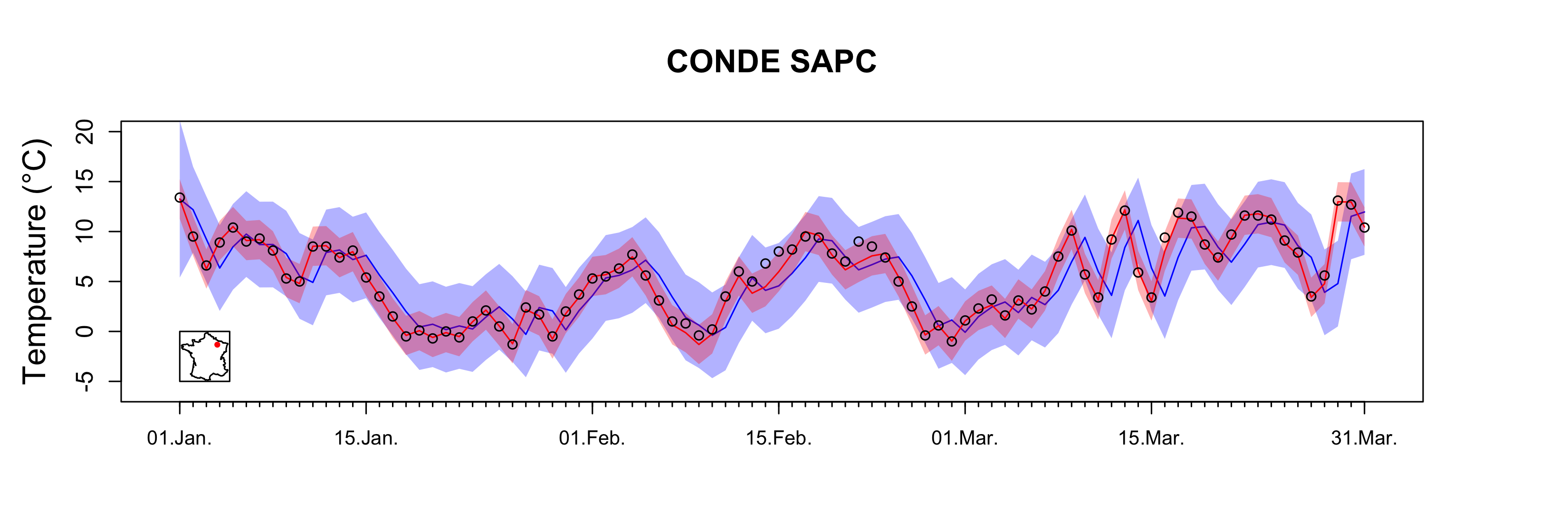}
    \caption{Daily mean temperatures in France from January 1 to March 31 at two arbitrarily selected stations. Blue line displays one step forecasts with corresponding $95\%$ blue shaded prediction region. Red line displays filtered predictions with corresponding $95\%$ red shaded prediction region. The temperature measurements are displayed as black points. The approximate location of the station is displayed in the bottom left corner.}
    \label{fig:france-temporal-predictions}
\end{figure}

The observations suggest that there are no substantial temporal trends or clear heterogeneities over the time period selected. We consider the square spatial domain $\mathcal{D}=[90050, 1113553]\times[6112891, 7136394]$ (in unit meters), and model daily mean temperatures as in Section \ref{sec:hierFull} with signal
\[
    \mu(\boldsymbol{s},t) = \beta_{0} + g_1(\boldsymbol{s})\beta_1 + g_2(\boldsymbol{s})\beta_2 + g_3(\boldsymbol{s})\beta_3 + u(\boldsymbol{s}, t), \quad \boldsymbol{s}\in\mathcal{D},
\]
where $g_1(\cdot)$, $g_2(\cdot)$, and $g_3(\cdot)$ are respectively elevation, proximity to coast, and latitude, $\beta_0$ is the intercept, and $\beta_1$, $\beta_2$ and $\beta_3$ are the effects of the covariates, and $u$ is the non-separable spatio-temporal GRF. Let the $n_{\mathrm{obs}} = 936$ observation locations be given by $\boldsymbol{s}_1, \ldots, \boldsymbol{s}_{n_{\mathrm{obs}}}\in\mathcal{D}$ and denote the $T = 90$ time steps by $n = 1, \ldots, T$. We assume conditionally independent measurement errors $y_i^n | \mu(\boldsymbol{s}_i, t_n), \sigma_{\mathrm{obs}}^2 \sim \mathcal{N}(\mu(\boldsymbol{s}_i, t_n), \sigma_{\mathrm{obs}}^2)$ for $i = 1, \ldots, n_{\mathrm{obs}}$ and $n = 1, \ldots, T$. Note that some observations are missing so that there are only 84097 observations, but that this is excluded from the notation to make it easier to read.

We use a two-step estimation procedure: 1) we estimate $\boldsymbol{\beta} = (\beta_0, \beta_1, \beta_2, \beta_3)^\mathrm{T}$ using ordinary least squares and compute empirical residuals, and 2) estimate the covariance 
parameters $\boldsymbol{\theta}_{\mathrm{cov}}=(\nu_t, \nu_s, \beta_s, r_t, r_s, \sigma, \sigma_{\mathrm{obs}})^\mathrm{T}$ based on the residuals. 
Denote the observation at location $i$ and time step $n$ by $y_i^n$. Then we first fit the regression model
$$y_{i}^n = \beta_{0} + g_1(\boldsymbol{s})\beta_1 + g_2(\boldsymbol{s})\beta_2 + g_3(\boldsymbol{s})\beta_3  + v_i^n,$$
where all $v_i^t$ are i.i.d. $\mathcal{N}(0, \sigma_{\mathrm{Res}}^2)$, where $\sigma_{\mathrm{Res}}^2$ is a nuisance parameter, using least squares. This produced the estimated coefficients $\hat{\beta}_0 = 21.232$, $\hat{\beta}_1 = -4.969$, $\hat{\beta}_2 = 1.559$, and $\hat{\beta}_3 = -0.299$, and empirical residuals $\hat{v}_i^n$.
These residuals were used as data to estimate the covariance structure through treating them as observations
$$\hat{v}_i^n = y^n_i-\hat{\beta}_0-g_1(\boldsymbol{s}_i)\hat{\beta}_1-g_2(\boldsymbol{s}_i)\hat{\beta}_2-g_3(\boldsymbol{s}_i)\hat{\beta}_3=u(\boldsymbol{s}_i, n) + \epsilon_{i}^n \, ,$$
using the procedure described in Section \ref{sec:HierParInf} with spatial resolution $M$. Here the variables $\epsilon_i^n$ are i.i.d. $\mathcal{N}(0, \sigma_{\mathrm{obs}}^2)$.

We estimate $\boldsymbol{\theta}_{\mathrm{cov}}$ with $M = 8^2$, $M = 12^2$, and $M = 16^2$. The parameter estimates are found in Table \ref{table:france-full-parameter-estimates}. As we can see the largest differences lies in differing spatial smoothness and spatial ranges estimated for different resolutions. We also compute filtered estimates for the signal $\mu(\cdot, 32)$ for February 1 using the estimated parameters both for $M = 8^2$, $M = 12^2$, and $M = 16^2$. As shown in Figure \ref{fig:france-spatial-predictions}, the model with higher spatial resolution is able to pick up more fine-scale variation in the spatial field and in the prediction standard deviations, and look visually better than the lower resolution model. We also computed filtered estimates and one-step ahead forecasts for two arbitrarily selected stations using $M = 16^2$ in Figure \ref{fig:france-temporal-predictions}. Note that the prediction intervals plotted in Figure \ref{fig:france-temporal-predictions} are predicting new measurements, not the signal, and hence include the measurement error $\sigma_{\mathrm{obs}}$. 

\begin{figure}
    \centering
    \includegraphics[width=0.9\linewidth]{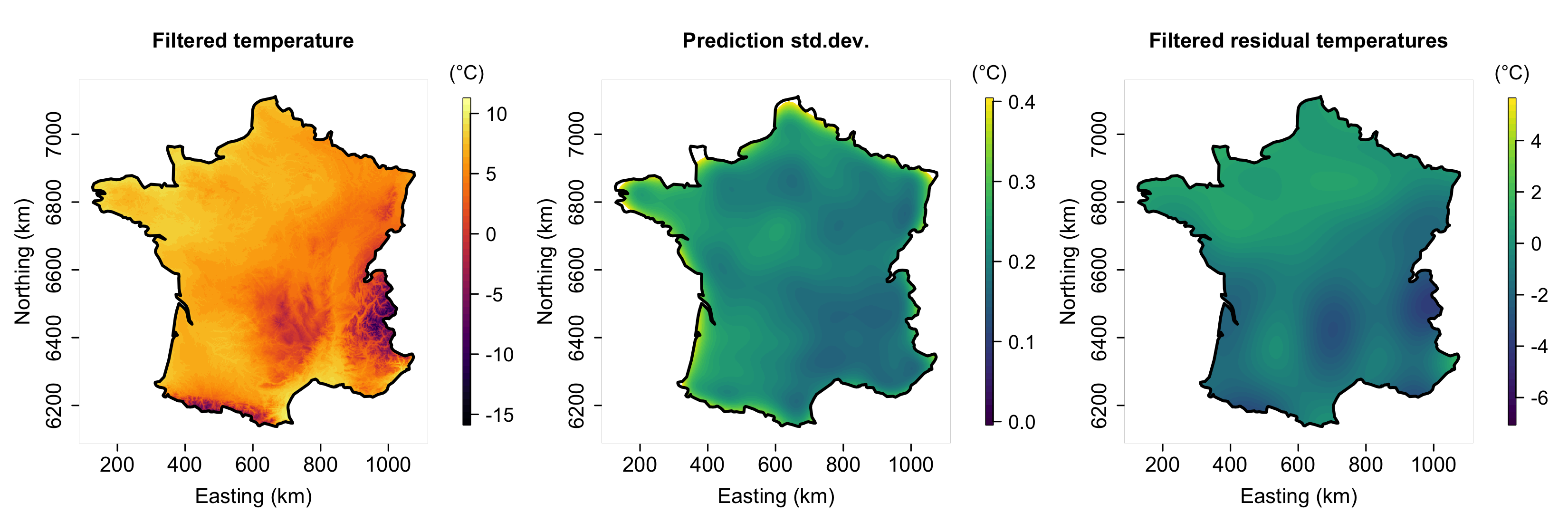}
    \includegraphics[width=0.9\linewidth]{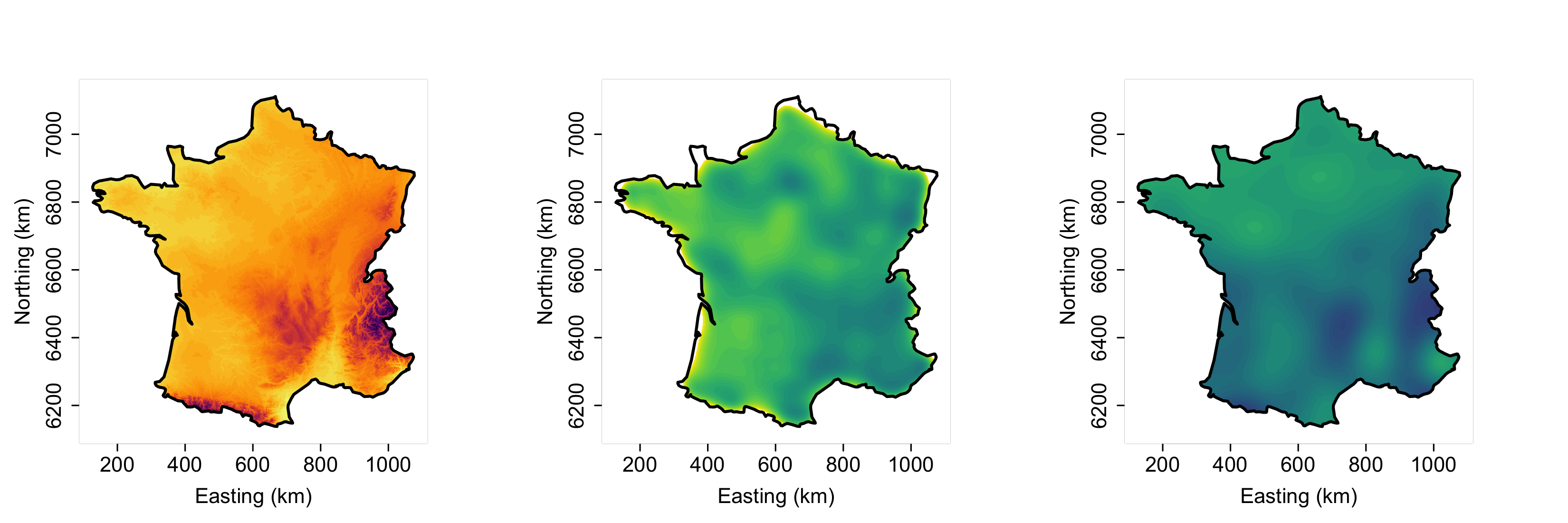}
    \includegraphics[width=0.9\linewidth]{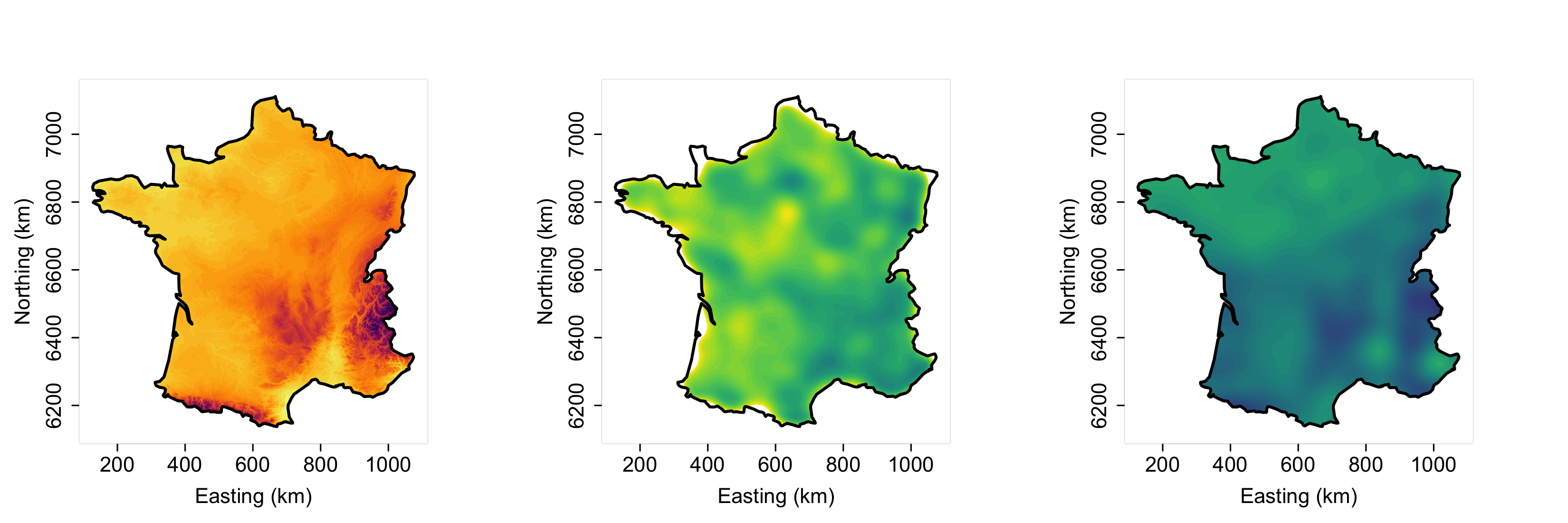}
    
    \caption{Interpolated daily mean temperature in France on February 1, 2023. Filtered temperature means with linear model (left), prediction error (middle), and filtered temperature means without linear model (right). $M = 8^2$ (top), $M^2 = 12^2$ (middle), and $M = 16^2$ (bottom). Axes are in kilometres.}
    \label{fig:france-spatial-predictions}
\end{figure}

\begin{table}[]
    \centering
    \caption{Tabulated parameter estimates for the full dataset.}
    \label{table:france-full-parameter-estimates}
    \begin{tabular}{c|c|c|c|c|c|c|c}
     & $\nu_t$ & $\nu_s$ & $\beta_s$ & $r_t$ (days) &  $r_s$ (km.) & $\sigma$ (°C) & $\sigma_{\mathrm{obs}}$ (°C) \\ \hline
     $M = 8^2$ & 0.477 & 0.457 & 0.002 & 10.568 & 726.554 & 3.086 & 1.026 \\ \hline
     $M = 12^2$ & 0.419 & 0.659 & 0.002 & 17.925 & 711.676 & 3.522 & 0.975 \\ \hline
     $M = 16^2$ & 0.438 & 0.722 & 0.002 & 15.919 & 704.184 & 3.390 & 0.945 \\ \hline
    \end{tabular}
\end{table}

\begin{figure}
    \centering
    \includegraphics[width=0.9\linewidth]{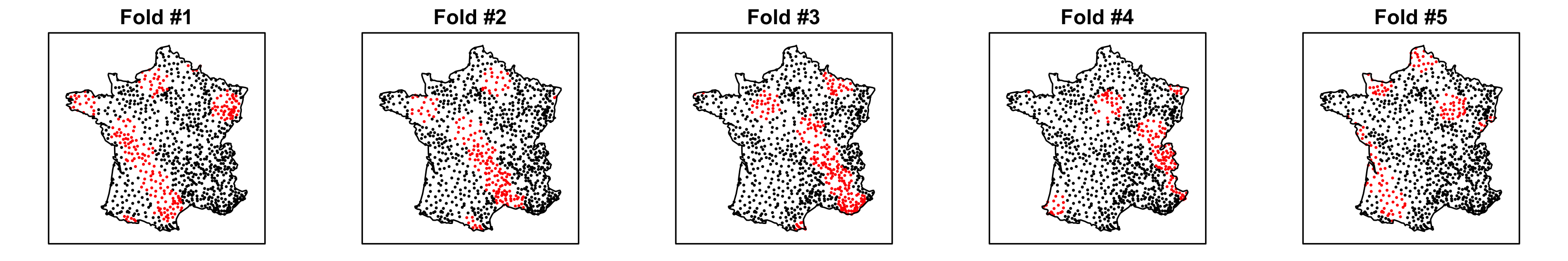}
    \caption{The five folds used for spatial cross validation. Station in the training set are marked in black; stations in the test set are marked in red.\label{fig:cross-validation-folds}}
\end{figure}

\begin{figure}
    \centering
\includegraphics[width=0.9\linewidth]{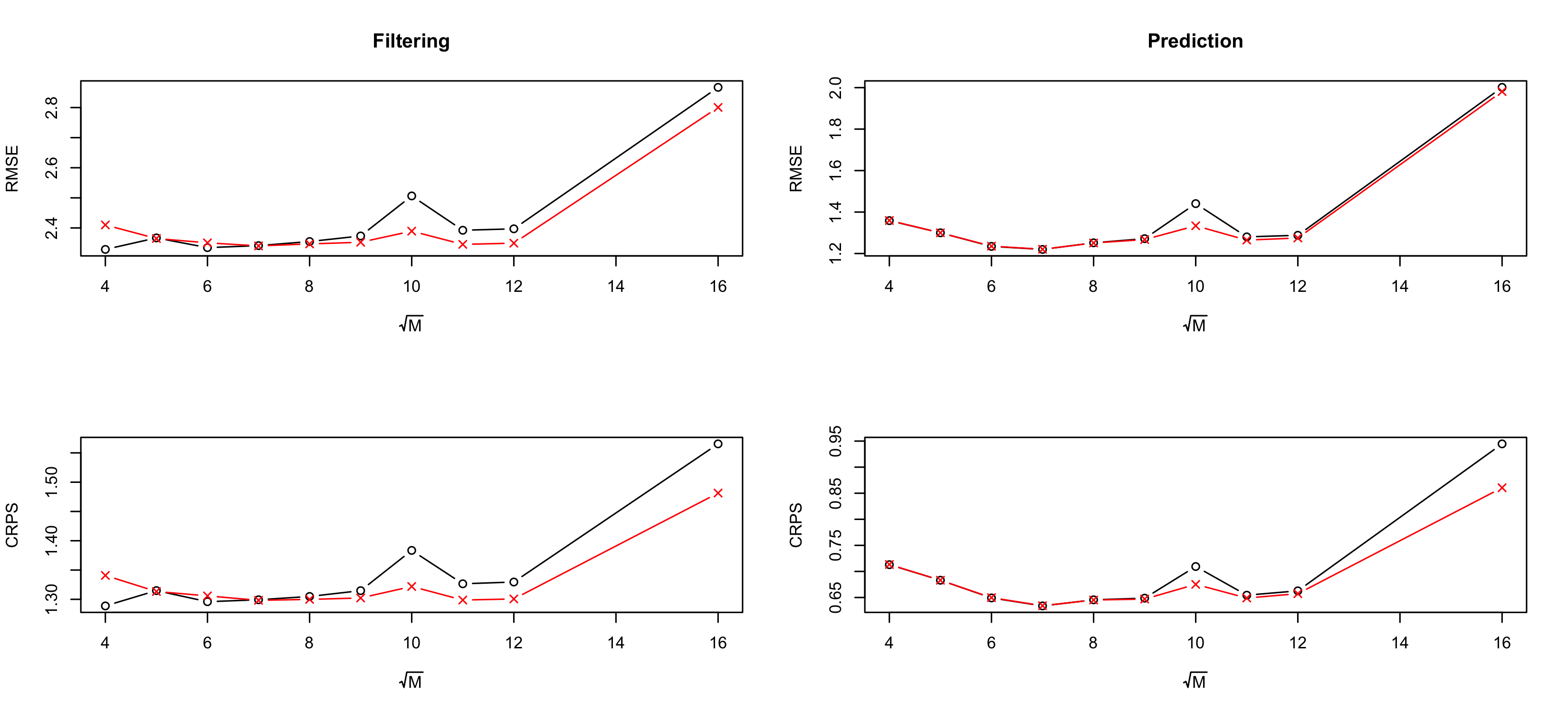}
    \caption{Cross-validation scores for $\sqrt{M} = 4,5,6,7,8,9,10,11,12,16$. "o"-lines are for the 'Full' model. "x"-lines are scores for the 'Simple' model. The same $M$ was used both for parameter inference and prediction.}
    \label{fig:placeholder}
\end{figure}

\subsection{Sensitivity to spatial resolution}
\label{sec:sensitivity}
We conduct five-fold cross-validation for 
$M \in \{4, 5, \ldots, 12, 16\}$, where we in each case compare the 'Full' model to the 'Simple' model with fixed $\nu_\mathrm{t} = 0.5$, $\nu_\mathrm{s} = 0.5$, and $\beta_\mathrm{s} = 0.0$. The choice of $\nu_\mathrm{t}=0.5$ and $\beta_\mathrm{s}=0$ is again motivated by the the fact that this yields an AR(1) process in time driven by spatially coloured noise, whereas $\nu_\mathrm{s}=0.5$ was chosen so that the spatial smoothness is that of an exponential GRF. The five folds correspond to regions selected using the function \verb|cv_spatial| from the \textsf{R} package \texttt{blockCV}, using \verb|selection = "systematic"| and \verb|size = 150000| (m. in Lambert-93). This gives the five folds with spatial structure shown in Figure \ref{fig:cross-validation-folds}. 

Parameter estimates for the fixed effects and for the covariance parameters, for each fold and spatial resolution, are given in Section \ref{suppl:case-study-parameter-estimate-tables} in the Supplementary Materials. For each fold, RMSE and CRPS are computed across the hold-out locations for filter distributions and for forecast distributions. They are combined to a single RMSE and CRPS by a mean square average and an arithmetic mean, respectively, and the results are plotted against spatial resolution in Figure \ref{fig:placeholder}. RMSE and CRPS scores for each fold are given in Section \ref{suppl:case-study-cv-prediction-score-tables} in the Supplementary Materials. Note that since we are predicting new measurements and not the signal, the standard deviation used in the computation of the CRPS score includes the estimated measurement error $\hat{\sigma}_{\mathrm{obs}}$. 

For this dataset, we estimate a non-separability close to 0 and a spatial smoothness a bit below 0.5, and so it is not unexpected that the 'Simple' model will perform well. For coarser resolutions, the 'Simple' model is slightly better with very small difference for filtering. However, the 'Full' model gives the advantage that we are able detect the degree of non-separability and that we can estimate the fractional smoothnesses in space and time. When the resolution is $\geq10^2$ the parameter inference tends towards extreme values in the temporal range, with detrimental effect on predictions. The source of this effect is not clear and further study is needed in future work. When the parameters were fixed to the estimated values for resolution $M = 8^2$, we did not observe any adverse effects on predictive performance when increasing the resolution and the predictive performance remained similar as for $M = 8^2$.

\section{Discussion}\label{sec:Discussion}
The proposed discretization approach extends the practical use of the diffusion-based model proposed by \citet{LindgrenBakka} to all smoothnesses in space and time. The arbitrary smoothness in space is handled by a traditional spectral approximation, and the different smoothnesses in time are achieved through a novel ARMA approximation. The discretization is fast for simulation since the different frequencies can be simulated independently of each other. We propose a sequential Kalman filter for parameter inference and prediction that has a complexity that is the square of the spatial resolution times the number of observations for a fixed order of the rational approximation.  

In the simulation study, we are able to reliably identify the degree of non-separability in the model that generated the observations. In terms of prediction, the largest improvement over a separable model is seen for one-step ahead forecasts, and in that case the gain is highest when the truth is generated from a model with high non-separability. This suggests that to achieve the full benefit of these models, we need to discover which applications exhibit this type of non-separability by exploring different datasets with the new approach.

The application to daily mean temperature shows that it is computationally feasible to estimate the model across three months and almost 1000 measurement stations. By using the proposed approach, we are able to estimate smoothnesses that are not locked to predetermined values such as in \citet{LindgrenBakka}. We find that the non-separability is estimated low, and that we are not able to improve the predictions over the simpler non-separable model. However, for moderate spatial resolutions, the non-separable model has comparable predictive performance to the separable model, in terms of RMSE and CRPS, but there are still issues to be explored for higher spatial resolutions in the hold-areas-out cross-validation. Since non-separability is estimated low, it is not unexpected that the non-separable model would not improve predictions, but it was through the estimation of the model we were able to diagnose the lack of this type of non-separability in the daily mean temperatures. 

Further study is needed to determine under which conditions the approach can be expected to be robust. Specifically, the model has a tendency to overestimate the temporal range when the number of spatial basis functions grows large, with detrimental effect on predictions. It is not clear if this is due to a numerical error or if it is a form of overfitting. We conjecture that this effect is caused, at least to some degree, by misspecification in the spatial model. If so, it could possibly be remedied by introducing a fixed spatial effect, or a fixed random effect for each measurement location. 

This paper provides a practical way forward for discretizing fractional space-time operators where the covariance structure is not Markovian in time. We expect that the rational approximation introduced in this work can be be applied in more general equations of the form $(\partial_t + A)^{\gamma}u = \partial_t W_Q$ through a temporal semi-discretisation
\begin{equation*}
p(S(\Delta t)B)(1-S(\Delta t)B)^{\lfloor\gamma\rfloor} u(\cdot, t_n) = q(S(\Delta t)B)\Delta W_Q(\cdot,t_n),
\end{equation*}
where $S(t)$ is the analytic semi-group generated by $-A$. In this more general setting, one could replace the spectral approximation used in this work by a different spatial discretisation, e.g., a finite element method (FEM) discretisation. Advantages of FEM or surface FEM would be that one does not need to know the eigenfunctions and eigenvalues, and that there exists software for FEM that could be leveraged. A concrete example of interest is to replace $(\kappa^2-\Delta)$ by $(\kappa(\boldsymbol{s})^2-\nabla\cdot\mathbf{H}(\boldsymbol{s})\nabla)$, where $\kappa$ is a spatially varying function and $\mathbf{H}$ is a spatially varying diffusion tensor, to achieve a non-stationary spatial covariance structure with spatially varying anisotropy such as in \cite{fuglstad2015exploring,fuglstad2015does}.
We expect that the techniques used in the proof of Theorem \ref{thm:method-convergence} to establish convergence of the rational approximation could, with some work, be adapted to also provide error bounds in this more general setting.

Overall, this paper moves the discretization methods for the diffusion-based extension of the Matérn model presented by \citet{LindgrenBakka} and \citet{kirchner2022} a significant step forward, both in terms of theory and methods, by allowing all fractional smoothnesses in space and time, and allowing estimation and prediction with high temporal resolution.

\section{Data availability statement}\label{sec:data-availability} The data used in the case study in Section \ref{sec:application} is freely available at \url{https://www.ecad.eu/dailydata/predefinedseries.php}. Supporting code for the simulation study is available at \url{https://zenodo.org/records/19880427}.

\section{Declarations} 
\noindent
\textbf{Funding} \, GAF and ERJ were supported by the Research Council of Norway under Grant Agreement No. 325114 “IMod. Partial differential equations,
statistics and data: An interdisciplinary approach to data-based modelling”.

\noindent
\textbf{Conflicts of interest} \, The authors declare no conflicts of interest.

\bibliography{wileyNJD-AMA}

\newpage
\begin{titlepage}
    \centering
    \vspace*{5cm}
    
    {\Huge \textbf{Supplementary Materials}} \\[1.5cm]
    {\Large Proofs, tables, and other supporting material for \textit{ARMA approximation of a Non-separable Spatio-Temporal Model with Fractional Smoothnesses in Space and Time}} \\[2cm]
    
    \textbf{S. Knutsen Furset, Geir-Arne Fuglstad, and Espen R.
    Jakobsen} \\
    
    \vfill
\end{titlepage}
\newpage

\appendix
\renewcommand{\thesection}{S} 

\setcounter{table}{0}
\renewcommand{\thetable}{S\arabic{table}}
\setcounter{figure}{0}
\renewcommand{\thefigure}{S\arabic{figure}}
\renewcommand{\theequation}
{S\arabic{equation}}
\setcounter{equation}{0}
\setcounter{theorem}{0}
\setcounter{remark}{0}

\section{Supplementary Materials}

\subsection{Proof of Theorem \ref{thm:weak-spectral-convergence}} \label{suppl:proof-thm-1}

\begin{proof}[Proof of Theorem \ref{thm:weak-spectral-convergence}]

Applying the Sogge bound $\|f_k\|_\infty \lesssim \xi_k^{\frac{d-1}{4}}$ (\cite{Sogge1988, Smith2007}), and the Weyl bound $\xi_k \lesssim k^{\frac{2}{d}}$ (\cite{davies1995}, \cite{Lablee2015-pi}),  we get from \eqref{eq:time-process-covariance-function}, \eqref{eq:process-covariance-function}, and \eqref{eq:spectral-approximation}, that 
\begin{align*}
& \, \quad \left| C_{u}(\boldsymbol{s}_1, \boldsymbol{s}_2, h) - C^M_{u}(\boldsymbol{s}_1, \boldsymbol{s}_2, h)\right| \\
&\leq \sum_{k = M + 1}^\infty \left|f_k(\boldsymbol{s}_1)\,f_k(\boldsymbol{s}_2) \right| \frac{\lambda_k \mathrm{e}^{-\mu_k |h|}}{(2 \mu_k)^{2 \gamma - 1} \Gamma(\gamma)^2} \int_0^\infty (u + 2 \mu_k |h|)^{\gamma - 1} u^{\gamma - 1} \mathrm{e}^{-u} \, \mathrm{d}u \\
& \leq \frac{\Gamma(2 \gamma - 1)}{\Gamma(\gamma)^2 2^{2 \gamma - 1}} \sum_{k = M + 1}^\infty \xi_k^{\frac{d - 1}{2}} \lambda_k \mu_k^{1 - 2\gamma} \lesssim \sum_{k = M + 1}^\infty k^{\frac{d - 1}{d} - \frac{2}{d}(\beta + \alpha (2\gamma - 1))} \\
& \leq \int_{M}^\infty x^{\frac{d - 1}{d} - \frac{2}{d}(\beta + \alpha (2\gamma - 1))} \dint x = \frac{M^{1 + \frac{d - 1}{d} - \frac{2}{d}(\beta + \alpha (2\gamma - 1))}}{1 + \frac{d - 1}{d} - \frac{2}{d}(\beta + \alpha (2\gamma - 1))} \propto M^{-\frac{2 \nu_s}{d} + \frac{d - 1}{d}} \,,
\end{align*}
where $\nu_s = \beta + \alpha (2\gamma - 1) - \frac{d}{2}$. On rectangular domains the eigenbasis of the Laplacian with Neumann (Dirichlet) boundary conditions is a product of cosine (sine) functions  with 
$\|f_k\|_\infty \lesssim 1$. A recomputation then leads to  the improved bound
\begin{align*}
\left| C_{u}(\boldsymbol{s}_1, \boldsymbol{s}_2, h) - C^M_{u}(\boldsymbol{s}_1, \boldsymbol{s}_2, h)\right| &\lesssim M^{-\frac{2 \nu_s}{d}} \, . \qedhere
\end{align*}
\end{proof}

\subsection{Proof of Theorem \ref{thm:method-convergence}} \label{suppl:proof-thm-2}

To prove Theorem \ref{thm:method-convergence}, we first need a few lemmas.

\begin{lemma} \label{lem:frac-binom-coef-convergence}  Assume $\eta>0$ and $\ell\in\mathbb N$. Then  the fractional binomial coefficient, 
$\binom{\eta}{\ell} := \frac{\eta \cdot (\eta - 1)\cdot \dots \cdot(\eta
- \ell + 1)}{\ell!} \, ,$
satisfies the bound
$$ \left|(-1)^\ell \binom{-\eta}{\ell} \Gamma(\eta) \ell^{1 - \eta} - 1 \right| \lesssim \ell^{-1} \, .$$
\end{lemma}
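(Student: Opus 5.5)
We rewrite the coefficient as a ratio of Gamma functions and then use an asymptotic for that ratio. Directly from the definition,
\[
(-1)^\ell \binom{-\eta}{\ell} = \frac{\eta(\eta+1)\cdots(\eta+\ell-1)}{\ell!} = \frac{\Gamma(\eta+\ell)}{\Gamma(\eta)\,\Gamma(\ell+1)} .
\]
Hence the quantity inside the absolute value equals
\[
\frac{\Gamma(\ell+\eta)}{\Gamma(\ell+1)}\,\ell^{1-\eta} - 1 .
\]
The claim therefore reduces to the classical estimate
\[
\frac{\Gamma(\ell+\eta)}{\Gamma(\ell+1)} = \ell^{\eta-1}\bigl(1+O(\ell^{-1})\bigr),
\]
with the implied constant depending only on $\eta$.

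To prove this estimate, we use Stirling's formula in the form $\log\Gamma(x) = (x-\tfrac12)\log x - x + \tfrac12\log(2\pi) + O(1/x)$ for $x\ge x_0>0$. Taking differences gives
\[
\log\Gamma(\ell+\eta) - \log\Gamma(\ell+1) = (\ell+\eta-\tfrac12)\log(\ell+\eta) - (\ell+\tfrac12)\log(\ell+1) - (\eta-1) + O(\ell^{-1}).
\]
Next we expand $\log(\ell+a) = \log\ell + a/\ell + O(\ell^{-2})$ for $a\in\{\eta,1\}$. The $\log\ell$ terms combine to $(\eta-1)\log\ell$. The linear terms give $\eta - 1 + O(\ell^{-1})$, which cancels the $-(\eta-1)$. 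The remainder is therefore $O(\ell^{-1})$, so
\[
\log\Bigl(\frac{\Gamma(\ell+\eta)}{\Gamma(\ell+1)}\,\ell^{1-\eta}\Bigr) = O(\ell^{-1}).
\]
Finally, $|e^x-1|\le e^{|x|}|x|$, and $|x|$ is bounded uniformly in $\ell$, so the exponential factor is bounded. This yields the stated $\lesssim \ell^{-1}$ bound.

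The main point needing care is small $\ell$ and the uniformity of the Stirling remainder. Stirling's remainder is uniform for arguments $\ge 1$, and $\ell+\eta\ge 1$ whenever $\ell\ge1$, so this is only bookkeeping. In any case, finitely many small $\ell$ can be absorbed into the constant, since each term is finite.

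An alternative route avoids Stirling altogether. Write the ratio as $\prod_{j=1}^{\ell}\bigl(1+\tfrac{\eta-1}{j}\bigr)$, multiplied by a constant, and compare $\sum_j \log\bigl(1+\tfrac{\eta-1}{j}\bigr)$ with $(\eta-1)\log\ell$. The constant is identified via the Gauss product formula for $1/\Gamma(\eta)$, and the tail error $\sum_{j>\ell} O(j^{-2}) = O(\ell^{-1})$ gives the rate.
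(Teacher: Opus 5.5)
Your proof is correct, but your main argument takes a different route from the paper's. The paper never forms $\Gamma(\ell+\eta)/\Gamma(\ell+1)$. Instead it splits Euler's product $\Gamma(\eta)=\frac{1}{\eta}\prod_{i\ge1}(1+\frac{\eta}{i})^{-1}(1+\frac1i)^{\eta}$ after the first $\ell-1$ factors. Together with the prefactor $\frac1\eta$, this head equals exactly $\ell^{\eta-1}\big/\bigl((-1)^\ell\binom{-\eta}{\ell}\bigr)$, which gives the exact identity $(-1)^\ell\binom{-\eta}{\ell}\Gamma(\eta)\ell^{1-\eta}=R(\ell):=\prod_{i\ge\ell}(1+\frac{\eta}{i})^{-1}(1+\frac1i)^{\eta}$. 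The bound then follows from $|\log R(\ell)|\lesssim\sum_{i\ge\ell}i^{-2}\lesssim\ell^{-1}$, using a second-order Taylor expansion of $\log(1+x)$. This is essentially the ``alternative route'' in your last paragraph; your identification of the constant via Gauss's product is equivalent to the paper's use of Euler's product.

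Your Stirling route imports a heavier tool: a uniform $O(1/x)$ remainder for $\log\Gamma$, e.g.\ Binet's bound $0<\log\Gamma(x)-\bigl[(x-\tfrac12)\log x-x+\tfrac12\log(2\pi)\bigr]<\frac{1}{12x}$ for $x>0$. The paper needs nothing beyond the product formula and $\log(1+x)=x+O(x^2)$, and it shows the error transparently as the tail of a convergent product. In exchange, your computation is a routine asymptotic expansion that is uniform over all $\ell\ge1$ at once. The paper, by contrast, only treats $\ell$ beyond some threshold and absorbs the remaining finitely many cases implicitly. Carrying Stirling one order further also gives the sharper statement $\frac{\Gamma(\ell+\eta)}{\Gamma(\ell+1)}\ell^{1-\eta}=1+\frac{\eta(\eta-1)}{2\ell}+O(\ell^{-2})$. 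This shows the $\ell^{-1}$ rate is attained whenever $\eta\neq1$.
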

\begin{remark}
It is well-known that $(-1)^\ell \binom{-\eta}{\ell} \Gamma(\eta) \ell^{1 -\eta} \rightarrow 1$  as $\ell\to \infty$, see for example \cite{Levrie2017},  but we were unable to find an explicit error bound in the literature. The bound follows easily from Euler's definition of the gamma function.
\end{remark}
\begin{proof}
 \cite{Euler1738} showed that
$\Gamma(z) = \frac{1}{z} \prod_{i = 1}^{\infty} \frac{1}{1 + \frac{z}{i}} \left( 1 + \frac{1}{i} \right)^{z}$ for $ z\in \mathbb C \setminus \{0,-1,-2,...\}  .$
Let $R(\ell) := \prod_{i = \ell}^{\infty} \frac{1}{1 + \frac{\eta}{i}} \left( 1 + \frac{1}{i} \right)^{\eta}$ and note that
\begin{align*}
    \Gamma(\eta) & = \frac{1}{\eta} \left( \prod_{i = 1}^{\ell - 1} \frac{1}{1 + \frac{\eta}{i}} \left( 1 + \frac{1}{i} \right)^{\eta} \right) R(\ell) 
    = \frac{1}{\eta} \left( \prod_{i = 1}^{\ell - 1} \frac{-i}{-\eta - i} \left( 1 + \frac{1}{i} \right)^{\eta} \right) R(\ell) \, \\
    & = (-1)^{\ell} (\ell - 1)! \left(\frac{1}{-\eta} \cdot \frac{1}{-\eta - 1} \cdot \frac{1}{-\eta - 1} \cdots \frac{1}{-\eta - (\ell - 1)} \right) \left( \frac{2}{1} \cdot \frac{3}{2} \cdots \frac{\ell}{\ell - 1} \right)^{\eta} R(\ell) \, \\[0.2cm]
    & = (-1)^{\ell} \frac{\ell! \ell^{-1+\eta} }{-\eta(-\eta - 1)(-\eta - 2) \cdots (-\eta - (\ell - 1))}\, R(\ell)= (-1)^{\ell}\frac{ \ell^{-1+\eta}}{\binom{-\eta}{\ell}}\, R(\ell)  \, ,
\end{align*}
or equivalently,
$(-1)^\ell \binom{-\eta}{\ell} \Gamma(\eta) \ell^{1 - \eta} = R(\ell)$.  To conclude, we need to show that $|R(\ell) - 1| \lesssim \ell^{-1}$. 
We first show that $\log(R(\ell)) \lesssim \ell^{-1}$. 
By a Taylor expansion, $\log(1 + x) = x - \frac{1}{2} x^2 - h(x) x^2$  where 
$h(x) \rightarrow 0$ as $x \rightarrow 0$. Now take $\delta$ such that $h(x) < 1$ for $|x| < \delta$ and $N$ such that $\max\left(\frac{\eta}{N}, \frac{1}{N} \right) < \delta$. For $\ell \geq N$ we have that
\begin{align*}
\abs{\log(R(\ell))} &\leq \sum_{i = \ell}^{\infty} \abs{-\log \left(1 + \frac{\eta}{i} \right) + \eta \log \left(1 + \frac{1}{i} \right) } \\
& = \sum_{i = \ell}^{\infty} \abs{-\frac{\eta}{i} + \frac{1}{2} \left(\frac{\eta}{i}\right)^2 - h \left(\frac{\eta}{i} \right) \left(\frac{\eta}{i}\right)^2 + \eta \left( \frac{1}{i} - \frac{1}{2} \left( \frac{1}{i} \right)^2 + h \left(\frac{1}{i} \right) \left(\frac{1}{i}\right)^2 \right) } \\
& \leq \sum_{i = \ell}^{\infty}  \frac{\abs{\frac{1}{2} \eta^2 -  \frac{1}{2} \eta - h \left(\frac{\eta}{i} \right) \eta^2 + h \left(\frac{1}{i} \right)\eta}}{i^2} \lesssim \sum_{i = \ell}^\infty \frac{1}{i^2} \lesssim \ell^{-1} ,
\end{align*}
where the constant is independent of $i$ and $\ell$. If $\ell$ is sufficiently large we can now bound
\begin{equation*}
 \big|R(\ell) - 1 \big| = \big|\mathrm{e}^{\log(R(\ell))} - 1\big| \leq 2 \big| \log(R(\ell)) \big| \lesssim \ell^{-1} \,. \qedhere
\end{equation*}
\end{proof}
\begin{lemma} \label{lem:toy-model-convergence}  Assume $\gamma>\tfrac12$, $\beta + \alpha (2 \gamma - 1) - 1 > 0$, $\Delta t>0$, 
$e_\ell = \{\delta_{j,\ell}\}_j$, $B$ is the backshift operator mapping $e_j \mapsto e_{j + 1}$, $\rho_c$ is  
defined in \eqref{process-correlation-function}, $c := \{c_j\}_{j = -\infty}^\infty \in \ell^2(\mathbb{Z})$ satisfies the recurrence relation 
\begin{align*}(1 - \mathrm{e}^{-\Delta t \mu_k} B)^{\gamma} c := \sum_{j = 0}^\infty \binom{\gamma}{j} (-1)^j \mathrm{e}^{-\mu_k j\Delta t } B^j c=e_0 \, ,
\end{align*}
and define
\begin{align*}
\Tilde{\rho}(n) := C \Delta t^{2\gamma - 1} \sum_{j = 0}^\infty c_j c_{j + n} \quad \text{for}\quad C := \tfrac{\Gamma(\gamma)^2 (2 \mu_k)^{2 \gamma - 1}}{\Gamma(2 \gamma - 1)}\, .
\end{align*}

Then for $n\in\mathbb Z$,
\begin{align*}
& \qquad (i) \ \ c_n = \begin{cases} \binom{-\gamma}{n} (-1)^n \mathrm{e}^{-\mu_k n  \Delta t }, \, & n \geq 0 \\
0, & n < 0
\end{cases} \\
\qquad\text{and} & \qquad 
(ii) \ \ \abs{\rho_c( n\Delta t ) - \Tilde{\rho}(n)} \lesssim \mu_k^{2 \gamma - 1} \mathrm{e}^{- \mu_k  n\Delta t } \Big( \Delta t \abs{\log(\Delta t)} + \Delta t^{\min(2\gamma - 1, 1)} \Big) \, ,
\end{align*}
\end{lemma}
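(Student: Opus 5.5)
The plan is to treat part (i) as an exact inversion of the operator $(1-\mathrm{e}^{-\mu_k\Delta t}B)^\gamma$ on $\ell^2(\mathbb{Z})$, and part (ii) as a Riemann-sum comparison between the discrete autocovariance $\tilde\rho$ and the integral in \eqref{process-correlation-function}. Throughout write $a := \mathrm{e}^{-\mu_k\Delta t}\in(0,1)$ and $w_j := (-1)^j\binom{-\gamma}{j}$.

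\textbf{Part (i).} The binomial series gives $(1-az)^{-\gamma}=\sum_{j\ge 0} w_j a^j z^j$ for $|z|<1/a$. Since $|w_j|\lesssim j^{\gamma-1}$ and $a<1$, the coefficient sequence lies in $\ell^1\subset\ell^2$. Multiplying power series shows that $c_n=w_na^n\mathbf 1_{n\ge 0}$ satisfies $(1-aB)^\gamma c=e_0$.

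For uniqueness I would argue as follows. The operator $(1-aB)^\gamma$ is the Laurent (convolution) operator with symbol $(1-az)^\gamma$. This symbol is continuous and nonvanishing on $|z|=1$, so the operator is invertible on $\ell^2(\mathbb{Z})$ via Fourier series. Its inverse has symbol $(1-az)^{-\gamma}$, which yields exactly the formula above.

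\textbf{Part (ii), reduction.} Substituting $u=2\mu_k v$ in \eqref{process-correlation-function} gives
\[
\rho_c(n\Delta t)=\mathrm{e}^{-\mu_k n\Delta t}\,\frac{(2\mu_k)^{2\gamma-1}}{\Gamma(2\gamma-1)}\int_0^\infty (v+n\Delta t)^{\gamma-1}v^{\gamma-1}\mathrm{e}^{-2\mu_k v}\,\mathrm{d}v .
\]
On the discrete side,
\[
\tilde\rho(n)=\mathrm{e}^{-\mu_k n\Delta t}\,\frac{(2\mu_k)^{2\gamma-1}}{\Gamma(2\gamma-1)}\,\Gamma(\gamma)^2\Delta t^{2\gamma-1}\sum_{j\ge0}a^{2j}w_jw_{j+n}.
\]
The common factor $\mathrm{e}^{-\mu_k n\Delta t}$ is exactly the one in the claimed bound, so it factors out of the difference.

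By Lemma \ref{lem:frac-binom-coef-convergence}, $\Gamma(\gamma)w_j=j^{\gamma-1}(1+O(j^{-1}))$ for $j\ge1$. Hence $\Gamma(\gamma)^2\Delta t^{2\gamma-1}w_jw_{j+n}=\Delta t\,(j\Delta t)^{\gamma-1}((j+n)\Delta t)^{\gamma-1}(1+O(j^{-1}))$. The sum therefore becomes a left Riemann sum, with step $\Delta t$, of the integrand $g(v)=(v+n\Delta t)^{\gamma-1}v^{\gamma-1}\mathrm{e}^{-2\mu_k v}$, up to a relative error $O(1/j)$ per term. The $j=0$ term is handled separately, as is the $j+n$ factor when $n=0$.

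\textbf{Part (ii), error terms.} There are three to bound, each uniformly in $n$.
\begin{enumerate}
\item The $j=0$ term and the integral over $[0,\Delta t]$. Both are $O(\Delta t^{2\gamma-1})$ after scaling; this uses $\gamma>\tfrac12$ so that $v^{2\gamma-2}$ is integrable.
\item The Riemann-sum error on $[\Delta t,\infty)$. Bound it by $\Delta t\int_{\Delta t}^\infty|g'(v)|\,\mathrm{d}v$. Since $g'$ behaves like $v^{2\gamma-3}$ near the origin and decays exponentially at rate $\mu_k$, this produces $\Delta t^{\min(2\gamma-1,1)}$, with a logarithm at $\gamma=1$.
\item The $O(1/j)$ relative corrections. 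These give $\Delta t\sum_{j\ge1}j^{-1}(j\Delta t)^{2\gamma-2}\mathrm{e}^{-2\mu_k j\Delta t}$. The effective number of terms is about $1/(\mu_k\Delta t)$, so this is where $\Delta t|\log\Delta t|$ arises: the harmonic sum over that range is of order $|\log\Delta t|$.
\end{enumerate}
Tracking the dependence on $\mu_k$ by rescaling $v\mapsto v/\mu_k$ yields the prefactor $\mu_k^{2\gamma-1}$.

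\textbf{Main obstacle.} I expect the hardest step to be making error term 2 uniform in $n$ when $\gamma<1$. The factor $(v+n\Delta t)^{\gamma-1}$ is then decreasing, and for $n=0$ it merges with $v^{\gamma-1}$ into the stronger singularity $v^{2\gamma-2}$. My first attempt would be to bound $|g'|\lesssim v^{\gamma-2}(v+n\Delta t)^{\gamma-1}+v^{\gamma-1}(v+n\Delta t)^{\gamma-2}+\mu_k g$. Then, using $(v+n\Delta t)^{\gamma-1}\le v^{\gamma-1}$ for $\gamma\le1$ (and bounded ratios for $\gamma\ge1$), I would reduce to the $n=0$ case, which gives a bound independent of $n$.
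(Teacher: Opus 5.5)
Your route is the paper's. Part (i) inverts the convolution operator through its nonvanishing symbol, which is the paper's $Z$-transform argument made explicit. Part (ii) splits the error into a Riemann-sum error for $\int g$ plus the $O(1/j)$ corrections from Lemma~\ref{lem:frac-binom-coef-convergence}. You are more careful than the paper in keeping the $j=0$ term $C\Delta t^{2\gamma-1}c_0c_n$, which its term (2) silently drops. The obstacle you flag ($\gamma<1$) is already settled by your own monotonicity argument, $(v+n\Delta t)^{\gamma-1}\le v^{\gamma-1}$ and $(j+n)^{\gamma-1}\le j^{\gamma-1}$, so for $\gamma\le 1$ the plan works.

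The genuine gap is the parenthetical ``bounded ratios for $\gamma\ge 1$''. For $\gamma>1$ the ratio $(v+n\Delta t)^{\gamma-1}/v^{\gamma-1}$ is unbounded. Each of your error terms then carries a factor growing like $(n\Delta t)^{\gamma-1}$ on top of $\mathrm{e}^{-\mu_k n\Delta t}$: the $j=0$ term (proportional to $w_n\sim n^{\gamma-1}$), the integral of $|g'|$, and the correction sum containing $(j+n)^{\gamma-1}$. No sharper estimate can remove this, because (ii) with the full factor $\mathrm{e}^{-\mu_k n\Delta t}$ and an $n$-independent constant is false for $\gamma>1$. For $\gamma=2$ one has $c_j=(j+1)\mathrm{e}^{-\mu_k j\Delta t}$, $C=4\mu_k^3$ and $\rho_c(h)=(1+\mu_k h)\mathrm{e}^{-\mu_k h}$. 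Summing the series in closed form gives
\[
\tilde\rho(n)-\rho_c(n\Delta t)=\bigl(2\mu_k\Delta t+O(\mu_k^2\Delta t^2)\bigr)(1+\mu_k n\Delta t)\,\mathrm{e}^{-\mu_k n\Delta t},
\]
whose ratio to the claimed right-hand side grows linearly in $n$.

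The paper's proof has the same blind spot. Its term (2) effectively uses $(j+n)^{\gamma-1}\lesssim j^{\gamma-1}$, and its Riemann-sum bound relies on an $n$-dependent $\|g'\|_\infty$. So the repair is to the statement rather than the technique. For $\gamma>1$, your decomposition combined with $(v+n\Delta t)^{\gamma-1}\lesssim v^{\gamma-1}+(n\Delta t)^{\gamma-1}$ yields the bound with $\mathrm{e}^{-\mu_k n\Delta t}$ replaced by $(1+\mu_k n\Delta t)^{\gamma-1}\mathrm{e}^{-\mu_k n\Delta t}$. Since $(1+x)^{\gamma-1}\mathrm{e}^{-x}$ is bounded, this still suffices for Theorem~\ref{thm:method-convergence}, where the exponential is discarded anyway.

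A smaller correction to your error term 3: for $\gamma<1$ it is not a harmonic sum. It is bounded by a multiple of $\Delta t^{2\gamma-1}\sum_{j\ge1} j^{2\gamma-3}\mathrm{e}^{-2\mu_k j\Delta t}=O(\Delta t^{2\gamma-1})$. Pulling $(j\Delta t)^{2\gamma-2}$ out at $j=1$ and summing $1/j$ would instead give $\Delta t^{2\gamma-1}\abs{\log\Delta t}$, which overshoots the target. At $\gamma=1$ the correction vanishes identically, since $w_j\equiv 1$.
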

\begin{remark} It might seems strange that we define the backshift operators by shifting the basis forward (i.e. $e_j \mapsto e_{j + 1}$), however this "forward" shift of the basis corresponds to a "backwards" shift of the coefficients, i.e. $B\left( \sum_{j = 0}^\infty c_j e_j \right) = \sum_{j = 1}^\infty c_{j - 1} e_j$, hence the name "backshift" operator. 
\end{remark}
\begin{proof}
(i) We solve for $c$ using the $Z$-transform (see e.g. \cite{Oppenheim}),
%
\begin{align*}
    1 & = Z[e_0] = \sum_{j = 0}^\infty \binom{\gamma}{j} (-1)^j \mathrm{e}^{-\mu_k j\Delta t } Z[B^j c] 
    = \sum_{j = 0}^\infty \binom{\gamma}{j} (-1)^j \mathrm{e}^{-\mu_k j\Delta t } \sum_{\ell = -\infty}^\infty c_\ell z^{-(\ell + j)} \\
    &= \bigg(\sum_{j = 0}^\infty \binom{\gamma}{j} (-1)^j (\mathrm{e}^{-\mu_k \Delta t} z^{-1})^j \bigg) \bigg(\sum_{\ell = -\infty}^\infty c_\ell z^{-\ell} \bigg) 
    = (1 - \mathrm{e}^{- \mu_k \Delta t} z^{-1})^\gamma \bigg(\sum_{\ell = -\infty}^\infty c_\ell z^{-\ell} \bigg) \, ,
\end{align*}
as functions on $\mathcal{L}^2(\mathbb{T})$, where $\mathbb{T} = \{z \in \mathbb{C} \quad \text{s.t.} \quad |z| = 1\}$, from which it follows that
\begin{align*}
    & \quad \sum_{\ell = -\infty}^\infty c_\ell z^{-\ell} = (1 - \mathrm{e}^{- \mu_k \Delta t} z^{-1})^{-\gamma} = \sum_{\ell = 0}^\infty \binom{-\gamma}{\ell} (-1)^\ell \mathrm{e}^{-\mu_k \ell\Delta t } z^{-\ell} \, , \\
    \quad \text{and hence} & \quad c_{\ell} = \begin{cases} \binom{-\gamma}{\ell} (-1)^{\ell} \mathrm{e}^{-\mu_k {\ell}  \Delta t }, \, & \ell \geq 0 \\
0, & \ell < 0
\end{cases}.
\end{align*}
\noindent (ii) Let $\overline{c}_j :=\frac{1}{\Gamma(\gamma)} \,j^{\gamma - 1} e^{- \mu_k j\Delta t}$ for $j\neq 0$ and $\overline{c}_{j}=0$ for $j \leq 0$, 
and use the triangle inequality to get
\begin{align*}
\big|\rho_c( n\Delta t ) - \Tilde{\rho}(n)\big| & \leq \Big|\rho_c( n\Delta t ) - C \Delta t^{2 \gamma - 1} \sum_{j = 1}^{\infty} \overline{c}_j \overline{c}_{j +n} \Big| + \Big| C \Delta t^{2 \gamma - 1} \sum_{j = 1}^{\infty} \overline{c}_j \overline{c}_{j + n} - \Tilde{\rho}(n)\Big| \\ & = (1) + (2) \, .
\end{align*}
To bound (1), we note that by a change of variables  in \eqref{process-correlation-function},
$$\rho_c(t) =  \frac{\mathrm{e}^{-\mu_k t} (2 \mu_k)^{2 \gamma - 1}}{\Gamma(2 \gamma - 1)} \int_{0}^{\infty}  g(u)
\, \dint u, \qquad \text{for}\qquad g(u)=u^{\gamma - 1} (u +  n\Delta t)^{\gamma - 1}\mathrm{e}^{-2 \mu_k u}$$
so
\begin{align*}
(1) & = \bigg| \frac{\mathrm{e}^{-\mu_k  n\Delta t } (2 \mu_k)^{2 \gamma - 1}}{\Gamma(2 \gamma - 1)} \int_{0}^{\infty}  g(u) 
\, \dint u - \frac{C \Delta t^{2 \gamma - 1} \mathrm{e}^{-\mu_k n\Delta t}}{\Gamma(\gamma)^2} \sum_{j = 1}^{\infty} \mathrm{e}^{-2 \mu_k j \Delta t } j^{\gamma - 1} (j + n)^{\gamma - 1} \bigg|\\
 & = \frac{\mathrm{e}^{- \mu_k  n\Delta t }(2 \mu_k)^{2 \gamma - 1}}{\Gamma(2 \gamma - 1)} \bigg| \int_{0}^{\infty}  g(u) 
 \, \dint u - \Delta t \sum_{j = 1}^{\infty}  g(j\Delta t) 
 \bigg|
\, .
\end{align*}
Note that $\Delta t \sum_{j = 1}^{\infty}  g(j\Delta t) 
$ is the right Riemann sum of $\int_{0}^{\infty}  g(u) 
\, \dint u $. It is well known that the right Riemann sum of a $C^1$ function on a compact domain $[a,b]$ satisfies $|\int_a^b f - R.S.|  \leq \int_a^b|f'|\,\dint x\,\Delta t \leq (b - a)\|f'\|_\infty \Delta t$. We will now extend this error bound to our setting. If $\gamma > 2$, then 
\begin{align*}
   I:=\ & \bigg| \int_{0}^{\infty}  g(u)  
   \, \dint u - \Delta t \sum_{j = 1}^{\infty}  g(j\Delta t) 
   \bigg| 
    \leq 
    \, \bigg| \int_{0}^{N\Delta t}  g(u)  
    \, \dint u - \Delta t \sum_{j = 1}^{N}  g(j\Delta t) 
    \bigg| + 2 \int_{N\Delta t}^\infty  g(u)  
    \, \dint u \\
    \lesssim & \, N\Delta t^2 + \mathrm{e}^{-2 \mu_k N \Delta t} \simeq \Delta t \log \left(\Delta t^{-1}\right) + \Delta t = \Delta t \abs{ \log(\Delta t) } + \Delta t,
\end{align*}
where we have assumed $\Delta t \leq 1$ so that $\log(\Delta t^{-1}) = \abs{\log(\Delta t)}$ and chosen $N = \frac{\log(\Delta t^{-1})}{2 \mu_k \Delta t}$. If $\frac{1}{2} < \gamma < 2$ then the integrand is not continuously differentiable at $0$, so we  we split the integral two more times:
\begin{align*}
    I\lesssim & \, 
     \, \int_0^{\Delta t} g(u)
    \, \dint u +\Delta t\; g(\Delta t)
     +\bigg| \int_{\Delta t}^{1} 
      g(u)  
     \, \dint u - \Delta t \sum_{j\geq 2\,:\, j\Delta t\leq 1}  g(j\Delta t) 
     \bigg|  
     \\    & 
     +\bigg| \int_{1}^{N\Delta t}  g(u) 
    \, \dint u - \Delta t \sum_{ j\leq N\,:\, j\Delta t> 1}  g(j\Delta t) 
    \bigg| 
    + 2\int_{N\Delta t}^\infty  g(u) 
    \, \dint u \\[0.2cm]
    \lesssim & \ 
    \Delta t^{2\gamma-1} + \Delta t^{\min(2\gamma-1,1)} +\Delta t^{2\gamma-1}  + N\Delta t^2  + \mathrm{e}^{-2 \mu_k N \Delta t}  \lesssim \Delta t^{2\gamma-1} +\Delta t +\Delta t \log \left(\Delta t^{-1}\right)  
    \\[0.2cm]
    \lesssim &\ \Delta t^{\min( 2\gamma-1, 1)}+\Delta t \abs{ \log \left(\Delta t
    \right) }
    ,
\end{align*} 
since e.g. $g(\Delta t)\lesssim (\Delta t)^{2\gamma-2}$ and $\Delta t\int_{\Delta t}^1|g'(u)|\,\dint u \lesssim \Delta t^{\min(2\gamma-1,1)}$.
Thus 
\begin{align}\label{(1)}
(1) \lesssim \mu_k^{2 \gamma - 1} \mathrm{e}^{- \mu_k  n\Delta t}\Big(\Delta t \abs{\log(\Delta t)} +  \Delta t^{\min( 2\gamma-1, 1)}\Big)\quad\text{for}\quad \gamma>\tfrac12.
\end{align}
To bound $(2)$ we first note that
\begin{align*}
    (2) & \leq C \Delta^{2 \gamma - 1} \sum_{j = 1}^{\infty} \abs{ \overline{c}_j \overline{c}_{j + n} - c_j c_{j + n} } \\
    & \lesssim \mu_k^{2 \gamma - 1} \mathrm{e}^{- \mu_k  n\Delta t } \Delta t^{2 \gamma - 1} \mathrm{e}^{- 
     \mu_j n\Delta t} \sum_{j = 1}^{\infty} \abs{ \frac{j^{\gamma - 1} (j + n)^{\gamma - 1}}{\Gamma(\gamma)^2} - \binom{-\gamma}{j} \binom{-\gamma}{j + n} (-1)^{2j + n}} \mathrm{e}^{-2 \mu_k j \Delta t }.
\end{align*}
Then we apply Lemma \ref{lem:frac-binom-coef-convergence} three times to bound
\begin{align*}
    & \abs{\binom{-\gamma}{j} \binom{-\gamma}{j + n} (-1)^{2j + n} - \frac{j^{\gamma - 1} (j + n)^{\gamma - 1}}{\Gamma(\gamma)^2}} \\
    & \leq \abs{\binom{-\gamma}{j} (-1)^{j} - \frac{j^{\gamma - 1}}{\Gamma(\gamma)}} \binom{-\gamma}{j + n} (-1)^{j + n} + \abs{\binom{-\gamma}{j + n} (-1)^{j + n} - \frac{(j + n)^{\gamma - 1}}{\Gamma(\gamma)}} \frac{j^{\gamma - 1}}{\Gamma(\gamma)} \\
    & \lesssim \abs{\binom{-\gamma}{j} (-1)^{j} j^{1 - \gamma} \Gamma(\gamma) - 1} j^{2 \gamma - 2} + \abs{\binom{-\gamma}{j + n} (-1)^{j + n} (j + n)^{1 - \gamma} \Gamma(\gamma) - 1} j^{2 \gamma - 2} \\
    & \lesssim j^{2 \gamma - 3} \, ,
\end{align*}
and conclude that
\begin{align*}
    (2) 
    & \lesssim  \mu_k^{2 \gamma - 1} \mathrm{e}^{- \mu_k  n\Delta t } \Delta t \,  J(\Delta t) \qquad\text{where}\qquad J(\Delta t)= \Delta t \sum_{j = 1}^{\infty} (j\Delta t)^{2 \gamma - 3} \mathrm{e}^{-2 \mu_k j \Delta t } \, .
\end{align*}
For $\gamma >1$, 
$ J(\Delta t)\rightarrow \int_{0}^\infty x^{2 \gamma - 3} \mathrm{e}^{-2 \mu_k x} \, \dint x \lesssim 1$. For $\frac12<\gamma  < 1$,
\begin{align*}
 J(\Delta t) &\leq \Delta t^{2\gamma - 2} +  \int_{\Delta t}^{\infty} x^{2 \gamma - 3} \mathrm{e}^{-2 \mu_k x} \, \dint x \leq \Delta t^{2\gamma - 2} +\int_{\Delta t}^{1} x^{2 \gamma - 3} \, \dint x +\int_{1}^{\infty}  \mathrm{e}^{-2 \mu_k x} \, \dint x 
\\& 
\lesssim  \Delta t^{2\gamma - 2} + \Delta t^{2\gamma - 2} +\frac1{\mu_k}  \, ,
\end{align*}
For $\gamma=1$, we get 
$ J(\Delta t)\lesssim 1 + |\log(\Delta t)|$ in a similar way.  
Thus
\begin{equation} \label{J}
(2) \lesssim  \mu_k^{2 \gamma - 1} \mathrm{e}^{- \mu_k  n\Delta t } \cdot\ \begin{cases}
      \Delta t^{2 \gamma - 1} & \, , \quad \frac12<\gamma < 1 ,\\[0.2cm]
    \Delta t \abs{\log(\Delta t)}& \, , \quad \gamma = 1, \\[0.2cm]
     \Delta t & \, , \quad \gamma > 1.
\end{cases} 
\end{equation}
Thus by \eqref{(1)} and \eqref{J}
\begin{equation*}
\big|\rho_c(n\Delta t) - \Tilde{\rho}(n)\big| \leq (1) + (2) \lesssim \mu_k^{2 \gamma - 1} \mathrm{e}^{- \mu_k  n\Delta t } \Big(\Delta t \abs{\log(\Delta t)} + 
\Delta t^{\min(2\gamma - 1, 1)} \Big) \, . \qedhere
\end{equation*}

\end{proof}

\begin{lemma} \label{lem:inversion-trick}   Assume $\eta > 0$, $\epsilon > 0$, and there are  polynomials $p$ and $q$ of order $n$ with real coefficients, such that 
%
$$ \sup_{|z| \leq 1} \left|(1 - z)^{\eta} - p(z) / q(z) \right| < \epsilon \, .$$
Then for  any $\delta \in( 0,1)$ such that $\delta^{-\eta} \epsilon < 1$,
$$\sup_{|z| \leq 1 - \delta} \big| (1 - z)^{-\eta} - q(z) / p(z) \big| < \frac{\delta^{-2 \eta}\epsilon}{1 - \delta^{-\eta} \epsilon} \, .$$
\end{lemma}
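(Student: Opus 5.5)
The plan is to write the difference of reciprocals as a quotient and bound numerator and denominator separately. Set $f(z) := (1-z)^{\eta}$ and $r(z) := p(z)/q(z)$. On $|z|\le 1$ we have $r = f + e$ with $\sup_{|z|\le1}|e(z)| < \epsilon$. Wherever $f$ and $r$ do not vanish,
\[
\frac{1}{f(z)} - \frac{1}{r(z)} = \frac{r(z)-f(z)}{f(z)\,r(z)} = \frac{e(z)}{f(z)\bigl(f(z)+e(z)\bigr)} ,
\]
and the left-hand side is exactly $(1-z)^{-\eta} - q(z)/p(z)$. The whole proof therefore reduces to a lower bound on $|f|$ on the smaller disc $|z|\le 1-\delta$.

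First I will bound $|f|$ from below. For $|z|\le 1-\delta$ we have $|1-z| \ge 1-|z| \ge \delta$, so $|f(z)| = |1-z|^{\eta} \ge \delta^{\eta}$. This uses the principal branch; the modulus identity $|(1-z)^\eta| = |1-z|^\eta$ holds regardless of the argument. Next, $|r(z)| \ge |f(z)| - |e(z)| > \delta^{\eta} - \epsilon$. This is strictly positive precisely because of the hypothesis $\delta^{-\eta}\epsilon < 1$.

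This positivity is the step requiring care. It shows that $r$ has no zeros on $|z|\le 1-\delta$, so $q/p$ is well defined and finite there. If $p$ and $q$ share a common factor vanishing in this disc, we interpret $q/p$ as the reduced rational function, which agrees with $1/r$. Since $r$ is finite and nonzero, $p$ does not vanish after cancellation.

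Combining these bounds gives
\[
\bigl|(1-z)^{-\eta} - q(z)/p(z)\bigr| < \frac{\epsilon}{\delta^{\eta}(\delta^{\eta}-\epsilon)} = \frac{\delta^{-2\eta}\epsilon}{1-\delta^{-\eta}\epsilon}
\]
for every $|z|\le 1-\delta$. The strict inequality survives taking the supremum because the disc is compact and all the functions involved are continuous there. The bound is uniform in $z$, which yields the stated estimate.

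No step is genuinely hard. The main point to watch is the non-vanishing of $p$ on the shrunken disc, which is exactly what the condition $\delta^{-\eta}\epsilon<1$ supplies.
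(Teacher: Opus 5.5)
Your proof is correct and is essentially the paper's argument: both use $|1-z|\ge\delta$ on the smaller disc together with the factorization $(1-z)^{-\eta}-q/p = (1-z)^{-\eta}\,(q/p)\,\bigl(p/q-(1-z)^{\eta}\bigr)$. The only difference is how $|q/p|$ is controlled. The paper uses the self-referential bound $|q/p|\le \mathrm{er}(z)+\delta^{-\eta}$ and then solves for $\mathrm{er}(z)$. You instead bound $|p/q|\ge\delta^{\eta}-\epsilon>0$ directly, which has the small advantage of showing explicitly that $p$ does not vanish on $|z|\le 1-\delta$ (so $\mathrm{er}(z)$ is finite), a fact the paper's rearrangement uses tacitly.
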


\begin{proof} 
 Let $\abs{z}\leq 1-\delta$. Then  $\abs{1 - z} \geq 1 - \abs{z} \geq \delta$, $\abs{1 - z}^{-\eta} \leq \delta^{-\eta}$  and 
\begin{align*} 
\text{er}(z) &\!:= \abs{(1-z)^{-\eta} - q(z) / p(z)} 
= \abs{(1-z)^{-\eta} q(z) / p(z)} \cdot \abs{(1-z)^{\eta} - p(z) / q(z)} \\ & \leq \delta^{-\eta} \epsilon \big|q(z) / p(z)\big| \, .
\end{align*}
 Since $ \abs{q(z) / p(z)} \leq \abs{q(z) / p(z) - (1 - z)^{-\eta}} +  \abs{(1 - z)^{-\eta}} < \text{er}(z) + \delta^{-\eta}$ and $\delta^{-\eta} \epsilon < 1$, it follows that 
\begin{equation*}
\text{er}(x) < \delta^{-\eta} \epsilon \, (\text{er}(x) + \delta^{-\eta}) 
%
\qquad \iff\qquad 
\text{er}(x) < \frac{\delta^{-2 \eta}\epsilon}{1 - \delta^{-\eta} \epsilon} \, . \qedhere
\end{equation*}
\end{proof}

\begin{proof}[Proof of Theorem \ref{thm:method-convergence}]

Let $c$ be the sequence from Lemma \ref{lem:toy-model-convergence} and define the time series $x = \{x_j\}_{j = -\infty}^{\infty}$ by
$$x = \sum_{j = -\infty}^\infty \psi_j B^j c \, ,$$
where $\psi = \{\psi_j\}_{j = -\infty}^\infty$
is a sequence of i.i.d. Gaussian variables with variance $\sigma^2_{ k}$. Element-wise this is equivalent to the MA($\infty$)-process $x_j = \sum_{\ell = 0}^\infty \psi_\ell c_{j - \ell}$, since $c_{\ell} = 0$ for $\ell < 0$. The autocorrelation function of $x$ is then $\Tilde{\rho}$ from Lemma \ref{lem:toy-model-convergence}, and $x$ satisfies $(1 - \mathrm{e}^{-\mu_k \Delta t} B)^\gamma x = \psi$ since
\begin{align*}
    (1 - \mathrm{e}^{-\mu_k \Delta t} B)^\gamma x &= \sum_{j = 0}^\infty \binom{\gamma}{j} (-1)^j \mathrm{e}^{- \mu_k j\Delta t } B^j x = \sum_{j = 0}^\infty \binom{\gamma}{j} (-1)^j \mathrm{e}^{- \mu_k j\Delta t } B^j \sum_{\ell = -\infty}^{\infty} \psi_\ell 
    B^\ell c\\
    &= \sum_{\ell = -\infty}^\infty 
    \psi_\ell B^\ell \sum_{j = 0}^{\infty} \binom{\gamma}{j} (-1)^j \mathrm{e}^{- \mu_k j\Delta t } B^j c = \sum_{\ell = -\infty}^\infty \psi_\ell B^\ell (1 - \mathrm{e}^{- \mu_k \Delta t} B)^\gamma c\\
    &= \sum_{\ell = -\infty}^\infty \psi_\ell B^\ell e_0 = \sum_{\ell = -\infty}^{\infty} \psi_\ell e_\ell = \psi\, .
\end{align*}
We now consider the sequence $\Tilde{c} = \{\Tilde{c}_\ell\}_{\ell = 0}^\infty$ which satisfy
\begin{equation*}
p(\mathrm{e}^{-\Delta t \mu_k} B) (1 - \mathrm{e}^{-\Delta t \mu_k} B)^{\lfloor \gamma \rfloor} \Tilde{c} = q(\mathrm{e}^{-\Delta t \mu_k} B) e_0 \, .
\end{equation*}
By a similar argument as above $\Tilde{x} = \sum_{\ell = -\infty}^\infty \psi_\ell B^\ell \Tilde{c}$ satisfies (\ref{eq:time-discrete}), and has autocorrelation function $\rho_{\Tilde{x}}$ defined by $$\rho_{\Tilde{x}}(m) = \Tilde{C} \sum_{j = 0}^\infty \Tilde{c}_j \Tilde{c}_{j + m}, \qquad \text{where}\qquad \Tilde{C} = \Big(\sum_{j = 0}^\infty \Tilde{c}_j^2\Big)^{-1}.$$
It is easy to show (cf. the proof of Lemma \ref{lem:toy-model-convergence} (i)) that the $Z$-transforms $f$ and $\Tilde{f}$ of $c$ and $\Tilde{c}$, respectively, satisfy
$$f(z) = (1 - \mathrm{e}^{-\mu_k \Delta t} z^{ -1 })^{-\gamma} \qquad \text{and}
\qquad \Tilde{f}(z) = \frac{q(\mathrm{e}^{-\Delta t \mu_k} z^{ -1 })}{p(\mathrm{e}^{-\Delta t \mu_k} z^{ -1 })}(1 - \mathrm{e}^{-\mu_k \Delta t} z^{ -1 })^{-\lfloor \gamma \rfloor} \, .$$
By picking $\delta = 1 - \mathrm{e}^{- \mu_k \Delta t}$, Lemma \ref{lem:inversion-trick}, the fact that the $Z$-transform is an 
isometric 
 isomorphism  between $\ell^2(\mathbb{Z})$ and $\mathcal{L}^2(\mathbb{T})$ (a consequence of Parseval's theorem  for the $z$-transform), the bound $1 - \cos(\theta) \geq \frac{2}{\pi^2} \theta^2$, a change of variable in the final step to get the bound $\lesssim \delta^{1 - 2 \lfloor \gamma \rfloor}$ on the integral,  and finally $\eta=\gamma-\lfloor\gamma\rfloor$,  we get
\begin{align}
\sum_{\ell = 0}^\infty (c_\ell - \Tilde{c}_\ell)^2 &= \int_{-\pi}^{\pi} \big|f(\mathrm{e}^{i \theta}) - \Tilde{f}(\mathrm{e}^{i \theta})\big|^2 \dint \theta 
\nonumber\\
& = \int_{-\pi}^{\pi} \big|(1 - \mathrm{e}^{-\mu_k \Delta t} \mathrm{e}^{i \theta})^{-\lfloor \gamma \rfloor}\big|^2 \bigg|(1 - \mathrm{e}^{-\mu_k \Delta t} \mathrm{e}^{i \theta})^{-\eta} -\frac{q(\mathrm{e}^{-\Delta t \mu_k} \mathrm{e}^{i \theta})}{p(\mathrm{e}^{-\Delta t \mu_k} \mathrm{e}^{i \theta})} \bigg|^2 \, \dint \theta \nonumber\\
& \leq \left(\frac{\delta^{-2 \eta}\epsilon}{1 - \delta^{-\eta} \epsilon}\right)^2 \int_{-\pi}^{\pi} (1 + 
\mathrm{e}^{-2 \mu_k \Delta t} - 2 \mathrm{e}^{\mu_k \Delta t} \cos(\theta))^{-\lfloor \gamma \rfloor} \, \dint \theta \nonumber\\
& = \left(\frac{\delta^{-2 \eta}\epsilon}{1 - \delta^{-\eta} \epsilon}\right)^2 \int_{-\pi}^{\pi} ((1 - \mathrm{e}^{-\mu_k \Delta t})^2 + 2 \mathrm{e}^{-
\mu_k \Delta t} (1 - \cos(\theta))^{-\lfloor \gamma \rfloor} \, \dint \theta \nonumber\\
& \leq \left(\frac{\delta^{-2 \eta}\epsilon}{1 - \delta^{-\eta} \epsilon}\right)^2 \int_{-\pi}^{\pi} \left(\delta^2 + \frac{4 \mathrm{e}^{-
\mu_k \Delta t}}{\pi^2} \theta^2 \right)^{-\lfloor \gamma \rfloor} \, \dint \theta \nonumber\\
& \lesssim \frac{\delta^{-4 \eta} \epsilon^2}{(1 - \delta^{-\eta} \epsilon)^2} \delta^{1 - 2 \lfloor \gamma \rfloor} = \frac{\delta^{1 - 2\gamma - 2
\eta}\epsilon^2}{(1 - \delta^{-\eta} \epsilon)^2} \, .\label{c-c}
\end{align}
 We also need bounds on $\sum_{j = 1}^\infty c_j^2$ and $\sum_{j = 1}^\infty \Tilde{c}_j^2$ to proceed.
Since $\gamma>\tfrac12$, 
\begin{align}\label{sumc}
\Delta t^{2 \gamma - 1} \sum_{j = 1}^\infty c_j^2 \lesssim \Delta t^{2 \gamma - 1} \sum_{j = 1}^\infty j^{2 \gamma - 2} \mathrm{e}^{- 2 \mu_k j\Delta t} = \Delta t \sum_{j = 1}^\infty (j\Delta t )^{2 \gamma - 2} \mathrm{e}^{- 2 \mu_k j\Delta t} \lesssim 
\tfrac 1{2\gamma-1} \, .
\end{align}
It follows that $\Delta t^{2 \gamma - 1} \sum_{j = 1}^\infty \Tilde{c}_j^2 \lesssim 1$, since $\tilde c^2_j\leq 2(c_j-\tilde c_j)^2+2 c^2_j$.
Let $\tilde{\rho}$ be defined in Lemma \ref{lem:toy-model-convergence}. We estimate $\left|\Tilde{\rho}(n) - \rho_{\Tilde{x}}(n) \right|$ next,
\begin{align*}
&\left|\Tilde{\rho}(n) - \rho_{\Tilde{x}}(n) \right| =
\bigg|C \Delta t^{2 \gamma - 1} \sum_{j = 0}^{\infty} c_j c_{j + n} -  \Tilde{C} \sum_{j = 0}^{\infty} \Tilde{c}_j \Tilde{c}_{j + n} \bigg|\\
& \leq C \Delta t^{2 \gamma - 1} \bigg|\sum_{j = 0}^{\infty} c_j c_{j + n} -  \sum_{j = 0}^{\infty} \Tilde{c}_j \Tilde{c}_{j + n} \bigg| + \big| C \Delta t^{2 \gamma - 1} - \Tilde{C} \big| \bigg| \sum_{j = 0}^\infty \Tilde{c}_j \Tilde{c}_{j + n} \bigg| \\
& = (1) + (2).
\end{align*}
 We can then bound $(1)$ using \eqref{c-c},
\begin{align*}
(1) & 
\leq C \Delta t^{2 \gamma - 1} \bigg(\bigg|\sum_{j = 0}^{\infty} c_j (\Tilde{c}_{j + n} - c_{j + n}) \bigg| +  \bigg| \sum_{j = 0}^{\infty} (\Tilde{c}_j - c_j) \Tilde{c}_{j + n} \bigg| \bigg) \\[0.2cm]
& \leq C \Delta t^{2 \gamma - 1} \bigg(\sum_{j = 0}^{\infty} (c_j - \Tilde{c}_j)^2\bigg)^{\frac12} \bigg( \bigg(\sum_{j = 0}^\infty c_j^2\bigg)^{\frac12} + \bigg(\sum_{j = 0}^\infty \Tilde{c}_j^2\bigg)^{\frac12}\bigg) \\& 
\lesssim \mu_k^{2 \gamma - 1} \mathrm{e}^{- \mu_k  n\Delta t } \frac{\epsilon \Delta t^{\gamma - \frac{1}{2}} \delta^{ \frac{1}{2} -\gamma - \eta}}{1 - \delta^{-\eta} \epsilon} \, .
\end{align*}
Recall the defintion $\rho_c$ in \eqref{process-correlation-function}.  We can then bound $(2)$ by  \eqref{c-c},\eqref{sumc}, Lemma \ref{lem:toy-model-convergence}, and $\rho_c(0)=1$,
\begin{align*}
(2) &\leq \big| C \Delta t^{2 \gamma - 1} - \Tilde{C} \big| \bigg| \sum_{j = 0}^\infty \Tilde{c}_j^2 \bigg| = \bigg| C \Delta t^{2 \gamma - 1} \sum_{j = 0}^\infty \Tilde{c}_j^2 - 1 \bigg| \\
& \leq \bigg| C \Delta t^{2 \gamma - 1} \sum_{j = 0}^\infty \Big[(\Tilde{c}_j - c_j)^2 
+ 2c_j(\Tilde{c}_j - c_j)
+ c_j^2  \Big] - \rho_c(0) \bigg| + \left| \rho_c(0) - 1 \right| \\
& \leq C \Delta t^{2 \gamma - 1} \sum_{j = 0} (\Tilde{c}_j - c_j)^2 + 2C \Delta t^{2 \gamma - 1} \bigg(\sum_{j = 0}^\infty c_j^2\bigg)^{\frac12} \bigg(\sum_{j = 0}^\infty (\Tilde{c}_j - c_j)^2\bigg)^{\frac12} + |\Tilde{\rho}(0) - \rho_c(0)| \\ 
& \lesssim \mu_k^{2 \gamma - 1} \left( \frac{\epsilon\Delta t^{\gamma - \frac{1}{2}} \delta^{\frac{1}{2}-\gamma - \eta}}{1 - \delta^{-\eta} \epsilon} + \Delta t \abs{\log(\Delta t)} + \Delta t^{\min(2 \gamma - 1, 1)} \right)\, .
\end{align*}
Finally,  since  $\delta = 1 - \exp(-\mu_k \Delta t)
\geq\mu_k \Delta t$  and $\gamma + \eta > \tfrac12$,  by the above bounds and Lemma \ref{lem:toy-model-convergence},
\begin{align*}
\left|\rho_c(n\Delta t) - \rho_{\Tilde{x}}(n) \right| & \leq \left| \rho_c(n\Delta t) - \Tilde{\rho}(n) \right| + \left|\Tilde{\rho}(n) - \rho_{\Tilde{x}}(n) \right| \\
& \lesssim (1) + (2) + \mu_k^{2 \gamma - 1} \mathrm{e}^{- \mu_k  n\Delta t } \left( \Delta t \abs{\log(\Delta t)} + \mu_k^{2 \gamma - 1}\Delta t^{\min(2\gamma - 1, 1)} \right)  \\[0.2cm]
& \lesssim \mu_k^{2 \gamma - 1} \left(\frac{\epsilon \Delta t^{-\eta} }{1 - \delta^{-\eta} \epsilon} + \Delta t \abs{\log(\Delta t)} + \Delta t^{\min(2\gamma - 1, 1)} \right)\, . \hfill\qedhere
\end{align*}
\end{proof}

\subsection{Proof of Theorem \ref{thm:full-convergence}} \label{suppl:proof-thm-3}

\begin{proof}[Proof of Theorem \ref{thm:full-convergence}]

\begin{align*}
    & \, \quad \abs{C_u(\mathbf{s}_1, \mathbf{s}_2, {n \Delta t}) - \Tilde{C}_u^M(\mathbf{s}_1, \mathbf{s}_2, n)} \\
    & \leq \abs{C_u(\mathbf{s}_1, \mathbf{s}_2, {n \Delta t}) - C_u^M(\mathbf{s}_1, \mathbf{s}_2, {n \Delta t})} + \abs{C_u^M(\mathbf{s}_1, \mathbf{s}_2, {n \Delta t}) - \Tilde{C}_u^M(\mathbf{s}_1, \mathbf{s}_2, n)} \\
    & = (1) + (2) \, .
\end{align*}
By Theorem \ref{thm:weak-spectral-convergence}, $(1) \lesssim M^{-\frac{2\nu_s}{d}}$. We bound $(2)$ by 
\begin{align*}
    (2) &\leq \sum_{k = 1}^M \frac{\lambda_k \Gamma(2 \gamma - 1)}{(2 \mu_k)^{2 \gamma - 1}\Gamma(\gamma)^2} \abs{\rho_{c_k}({n \Delta t}) - \rho_{\Tilde{c}_k}(n)} \abs{f_k(\mathbf{s}_1)} \abs{f_k(\mathbf{s}_2)} \\
    & \lesssim \left( \frac{\epsilon \Delta t^{- \eta}}{1 - r} + \Delta t \abs{\log(\Delta t)} + \Delta t^{\min(\gamma - \frac{1}{2}, 1)} \delta_{\gamma}(\Delta t) \right) \frac{\Gamma(2 \gamma - 1)}{2^{2 \gamma - 1} \Gamma(\gamma)^2} \sum_{k = 1}^M \lambda_k\\
    & \lesssim C(M) \left(\frac{\epsilon \Delta t^{ - \eta}}{1 - r} + \Delta t \abs{\log(\Delta t)} + \Delta t^{\min(\gamma - \frac{1}{2}, 1)} \delta_{\gamma}(\Delta t)\right) \, ,
\end{align*}
where $C(M) := \sum_{k = 1}^M \lambda_k$, we applied Theorem \ref{thm:method-convergence}, and used that $\|f_k\|_\infty \lesssim 1$ for the cosine/sine basis on rectangular domains. 
 By the Weyl bound for eigenvalues of the Laplacian,  $\xi_k \lesssim k^{\frac{2}{d}}$ (\cite{davies1995,Lablee2015-pi}), we have  $\lambda_k\lesssim k^{-\frac{2\beta}d}$ and $C(M) \lesssim \sum_{k = 1}^\infty k^{-\frac{2\beta}{d}}$.
If $\beta > \frac{d}{2}$, then  $C(M) 
\lesssim 1$ (the covariance operator $Q = (\kappa^2 - \Delta)^{-\beta}$ is Hilbert-Schmidt). If $\beta < \frac{d}{2}$, 
$C(M)\leq  1 + \int_{1}^M x^{-\frac{2\beta}{d}} \dint x\lesssim M^{1 - \frac{2 \beta}{d}}$. Finally if $\beta = \frac{d}{2}$, $C(M)\lesssim \log(M)$.\qedhere
\end{proof}

\subsection{The sequential Kalman filter} \label{suppl:seq-Kalman-filter}

The main idea behind the sequential Kalman filter is to modify the joint update and replace it with a sequential update. This section should be read together with Section \ref{sec:ModelAndInf} in the main text, and notation is defined there. Let $\hat{\mathbf{S}}_{n,0} = \tilde{\mathbf{S}}_n$ and $\hat{\mathbf{m}}_{n,0} = \tilde{\mathbf{m}}_n$, and, for $j = 1, \ldots, n_{\mathrm{obs}}^n$, let
\begin{equation*}
    a_{j}^n = \boldsymbol{h}_{n,j}^\mathrm{T} \Hat{\mathbf{S}}_{n,j - 1} \boldsymbol{h}_{n,j} + \sigma_{\mathrm{obs}}^2, \quad \text{and} \quad
    \Hat{y}_{n,j} := \boldsymbol{h}_{n,j}^\mathrm{T} \Hat{\boldsymbol{m}}_{n,j-1} +\boldsymbol{g}_{n,j}^T
    \boldsymbol{\beta},
\end{equation*}
where $\boldsymbol{h}_{n,j}^\mathrm{T}$ is row $j$ in $\mathbf{H}_n$ and $\boldsymbol{g}_{n,j}^T$ is row $j$ in $\mathbf{G}_n$. We then let
\(
    \boldsymbol{k}_{n,j} = (\Hat{\mathbf{S}}_{n,j-1} \boldsymbol{h}_{n,j})/a_{n,j} \, ,
\)
and 
\begin{equation*}
    \Hat{\boldsymbol{m}}_{n,j} = \Hat{\boldsymbol{m}}_{n,j - 1} + \boldsymbol{k}_{n,j} \left(y_j^n - \Hat{y}_{n,j}\right), \quad \text{and}\quad
    \Hat{\mathbf{S}}_{n,j} = \Hat{\mathbf{S}}_{n,j - 1} - \boldsymbol{k}_{n,j} \boldsymbol{h}_{n,j}^\mathrm{T} \Hat{\mathbf{S}}_{n,j - 1}  \, .
\end{equation*} 
Finally, we set $\Hat{\boldsymbol{m}}_n = \Hat{\boldsymbol{m}}_{n,n_{\mathrm{obs}}^n}$ and $\Hat{\mathbf{S}}_n = \Hat{\mathbf{S}}_{n, n_{\mathrm{obs}}^n}$. The final form of the log-likelihood becomes
\begin{equation*}
    \ell(\boldsymbol{\theta}; \boldsymbol{y}^1,\ldots,\boldsymbol{y}^N) = \sum_{n = 1}^N \sum_{j = 1}^{n_{\mathrm{obs}}^n} \log(\phi(y_j^n; \Hat{y}_{n,j}, a_{n,j})) \, ,
\end{equation*}
and the estimated parameters using maximum likelihood are
\[
    \hat{\boldsymbol{\theta}}_{\mathrm{MLE}}=\arg\max_{\boldsymbol{\theta}}  \ell(\boldsymbol{\theta}; \boldsymbol{y}^1,\ldots,\boldsymbol{y}^N).
\]
For a more in-depth description of the (sequential) Kalman filter, we refer the reader to \cite{Anderson2005-hn}, or an equivalent textbook on Kalman filtering.

\subsection{Case study - CV parameter estimates} \label{suppl:case-study-parameter-estimate-tables}

In section contains tables with parameter estimates for each of the five folds used in the cross-validation in Section \ref{sec:sensitivity} in the main text. Parameter estimates for the linear model can found in Table \ref{tab:param_est_lin}. Parameter estimates for the SPDE-model for different spatial resolutions can be found in Tables \ref{tab:param_est_4}-\ref{tab:param_est_16}.

\begin{table}[H]\footnotesize\centering
    \caption{Parameter estimates in the linear model across the five folds}
    \label{tab:param_est_lin}
    \begin{tabular}{c|c|c|c|c}
    & \multicolumn{4}{c}{Linear regression} \\ \hline
    & $\beta_0$ & $\beta_1$ & $\beta_2$ & $\beta_3$ \\ \hline
    
Fold \#1 & 21.114 & -4.949 & 1.587 & -0.296\\ \hline
Fold \#2 & 21.430 & -4.945 & 1.638 & -0.304\\ \hline
Fold \#3 & 21.120 & -5.031 & 1.438 & -0.296\\ \hline
Fold \#4 & 21.130 & -4.897 & 1.404 & -0.296\\ \hline
Fold \#5 & 20.669 & -4.978 & 1.785 & -0.287\\ \hline
\end{tabular}
          \end{table}
         
\begin{table}[H]\footnotesize\centering
    \caption{Parameter estimates across five folds for $M=4^2$}
    \label{tab:param_est_4}
    \begin{tabular}{c|c|c|c|c|c|c|cV{3}c|c|c|c}
    & \multicolumn{7}{cV{3}}{Full model} & \multicolumn{4}{c}{Reduced model} \\ \hline
    & $\nu_t$ & $\nu_s$ & $\beta_s$ & $r_t$ &  $r_s$ & $\sigma$ & $\sigma_{\mathrm{obs}}$ & $r_t$ &  $r_s$ & $\sigma$ & $\sigma_{\mathrm{obs}}$ \\ \hline
    
Fold \#1 & 0.668 & 0.263 & 0.447 & 6.842 & 947.817 & 3.418 & 1.241 & 6.313 & 681.016 & 2.501 & 1.241\\ \hline
Fold \#2 & 0.500 & 0.277 & 0.226 & 12.480 & 854.076 & 3.202 & 1.170 & 6.394 & 666.932 & 2.532 & 1.170\\ \hline
Fold \#3 & 0.483 & 0.283 & 0.242 & 14.086 & 833.805 & 3.352 & 1.132 & 6.827 & 631.080 & 2.579 & 1.132\\ \hline
Fold \#4 & 0.343 & 0.258 & 0.303 & 18.446 & 942.113 & 3.682 & 1.103 & 6.502 & 641.299 & 2.567 & 1.103\\ \hline
Fold \#5 & 0.448 & 0.267 & 0.235 & 14.914 & 809.928 & 3.376 & 1.220 & 6.963 & 617.563 & 2.603 & 1.220\\ \hline
\end{tabular}
          \end{table}
         
\begin{table}[H]\footnotesize\centering
    \caption{Parameter estimates across five folds for $M=5^2$}
    \label{tab:param_est_5}
    \begin{tabular}{c|c|c|c|c|c|c|cV{3}c|c|c|c}
    & \multicolumn{7}{cV{3}}{Full model} & \multicolumn{4}{c}{Reduced model} \\ \hline
    & $\nu_t$ & $\nu_s$ & $\beta_s$ & $r_t$ &  $r_s$ & $\sigma$ & $\sigma_{\mathrm{obs}}$ & $r_t$ &  $r_s$ & $\sigma$ & $\sigma_{\mathrm{obs}}$ \\ \hline
    
Fold \#1 & 0.500 & 0.290 & 0.153 & 26.143 & 779.901 & 4.514 & 1.154 & 13.879 & 571.846 & 3.433 & 1.154\\ \hline
Fold \#2 & 0.495 & 0.265 & 0.092 & 16.610 & 702.041 & 3.695 & 1.098 & 11.899 & 597.569 & 3.190 & 1.099\\ \hline
Fold \#3 & 0.504 & 0.301 & 0.152 & 16.242 & 792.585 & 3.696 & 1.070 & 9.302 & 605.900 & 2.961 & 1.071\\ \hline
Fold \#4 & 0.498 & 0.405 & 0.189 & 40.710 & 866.570 & 5.473 & 1.047 & 17.949 & 588.664 & 3.928 & 1.047\\ \hline
Fold \#5 & 0.488 & 0.310 & 0.169 & 30.926 & 819.349 & 4.817 & 1.153 & 14.427 & 576.784 & 3.529 & 1.153\\ \hline
\end{tabular}
          \end{table}
         
\begin{table}[H]\footnotesize\centering
    \caption{Parameter estimates across five folds for $M=6^2$}
    \label{tab:param_est_6}
    \begin{tabular}{c|c|c|c|c|c|c|cV{3}c|c|c|c}
    & \multicolumn{7}{cV{3}}{Full model} & \multicolumn{4}{c}{Reduced model} \\ \hline
    & $\nu_t$ & $\nu_s$ & $\beta_s$ & $r_t$ &  $r_s$ & $\sigma$ & $\sigma_{\mathrm{obs}}$ & $r_t$ &  $r_s$ & $\sigma$ & $\sigma_{\mathrm{obs}}$ \\ \hline
    
Fold \#1 & 0.507 & 0.257 & 0.057 & 10.201 & 764.848 & 3.150 & 1.111 & 8.728 & 656.861 & 2.834 & 1.112\\ \hline
Fold \#2 & 0.496 & 0.305 & 0.028 & 9.040 & 746.633 & 2.976 & 1.060 & 8.442 & 677.199 & 2.792 & 1.060\\ \hline
Fold \#3 & 0.511 & 0.263 & 0.063 & 10.098 & 778.393 & 3.140 & 1.038 & 8.359 & 668.098 & 2.812 & 1.038\\ \hline
Fold \#4 & 0.501 & 0.277 & 0.099 & 14.001 & 785.857 & 3.455 & 1.006 & 9.989 & 620.306 & 2.952 & 1.006\\ \hline
Fold \#5 & 0.505 & 0.252 & 0.067 & 10.179 & 766.764 & 3.182 & 1.109 & 8.251 & 645.958 & 2.806 & 1.109\\ \hline
\end{tabular}
          \end{table}
         
\begin{table}[H]\footnotesize\centering
    \caption{Parameter estimates across five folds for $M=7^2$}
    \label{tab:param_est_7}
    \begin{tabular}{c|c|c|c|c|c|c|cV{3}c|c|c|c}
    & \multicolumn{7}{cV{3}}{Full model} & \multicolumn{4}{c}{Reduced model} \\ \hline
    & $\nu_t$ & $\nu_s$ & $\beta_s$ & $r_t$ &  $r_s$ & $\sigma$ & $\sigma_{\mathrm{obs}}$ & $r_t$ &  $r_s$ & $\sigma$ & $\sigma_{\mathrm{obs}}$ \\ \hline
    
Fold \#1 & 0.500 & 0.305 & 0.000 & 8.952 & 731.717 & 3.019 & 1.081 & 9.606 & 670.858 & 2.928 & 1.082\\ \hline
Fold \#2 & 0.500 & 0.424 & 0.006 & 8.239 & 747.613 & 2.856 & 1.038 & 8.201 & 713.173 & 2.792 & 1.038\\ \hline
Fold \#3 & 0.478 & 0.281 & 0.001 & 11.086 & 731.936 & 3.262 & 1.011 & 11.488 & 662.553 & 3.132 & 1.012\\ \hline
Fold \#4 & 0.505 & 0.255 & 0.001 & 9.089 & 705.228 & 3.048 & 0.975 & 9.806 & 642.155 & 2.936 & 0.975\\ \hline
Fold \#5 & 0.500 & 0.257 & 0.016 & 8.776 & 743.324 & 3.054 & 1.082 & 8.930 & 656.649 & 2.873 & 1.082\\ \hline
\end{tabular}
          \end{table}
         
\begin{table}[H]\footnotesize\centering
    \caption{Parameter estimates across five folds for $M=8^2$}
    \label{tab:param_est_8}
    \begin{tabular}{c|c|c|c|c|c|c|cV{3}c|c|c|c}
    & \multicolumn{7}{cV{3}}{Full model} & \multicolumn{4}{c}{Reduced model} \\ \hline
    & $\nu_t$ & $\nu_s$ & $\beta_s$ & $r_t$ &  $r_s$ & $\sigma$ & $\sigma_{\mathrm{obs}}$ & $r_t$ &  $r_s$ & $\sigma$ & $\sigma_{\mathrm{obs}}$ \\ \hline
    
Fold \#1 & 0.474 & 0.410 & 0.003 & 11.022 & 757.352 & 3.157 & 1.063 & 10.687 & 711.665 & 3.080 & 1.064\\ \hline
Fold \#2 & 0.468 & 0.528 & 0.002 & 12.157 & 725.790 & 3.195 & 1.020 & 11.209 & 733.189 & 3.177 & 1.020\\ \hline
Fold \#3 & 0.467 & 0.375 & 0.003 & 12.256 & 760.707 & 3.317 & 0.993 & 11.953 & 699.640 & 3.220 & 0.993\\ \hline
Fold \#4 & 0.483 & 0.434 & 0.004 & 13.860 & 714.258 & 3.450 & 0.955 & 13.416 & 679.503 & 3.382 & 0.955\\ \hline
Fold \#5 & 0.487 & 0.414 & 0.003 & 11.718 & 746.185 & 3.310 & 1.064 & 11.639 & 702.842 & 3.241 & 1.064\\ \hline
\end{tabular}
          \end{table}
         
\begin{table}[H]\footnotesize\centering
    \caption{Parameter estimates across five folds for $M=9^2$}
    \label{tab:param_est_9}
    \begin{tabular}{c|c|c|c|c|c|c|cV{3}c|c|c|c}
    & \multicolumn{7}{cV{3}}{Full model} & \multicolumn{4}{c}{Reduced model} \\ \hline
    & $\nu_t$ & $\nu_s$ & $\beta_s$ & $r_t$ &  $r_s$ & $\sigma$ & $\sigma_{\mathrm{obs}}$ & $r_t$ &  $r_s$ & $\sigma$ & $\sigma_{\mathrm{obs}}$ \\ \hline
    
Fold \#1 & 0.435 & 0.443 & 0.005 & 12.998 & 790.260 & 3.293 & 1.050 & 11.286 & 743.969 & 3.167 & 1.050\\ \hline
Fold \#2 & 0.438 & 0.556 & 0.001 & 14.212 & 741.609 & 3.306 & 1.007 & 11.824 & 762.835 & 3.254 & 1.007\\ \hline
Fold \#3 & 0.478 & 0.419 & 0.003 & 9.310 & 793.232 & 2.987 & 0.980 & 9.184 & 745.906 & 2.931 & 0.980\\ \hline
Fold \#4 & 0.433 & 0.457 & 0.002 & 19.633 & 737.936 & 3.819 & 0.936 & 16.407 & 706.349 & 3.705 & 0.936\\ \hline
Fold \#5 & 0.471 & 0.429 & 0.003 & 11.384 & 781.446 & 3.240 & 1.049 & 10.936 & 736.157 & 3.167 & 1.049\\ \hline
\end{tabular}
          \end{table}
         
\begin{table}[H]\footnotesize\centering
    \caption{Parameter estimates across five folds for $M=10^2$}
    \label{tab:param_est_10}
    \begin{tabular}{c|c|c|c|c|c|c|cV{3}c|c|c|c}
    & \multicolumn{7}{cV{3}}{Full model} & \multicolumn{4}{c}{Reduced model} \\ \hline
    & $\nu_t$ & $\nu_s$ & $\beta_s$ & $r_t$ &  $r_s$ & $\sigma$ & $\sigma_{\mathrm{obs}}$ & $r_t$ &  $r_s$ & $\sigma$ & $\sigma_{\mathrm{obs}}$ \\ \hline
    
Fold \#1 & 0.356 & 0.529 & 0.000 & 35.565 & 770.106 & 4.327 & 1.033 & 18.780 & 773.414 & 4.007 & 1.033\\ \hline
Fold \#2 & 0.391 & 0.584 & 0.001 & 23.525 & 750.584 & 3.860 & 0.992 & 13.979 & 788.240 & 3.527 & 0.992\\ \hline
Fold \#3 & 0.492 & 0.514 & 0.002 & 9.050 & 775.913 & 2.940 & 0.969 & 8.757 & 775.471 & 2.912 & 0.969\\ \hline
Fold \#4 & 0.345 & 0.549 & 0.001 & 84.421 & 737.253 & 5.384 & 0.922 & 19.168 & 744.186 & 4.029 & 0.923\\ \hline
Fold \#5 & 0.373 & 0.506 & 0.001 & 42.550 & 768.166 & 4.803 & 1.031 & 22.214 & 757.225 & 4.383 & 1.031\\ \hline
\end{tabular}
          \end{table}
         
\begin{table}[H]\footnotesize\centering
    \caption{Parameter estimates across five folds for $M=11^2$}
    \label{tab:param_est_11}
    \begin{tabular}{c|c|c|c|c|c|c|cV{3}c|c|c|c}
    & \multicolumn{7}{cV{3}}{Full model} & \multicolumn{4}{c}{Reduced model} \\ \hline
    & $\nu_t$ & $\nu_s$ & $\beta_s$ & $r_t$ &  $r_s$ & $\sigma$ & $\sigma_{\mathrm{obs}}$ & $r_t$ &  $r_s$ & $\sigma$ & $\sigma_{\mathrm{obs}}$ \\ \hline
    
Fold \#1 & 0.390 & 0.549 & 0.001 & 20.786 & 779.380 & 3.740 & 1.020 & 15.183 & 797.074 & 3.662 & 1.020\\ \hline
Fold \#2 & 0.415 & 0.582 & 0.001 & 20.288 & 761.383 & 3.749 & 0.978 & 15.436 & 805.602 & 3.710 & 0.978\\ \hline
Fold \#3 & 0.452 & 0.598 & 0.003 & 11.778 & 745.517 & 3.099 & 0.960 & 9.960 & 801.864 & 3.090 & 0.960\\ \hline
Fold \#4 & 0.391 & 0.567 & 0.000 & 20.653 & 743.175 & 3.686 & 0.914 & 12.705 & 775.713 & 3.373 & 0.914\\ \hline
Fold \#5 & 0.424 & 0.511 & 0.001 & 19.905 & 774.122 & 3.861 & 1.019 & 16.282 & 775.370 & 3.815 & 1.018\\ \hline
\end{tabular}
          \end{table}
         
\begin{table}[H]\footnotesize\centering
    \caption{Parameter estimates across five folds for $M=12^2$}
    \label{tab:param_est_12}
    \begin{tabular}{c|c|c|c|c|c|c|cV{3}c|c|c|c}
    & \multicolumn{7}{cV{3}}{Full model} & \multicolumn{4}{c}{Reduced model} \\ \hline
    & $\nu_t$ & $\nu_s$ & $\beta_s$ & $r_t$ &  $r_s$ & $\sigma$ & $\sigma_{\mathrm{obs}}$ & $r_t$ &  $r_s$ & $\sigma$ & $\sigma_{\mathrm{obs}}$ \\ \hline
    
Fold \#1 & 0.371 & 0.606 & 0.001 & 26.615 & 759.509 & 3.944 & 1.011 & 16.222 & 821.697 & 3.812 & 1.011\\ \hline
Fold \#2 & 0.394 & 0.634 & 0.000 & 32.211 & 746.099 & 4.321 & 0.969 & 19.174 & 827.976 & 4.133 & 0.968\\ \hline
Fold \#3 & 0.448 & 0.625 & 0.001 & 14.130 & 726.468 & 3.284 & 0.950 & 11.211 & 813.032 & 3.255 & 0.950\\ \hline
Fold \#4 & 0.409 & 0.595 & 0.001 & 17.549 & 738.966 & 3.482 & 0.905 & 12.410 & 800.743 & 3.369 & 0.904\\ \hline
Fold \#5 & 0.431 & 0.569 & 0.001 & 20.283 & 756.280 & 3.869 & 1.009 & 16.039 & 799.179 & 3.825 & 1.009\\ \hline
\end{tabular}
          \end{table}
         
\begin{table}[H]\footnotesize\centering
    \caption{Parameter estimates across five folds for $M=16^2$}
    \label{tab:param_est_16}
    \begin{tabular}{c|c|c|c|c|c|c|cV{3}c|c|c|c}
    & \multicolumn{7}{cV{3}}{Full model} & \multicolumn{4}{c}{Reduced model} \\ \hline
    & $\nu_t$ & $\nu_s$ & $\beta_s$ & $r_t$ &  $r_s$ & $\sigma$ & $\sigma_{\mathrm{obs}}$ & $r_t$ &  $r_s$ & $\sigma$ & $\sigma_{\mathrm{obs}}$ \\ \hline
    
Fold \#1 & 0.404 & 0.639 & 0.001 & 26.826 & 756.525 & 4.114 & 0.976 & 15.582 & 869.269 & 3.828 & 0.975\\ \hline
Fold \#2 & 0.412 & 0.817 & 0.000 & 109788.854 & 684.221 & 10.250 & 0.936 & 15.983 & 905.170 & 3.928 & 0.941\\ \hline
Fold \#3 & 0.465 & 0.763 & 0.000 & 167050.760 & 686.736 & 17.576 & 0.900 & 402.847 & 903.160 & 19.271 & 0.897\\ \hline
Fold \#4 & 0.416 & 0.652 & 0.001 & 19.128 & 742.379 & 3.628 & 0.873 & 12.960 & 869.257 & 3.522 & 0.872\\ \hline
Fold \#5 & 0.363 & 0.677 & 0.000 & 8960.525 & 731.473 & 8.088 & 0.971 & 14.896 & 862.843 & 3.806 & 0.975\\ \hline
\end{tabular}
          \end{table}
        
\subsection{Case study - CV prediction scores}
\label{suppl:case-study-cv-prediction-score-tables}
This section contains tables with RMSE and CRPS scores for both filtering and one-step forecasting each of the five folds used in the cross-validation in Section \ref{sec:sensitivity}. Scores for different spatial resolutions can be found in Tables \ref{tab:cross_val_4}-\ref{tab:cross_val_16}.

\begin{table}[H]
    \footnotesize\centering
    \caption{CRPS/RMSE for filtering/forecasting across the five folds for $M=4^2$. Smallest value is marked in bold.}
    \label{tab:cross_val_4}
    \begin{tabular}{cV{3}c|c|c|cV{3}c|c|c|c}
    & \multicolumn{4}{cV{3}}{RMSE} & \multicolumn{4}{c}{CRPS} \\ \hline 
    & \multicolumn{2}{c|}{Forecasting} & \multicolumn{2}{cV{3}}{Filtering} & \multicolumn{2}{c|}{Forecasting} & \multicolumn{2}{c}{Filtering} \\ \hline
    & Full & Simple & Full & Simple & Full & Simple & Full & Simple \\ \hlineB{3}
    
Fold \#1 & \textbf{2.149} & 2.295 & 1.073 & \textbf{1.073} & \textbf{1.187} & 1.274 & 0.576 & \textbf{0.576}\\ \hline
Fold \#2 & \textbf{2.354} & 2.439 & \textbf{1.319} & 1.319 & \textbf{1.304} & 1.359 & \textbf{0.708} & 0.708\\ \hline
Fold \#3 & \textbf{2.254} & 2.351 & \textbf{1.498} & 1.500 & \textbf{1.259} & 1.318 & \textbf{0.799} & 0.800\\ \hline
Fold \#4 & \textbf{2.643} & 2.674 & \textbf{1.692} & 1.694 & \textbf{1.460} & 1.482 & \textbf{0.887} & 0.888\\ \hline
Fold \#5 & \textbf{2.212} & 2.268 & \textbf{1.109} & 1.110 & \textbf{1.234} & 1.271 & \textbf{0.595} & 0.595\\ \hline
Total:  & \textbf{2.329} & 2.410 & \textbf{1.359} & 1.359 & \textbf{1.289} & 1.341 & \textbf{0.713} & 0.713\\ \hline
\end{tabular}
          \end{table}
         
\begin{table}[H]
    \footnotesize\centering
    \caption{CRPS/RMSE for filtering/forecasting across the five folds for $M=5^2$. Smallest value is marked in bold.}
    \label{tab:cross_val_5}
    \begin{tabular}{cV{3}c|c|c|cV{3}c|c|c|c}
    & \multicolumn{4}{cV{3}}{RMSE} & \multicolumn{4}{c}{CRPS} \\ \hline 
    & \multicolumn{2}{c|}{Forecasting} & \multicolumn{2}{cV{3}}{Filtering} & \multicolumn{2}{c|}{Forecasting} & \multicolumn{2}{c}{Filtering} \\ \hline
    & Full & Simple & Full & Simple & Full & Simple & Full & Simple \\ \hlineB{3}
    
Fold \#1 & 2.253 & \textbf{2.251} & 1.122 & \textbf{1.122} & 1.250 & \textbf{1.247} & 0.593 & \textbf{0.593}\\ \hline
Fold \#2 & \textbf{2.338} & 2.343 & 1.281 & \textbf{1.280} & \textbf{1.295} & 1.299 & 0.688 & \textbf{0.687}\\ \hline
Fold \#3 & \textbf{2.298} & 2.335 & \textbf{1.454} & 1.455 & \textbf{1.285} & 1.308 & \textbf{0.768} & 0.769\\ \hline
Fold \#4 & 2.648 & \textbf{2.617} & \textbf{1.558} & 1.558 & 1.475 & \textbf{1.455} & 0.816 & \textbf{0.816}\\ \hline
Fold \#5 & 2.275 & \textbf{2.259} & 1.005 & \textbf{1.004} & 1.269 & \textbf{1.258} & 0.550 & \textbf{0.550}\\ \hline
Total:  & 2.367 & \textbf{2.365} & \textbf{1.300} & 1.300 & 1.315 & \textbf{1.313} & 0.683 & \textbf{0.683}\\ \hline
\end{tabular}
          \end{table}
         
\begin{table}[H]
    \footnotesize\centering
    \caption{CRPS/RMSE for filtering/forecasting across the five folds for $M=6^2$. Smallest value is marked in bold.}
    \label{tab:cross_val_6}
    \begin{tabular}{cV{3}c|c|c|cV{3}c|c|c|c}
    & \multicolumn{4}{cV{3}}{RMSE} & \multicolumn{4}{c}{CRPS} \\ \hline 
    & \multicolumn{2}{c|}{Forecasting} & \multicolumn{2}{cV{3}}{Filtering} & \multicolumn{2}{c|}{Forecasting} & \multicolumn{2}{c}{Filtering} \\ \hline
    & Full & Simple & Full & Simple & Full & Simple & Full & Simple \\ \hlineB{3}
    
Fold \#1 & \textbf{2.242} & 2.259 & \textbf{1.062} & 1.062 & \textbf{1.245} & 1.254 & \textbf{0.561} & 0.561\\ \hline
Fold \#2 & \textbf{2.305} & 2.312 & \textbf{1.224} & 1.225 & \textbf{1.279} & 1.284 & \textbf{0.656} & 0.656\\ \hline
Fold \#3 & \textbf{2.303} & 2.334 & \textbf{1.349} & 1.354 & \textbf{1.288} & 1.308 & \textbf{0.717} & 0.719\\ \hline
Fold \#4 & \textbf{2.587} & 2.592 & \textbf{1.489} & 1.490 & \textbf{1.434} & 1.439 & \textbf{0.777} & 0.778\\ \hline
Fold \#5 & \textbf{2.217} & 2.238 & 0.977 & \textbf{0.976} & \textbf{1.234} & 1.245 & 0.536 & \textbf{0.535}\\ \hline
Total:  & \textbf{2.335} & 2.351 & \textbf{1.234} & 1.236 & \textbf{1.296} & 1.306 & \textbf{0.649} & 0.650\\ \hline
\end{tabular}
          \end{table}
         
\begin{table}[H]
    \footnotesize\centering
    \caption{CRPS/RMSE for filtering/forecasting across the five folds for $M=7^2$. Smallest value is marked in bold.}
    \label{tab:cross_val_7}
    \begin{tabular}{cV{3}c|c|c|cV{3}c|c|c|c}
    & \multicolumn{4}{cV{3}}{RMSE} & \multicolumn{4}{c}{CRPS} \\ \hline 
    & \multicolumn{2}{c|}{Forecasting} & \multicolumn{2}{cV{3}}{Filtering} & \multicolumn{2}{c|}{Forecasting} & \multicolumn{2}{c}{Filtering} \\ \hline
    & Full & Simple & Full & Simple & Full & Simple & Full & Simple \\ \hlineB{3}
    
Fold \#1 & 2.254 & \textbf{2.251} & 1.053 & \textbf{1.053} & 1.248 & \textbf{1.245} & \textbf{0.547} & 0.547\\ \hline
Fold \#2 & \textbf{2.311} & 2.312 & \textbf{1.164} & 1.164 & \textbf{1.282} & 1.283 & \textbf{0.622} & 0.622\\ \hline
Fold \#3 & 2.273 & \textbf{2.266} & \textbf{1.300} & 1.301 & 1.271 & \textbf{1.266} & 0.688 & \textbf{0.688}\\ \hline
Fold \#4 & \textbf{2.624} & 2.626 & \textbf{1.546} & 1.548 & \textbf{1.457} & 1.459 & \textbf{0.797} & 0.798\\ \hline
Fold \#5 & \textbf{2.221} & 2.223 & \textbf{0.947} & 0.947 & 1.238 & \textbf{1.238} & \textbf{0.516} & 0.516\\ \hline
Total:  & 2.341 & \textbf{2.340} & \textbf{1.220} & 1.221 & 1.299 & \textbf{1.298} & \textbf{0.634} & 0.634\\ \hline
\end{tabular}
          \end{table}
         
\begin{table}[H]
    \footnotesize\centering
    \caption{CRPS/RMSE for filtering/forecasting across the five folds for $M=8^2$. Smallest value is marked in bold.}
    \label{tab:cross_val_8}
    \begin{tabular}{cV{3}c|c|c|cV{3}c|c|c|c}
    & \multicolumn{4}{cV{3}}{RMSE} & \multicolumn{4}{c}{CRPS} \\ \hline 
    & \multicolumn{2}{c|}{Forecasting} & \multicolumn{2}{cV{3}}{Filtering} & \multicolumn{2}{c|}{Forecasting} & \multicolumn{2}{c}{Filtering} \\ \hline
    & Full & Simple & Full & Simple & Full & Simple & Full & Simple \\ \hlineB{3}
    
Fold \#1 & 2.264 & \textbf{2.255} & 1.052 & \textbf{1.052} & 1.252 & \textbf{1.246} & 0.550 & \textbf{0.550}\\ \hline
Fold \#2 & 2.304 & \textbf{2.295} & 1.156 & \textbf{1.155} & 1.277 & \textbf{1.271} & 0.616 & \textbf{0.616}\\ \hline
Fold \#3 & 2.291 & \textbf{2.280} & 1.381 & \textbf{1.377} & 1.280 & \textbf{1.274} & 0.724 & \textbf{0.723}\\ \hline
Fold \#4 & 2.681 & \textbf{2.674} & \textbf{1.627} & 1.627 & 1.487 & \textbf{1.484} & \textbf{0.834} & 0.834\\ \hline
Fold \#5 & 2.204 & \textbf{2.201} & 0.916 & \textbf{0.915} & 1.227 & \textbf{1.225} & 0.504 & \textbf{0.504}\\ \hline
Total:  & 2.355 & \textbf{2.347} & 1.252 & \textbf{1.251} & 1.305 & \textbf{1.300} & 0.646 & \textbf{0.645}\\ \hline
\end{tabular}
          \end{table}
         
\begin{table}[H]
    \footnotesize\centering
    \caption{CRPS/RMSE for filtering/forecasting across the five folds for $M=9^2$. Smallest value is marked in bold.}
    \label{tab:cross_val_9}
    \begin{tabular}{cV{3}c|c|c|cV{3}c|c|c|c}
    & \multicolumn{4}{cV{3}}{RMSE} & \multicolumn{4}{c}{CRPS} \\ \hline 
    & \multicolumn{2}{c|}{Forecasting} & \multicolumn{2}{cV{3}}{Filtering} & \multicolumn{2}{c|}{Forecasting} & \multicolumn{2}{c}{Filtering} \\ \hline
    & Full & Simple & Full & Simple & Full & Simple & Full & Simple \\ \hlineB{3}
    
Fold \#1 & 2.285 & \textbf{2.261} & 1.063 & \textbf{1.061} & 1.263 & \textbf{1.249} & 0.553 & \textbf{0.552}\\ \hline
Fold \#2 & 2.308 & \textbf{2.287} & 1.141 & \textbf{1.140} & 1.279 & \textbf{1.266} & 0.606 & \textbf{0.606}\\ \hline
Fold \#3 & 2.273 & \textbf{2.266} & 1.338 & \textbf{1.337} & 1.271 & \textbf{1.267} & 0.707 & \textbf{0.706}\\ \hline
Fold \#4 & 2.765 & \textbf{2.724} & 1.746 & \textbf{1.733} & 1.539 & \textbf{1.513} & 0.880 & \textbf{0.874}\\ \hline
Fold \#5 & 2.192 & \textbf{2.184} & 0.904 & \textbf{0.903} & 1.222 & \textbf{1.216} & 0.498 & \textbf{0.497}\\ \hline
Total:  & 2.373 & \textbf{2.352} & 1.272 & \textbf{1.267} & 1.315 & \textbf{1.302} & 0.649 & \textbf{0.647}\\ \hline
\end{tabular}
          \end{table}
         
\begin{table}[H]
    \footnotesize\centering
    \caption{CRPS/RMSE for filtering/forecasting across the five folds for $M=10^2$. Smallest value is marked in bold.}
    \label{tab:cross_val_10}
    \begin{tabular}{cV{3}c|c|c|cV{3}c|c|c|c}
    & \multicolumn{4}{cV{3}}{RMSE} & \multicolumn{4}{c}{CRPS} \\ \hline 
    & \multicolumn{2}{c|}{Forecasting} & \multicolumn{2}{cV{3}}{Filtering} & \multicolumn{2}{c|}{Forecasting} & \multicolumn{2}{c}{Filtering} \\ \hline
    & Full & Simple & Full & Simple & Full & Simple & Full & Simple \\ \hlineB{3}
    
Fold \#1 & 2.356 & \textbf{2.286} & 1.099 & \textbf{1.096} & 1.308 & \textbf{1.264} & 0.569 & \textbf{0.567}\\ \hline
Fold \#2 & 2.344 & \textbf{2.297} & 1.158 & \textbf{1.153} & 1.300 & \textbf{1.270} & 0.610 & \textbf{0.607}\\ \hline
Fold \#3 & 2.274 & \textbf{2.274} & \textbf{1.357} & 1.357 & 1.273 & \textbf{1.273} & \textbf{0.713} & 0.713\\ \hline
Fold \#4 & 3.138 & \textbf{2.808} & 2.195 & \textbf{1.863} & 1.748 & \textbf{1.562} & 1.089 & \textbf{0.942}\\ \hline
Fold \#5 & 2.312 & \textbf{2.234} & 1.082 & \textbf{1.027} & 1.289 & \textbf{1.242} & 0.566 & \textbf{0.547}\\ \hline
Total:  & 2.506 & \textbf{2.389} & 1.441 & \textbf{1.334} & 1.384 & \textbf{1.322} & 0.709 & \textbf{0.675}\\ \hline
\end{tabular}
          \end{table}
         
\begin{table}[H]
    \footnotesize\centering
    \caption{CRPS/RMSE for filtering/forecasting across the five folds for $M=11^2$. Smallest value is marked in bold.}
    \label{tab:cross_val_11}
    \begin{tabular}{cV{3}c|c|c|cV{3}c|c|c|c}
    & \multicolumn{4}{cV{3}}{RMSE} & \multicolumn{4}{c}{CRPS} \\ \hline 
    & \multicolumn{2}{c|}{Forecasting} & \multicolumn{2}{cV{3}}{Filtering} & \multicolumn{2}{c|}{Forecasting} & \multicolumn{2}{c}{Filtering} \\ \hline
    & Full & Simple & Full & Simple & Full & Simple & Full & Simple \\ \hlineB{3}
    
Fold \#1 & 2.324 & \textbf{2.276} & 1.092 & \textbf{1.090} & 1.288 & \textbf{1.259} & 0.565 & \textbf{0.564}\\ \hline
Fold \#2 & 2.326 & \textbf{2.292} & \textbf{1.169} & 1.169 & 1.288 & \textbf{1.267} & 0.616 & \textbf{0.616}\\ \hline
Fold \#3 & 2.289 & \textbf{2.273} & \textbf{1.330} & 1.330 & 1.280 & \textbf{1.271} & 0.700 & \textbf{0.700}\\ \hline
Fold \#4 & 2.759 & \textbf{2.662} & 1.710 & \textbf{1.655} & 1.537 & \textbf{1.478} & 0.862 & \textbf{0.839}\\ \hline
Fold \#5 & 2.226 & \textbf{2.195} & 0.972 & \textbf{0.966} & 1.239 & \textbf{1.220} & 0.529 & \textbf{0.526}\\ \hline
Total:  & 2.392 & \textbf{2.345} & 1.280 & \textbf{1.265} & 1.327 & \textbf{1.299} & 0.654 & \textbf{0.649}\\ \hline
\end{tabular}
          \end{table}
         
\begin{table}[H]
    \footnotesize\centering
    \caption{CRPS/RMSE for filtering/forecasting across the five folds for $M=12^2$. Smallest value is marked in bold.}
    \label{tab:cross_val_12}
    \begin{tabular}{cV{3}c|c|c|cV{3}c|c|c|c}
    & \multicolumn{4}{cV{3}}{RMSE} & \multicolumn{4}{c}{CRPS} \\ \hline 
    & \multicolumn{2}{c|}{Forecasting} & \multicolumn{2}{cV{3}}{Filtering} & \multicolumn{2}{c|}{Forecasting} & \multicolumn{2}{c}{Filtering} \\ \hline
    & Full & Simple & Full & Simple & Full & Simple & Full & Simple \\ \hlineB{3}
    
Fold \#1 & 2.350 & \textbf{2.285} & 1.118 & \textbf{1.105} & 1.305 & \textbf{1.266} & 0.588 & \textbf{0.580}\\ \hline
Fold \#2 & 2.343 & \textbf{2.295} & 1.190 & \textbf{1.187} & 1.299 & \textbf{1.268} & 0.632 & \textbf{0.630}\\ \hline
Fold \#3 & 2.279 & \textbf{2.258} & 1.296 & \textbf{1.295} & 1.273 & \textbf{1.261} & 0.688 & \textbf{0.688}\\ \hline
Fold \#4 & 2.746 & \textbf{2.675} & 1.729 & \textbf{1.694} & 1.528 & \textbf{1.484} & 0.869 & \textbf{0.854}\\ \hline
Fold \#5 & 2.233 & \textbf{2.203} & 0.978 & \textbf{0.975} & 1.242 & \textbf{1.224} & 0.537 & \textbf{0.535}\\ \hline
Total:  & 2.397 & \textbf{2.349} & 1.288 & \textbf{1.275} & 1.330 & \textbf{1.301} & 0.663 & \textbf{0.657}\\ \hline
\end{tabular}
          \end{table}
         
\begin{table}[H]
    \footnotesize\centering
    \caption{CRPS/RMSE for filtering/forecasting across the five folds for $M=16^2$. Smallest value is marked in bold.}
    \label{tab:cross_val_16}
    \begin{tabular}{cV{3}c|c|c|cV{3}c|c|c|c}
    & \multicolumn{4}{cV{3}}{RMSE} & \multicolumn{4}{c}{CRPS} \\ \hline 
    & \multicolumn{2}{c|}{Forecasting} & \multicolumn{2}{cV{3}}{Filtering} & \multicolumn{2}{c|}{Forecasting} & \multicolumn{2}{c}{Filtering} \\ \hline
    & Full & Simple & Full & Simple & Full & Simple & Full & Simple \\ \hlineB{3}
    
Fold \#1 & 2.370 & \textbf{2.308} & 1.219 & \textbf{1.184} & 1.316 & \textbf{1.279} & 0.625 & \textbf{0.607}\\ \hline
Fold \#2 & 2.834 & \textbf{2.319} & 1.973 & \textbf{1.259} & 1.589 & \textbf{1.284} & 1.040 & \textbf{0.662}\\ \hline
Fold \#3 & \textbf{3.747} & 4.094 & \textbf{3.211} & 3.631 & \textbf{2.025} & 2.160 & \textbf{1.544} & 1.707\\ \hline
Fold \#4 & 2.685 & \textbf{2.616} & 1.643 & \textbf{1.599} & 1.493 & \textbf{1.449} & 0.831 & \textbf{0.810}\\ \hline
Fold \#5 & 2.490 & \textbf{2.215} & 1.279 & \textbf{0.946} & 1.404 & \textbf{1.235} & 0.683 & \textbf{0.517}\\ \hline
Total:  & 2.867 & \textbf{2.800} & 2.001 & \textbf{1.981} & 1.565 & \textbf{1.481} & 0.945 & \textbf{0.861}\\ \hline
\end{tabular}
          \end{table}


\end{document}